\newcommand{\longonly}[1]{#1}

\newcommand{\shortonly}[1]{}

\shortonly{\documentclass[sigconf]{acmart}}
\longonly{\documentclass[abstract=on]{scrartcl}}

\longonly{\usepackage{amsmath,amsfonts,amsthm,amssymb}\usepackage{authblk}}%

\usepackage[linesnumbered,vlined,boxed,fillcomment]{algorithm2e}
\SetKw{In}{ in }
\SetKw{And}{ and }
\SetKw{All}{ all }
\SetKw{Or}{ or }
\SetKw{Break}{break}
\SetKw{Continue}{continue}
\SetKw{KwFrom}{from }
\makeatletter
\let\oldnl\nl% Store \nl in \oldnl
\newcommand{\nonl}{\renewcommand{\nl}{\let\nl\oldnl}}% Remove line number for one line
\makeatother

\usepackage{enumerate}%
\usepackage{tikz}%
\usetikzlibrary{calc,positioning}
\usetikzlibrary{decorations.markings}%

\usepackage{xcolor}

\usepackage{hyperref}%
\hypersetup{% gestion des liens
  colorlinks=true,% colorise les liens
  breaklinks=true,% permet le retour a la ligne dans les liens trop longs
  urlcolor= blue,% couleur des hyperliens
  linkcolor= blue,% couleur des liens internes
  bookmarksopen=true,% si les signets Acrobat sont crees, les afficher completement.
  pdftitle={Slope Factorisation},% informations apparaissant dans
  pdfauthor={Adrien Poteaux},% dans les informations du document
}

\longonly{\usepackage[numbers,sort&compress]{natbib}}% pour citer [1-5] a la place de [1,2,3,4,5] par exemple, mais aussi avoir les citations "dans l'ordre"

\newcommand{\hbb}[1]{\ensuremath{\mathbb{#1}}}% anneaux / corps
\newcommand{\Ai}{\hbb{A}}

\newcommand{\Fi}{\hbb{F}}
\newcommand{\Ki}{\hbb{K}}
\newcommand{\Li}{\hbb{L}}
\newcommand{\Ni}{\hbb{N}}
\newcommand{\Qi}{\hbb{Q}}

\newcommand{\Zi}{\hbb{Z}}

\newcommand{\hcal}[1]{\ensuremath{\mathcal{#1}}}

\newcommand{\Nc}{\hcal{N}}
\newcommand{\Oc}{\hcal{O}}
\newcommand{\Rc}{\hcal{R}}
\newcommand{\Pc}{\hcal{P}}

\newcommand{\hfrak}[1]{\ensuremath{\mathfrak{#1}}}
\newcommand{\h}{\hfrak{h}}%
\newcommand{\uk}{\hfrak{u}}%
\newcommand{\vf}{\delta}

\newcommand{\hbf}[1]{\ensuremath{\mathbf{#1}}}

\newcommand{\tb}{\hbf{t}}

\newtheorem{thm}{Theorem}
\newtheorem{prop}{Proposition}
\newtheorem{cor}{Corollary}
\newtheorem{lem}{Lemma}
\newtheorem{dfn}{Definition}

\newtheorem{rmk}{Remark}
\newtheorem{hyp}{Assumption}

\newcommand{\onestep}{\textnormal{\texttt{HenselStep}}}% lemme de Hensel : val > N => val > 2*N
\newcommand{\quorem}{\textnormal{\texttt{QuoRem}}}% euclidean division
\newcommand{\fastirred}{\textnormal{\texttt{FastIrreducible}}}% irreducible test en delta
\newcommand{\slopefacto}{\textnormal{\texttt{SlopeFacto}}}
\newcommand{\divtype}{\textnormal{\texttt{DividingType}}}
\newcommand{\fastfacto}{\textnormal{\texttt{Factorisation}}}

\newcommand{\assign}{{\;\leftarrow{}\;}}% 

\renewcommand{\O}{\textrm{\Oc}}% O
\newcommand{\Ot}{\O\tilde\,\,}% soft-O
\newcommand{\Oe}{\O_\eps}% soft-O
\shortonly{}% factorisation univariee
\longonly{}% factorisation univariee
\newcommand{\F}{\textup{\textsf{F}}}%
\newcommand{\dx}{{d}}% degre en Y de F
\renewcommand{\char}{\textrm{char}}% caracteristique du corps

\newcommand{\res}[1][x]{\text{Res}_{#1}}% resultant, en Y par defaut
\newcommand\disc[1][x]{{\text{Disc}_{#1}}} % je pense qu'il ne faut pas utiliser \Delta pour ne pas s'embrouiller avec les aretes du polygone de Newton
\newcommand{\val}{\ensuremath{v}}% valuations augmentees
\newcommand{\vx}{\ensuremath{\upsilon}}% valuation dans \Ai[x]
\newcommand{\wx}{\ensuremath{w}}
\newcommand{\ordre}[1][t]{\ensuremath{\mbox{ord}_{#1}}}% ordre
\newcommand{\vF}[1][]{{\vf{}_{#1}}}% valuation du discriminant de F et F_Y
\newcommand{\Card}{\textrm{Card}}

\newcommand{\eps}{\epsilon}% pour changer facilement \epsilon / \varepsilon
\newcommand{\lc}[2]{\mbox{\rmfamily lc}_{#1}(#2)}% leading coefficient
\newcommand{\valtronc}[3][]{\lceil #2 \rceil_{#1}^{#3}}% troncation associe a une valuation

\newcommand{\Gt}{\tilde{G}}%sortie Hensel
\newcommand{\Ht}{\tilde{H}}%sortie Hensel
\newcommand{\Ut}{\tilde{U}}
\newcommand{\Vt}{\tilde{V}}
\newcommand{\Rt}{\tilde{R}}% modified residual operator
\newcommand{\At}{\tilde{A}}%sortie Hensel
\begin{document}

\title{Local polynomial factorisation: improving the Montes algorithm}

\longonly{
  %\author{%
  % \alignauthor
  \author[1]{Adrien Poteaux}% \orcid{0000-0002-7493-3001}
  \author[2]{Martin Weimann}
  \affil[1]{Univ. Lille, CNRS, Centrale Lille, UMR 9189 CRIStAL, F-59000 Lille, France}%
  % \city{Lille}
  % \country{France}
  \affil[2]{LMNO, Universit\'e de Caen-Normandie}
%  \email[1]{adrien.poteaux@univ-lille.fr}
%  \email[2]{martin.weimann@unicaen.fr}
  \date{}
}

\shortonly{
% Adrien
  \author{Adrien Poteaux}
  \orcid{0000-0002-7493-3001}
  \affiliation{%
    \institution{Univ. Lille, CNRS, Inria, Centrale Lille, UMR 9189 CRIStAL}
    \postcode{F-59000}
    \city{Lille}
    \country{France}
  }
  \email{adrien.poteaux@univ-lille.fr}
% Martin
  \author{Martin Weimann}
  % orcid{}
  \affiliation{LMNO, Universit\'e de Caen-Normandie}
  \email{martin.weimann@unicaen.fr}
}

\shortonly{%
\keywords{Computer Algebra, Polynomial factorisation, Arithmetic
  complexity, Discrete Valuation Rings, Approximate Roots, Montes
  algorithm, Generalised Newton polygons}
}

\longonly{\newenvironment{CCSXML}{}}

\begin{CCSXML}
  \shortonly{
    <ccs2012>
    <concept>
    <concept_id>10010147.10010148.10010149</concept_id>
    <concept_desc>Computing methodologies~Symbolic and algebraic algorithms</concept_desc>
    <concept_significance>500</concept_significance>
    </concept>
    </ccs2012>
  }
\end{CCSXML}
%}

\shortonly{
  \ccsdesc[500]{Computing methodologies~Symbolic and algebraic algorithms}
}

\maketitle

\begin{abstract}
  We improve significantly the Nart-Montes algorithm for factoring
  polynomials over a complete discrete valuation ring $\Ai$. Our first
  contribution is to extend the Hensel lemma in the context of
  generalised Newton polygons, from which we derive a new divide and
  conquer strategy. Also, if $\Ai$ has residual characteristic zero or
  high enough, we prove that approximate roots are convenient
  representatives of types, leading finally to an almost optimal
  complexity both for irreducibility and factorisation issues, plus
  the cost of factorisations above the residue field. For instance, to
  compute an OM-factorisation of $F\in\Ai[x]$, we improve the
  complexity results of \cite{BaNaSt13} by a factor $\vF$, the
  discriminant valuation of $F$.
  % For instance, if $F \in \Zi_p[x]$ is separable with degree $\dx$
  % and discriminant valuation $\vF$, we compute its OM-factorisation
  % with an expected $\Ot(\dx\,\vF)$ operations over $\Fi_p$, assuming
  % $p$ small, improving the bound $\Ot(\dx^2+\dx\,\vF^{2})$
  % of \cite{BaNaSt13}.
\end{abstract}

% \tableofcontents{}

\section{Introduction}
\label{sec:intro}
Let $\Ai$ be a complete discrete valuation ring with residue field
$\Fi$ and consider $F\in \Ai[x]$, monic and separable of degree
$\dx$. The aim of this paper is to improve complexity bounds for the
factorisation of $F$.  Such a polynomial factorisation is a
fundamental task of computer algebra with various applications in
number theory and algebraic geometry. As such, our complexity results
allow to fasten various computational problems, such as Okutsu frames,
integral basis or genus of plane curves (see Section
\ref{sec:applications} for further details).

Our work is based on the seminal Montes algorithm \cite{GuMoNa12}, for
which the best known complexity is given in \cite{BaNaSt13}. In
\cite{GuMoNa10}, the authors conclude their paper by:
\begin{quote}
Probably, an optimal local factorisation algorithm would consist in the
application of the Montes algorithm as a fast method to get an Okutsu
approximation to each irreducible factor, combined with an efficient ``Hensel
lift'' routine able to improve these initial approximations by doubling the
precision at each iteration. One may speculate that Newton polygons of
higher order might also be used to design a similar acceleration procedure.
\end{quote}
With S. Pauli, Guardia and Nart answered partially to this question
thanks to the \emph{single-factor lifting} algorithm \cite{GuNaPa12},
that can be viewed as a Newton-like method to lift a single factor
with a quadratic convergence. This led to the overall complexity
analysis of \cite{BaNaSt13}.

In this paper, we answer more precisely to this question, by showing
that the classical Hensel algorithm can be adapted to the context of
Newton polygon of higher order. We also provide a new divide and
conquer strategy using this adapted Hensel algorithm, enabling us to
lift all factors of $F$ at the same time, with a complexity almost
linear in the size of the output. These two elements allow us to gain
a factor $\dx$ in comparison to the complexity result of
\cite{BaNaSt13}. Moreover, following \cite{PoWe18}, we show that when
$\char(\Fi)\nmid\dx$, we can use \emph{approximate roots} as strongly
optimal representatives of a type\footnote{see Sections
  \ref{sec:montes} and \ref{sec:irred} for the definitions of these
  terms}. This induces an irreducibility test with a complexity almost
linear in $\delta$ the valuation of the discriminant of $F$ ; see
Theorem \ref{thm:fastirred}. This improvement propagates for
factorisation with a slightly greater assumption
\begin{hyp}\label{hyp:char>d}
  $\char(\Fi)=0$ or $\char(\Fi)>\dx$
\end{hyp}
leading a complexity almost linear in $\dx\,n$ for a required
precision $n\geq\vF$ ; see Theorem \ref{thm:fastfacto}.

\paragraph{Related work.} Classical implemented algorithms for
factoring polynomials over $\Qi_p$ (see
e.g. \cite{CaGo00,Pa01,FoPaRo10,Pa10}) are based on the Zassenhaus
Round Four algorithm, suffering from loss of precision in computing
characteristic polynomials. In \cite{GuNaPa12}, the authors introduced
a new technique as a combination of the Montes algorithm
\cite{GuMoNa11, GuMoNa12} which exploits the Newton polygons of higher
order (as initiated in \cite{Pa10}), and a Newton-like single factor
lifting. Further complexity improvements are obtained in
\cite{BaNaSt13}. The present work is in the same vein, with the
notable difference that we introduce a multi factor lifting, which is
used in course of the Montes algorithm whenever a non trivial
factorisation is discovered.

For rings of Laurent series $\Ki((t))$ of characteristic zero or high
enough, Newton-Puiseux like algorithms can be used. The best
complexity in this context is softly linear \cite{PoWe17}, as in the
present paper, but more difficult to implement and slower for
irreducibility issues. This led us to introduce in \cite{PoWe18}
approximate roots \emph{\`a la Abhyankhar} \cite{Ab90,Pp02} in order
to derive a faster and easy-to-implement irreducibility test, quite
close to the algorithm \cite{BaNaSt13} \emph{\`a la Montes}, although
not dealing with the small characteristic. This was a first step
towards the present work, where we use now approximate roots in the
factorisation context, as allowed by a systematic use of our
generalised Hensel lifting. Note that the divide conquer in the
present paper is quite different than the one of \cite[Section
4.4]{PoWe17}: in particular, the initialisation of the Hensel
algorithm does not use the not yet implemented generalisation of the
half gcd algorithm described in \cite{MoSc16}.

Finally, let us insist that following \cite{GuNaPa12}, our algorithm
computes as a byproduct an Okutsu frame of each irreducible factors of
$F$, containing the most significant arithmetic informations
\cite{GuMoNa10} and closely related to various computational problems
of number theory and algebraic geometry, such as the computation of
integral basis (see Section \ref{sec:applications} for further
details).

\paragraph{Organisation of the paper.} We start by a summary of
important definitions related to the Montes algorithm in Section
\ref{sec:montes}. Then, we focus on the irreducibility test when
$\char(\Fi)\nmid\dx$ in Section
\ref{sec:irred}, leading to Theorem \ref{thm:fastirred}. In Section
\ref{sec:hensel}, we show how to adapt the Hensel algorithm in the
context of Newton polygon of higher orders. Section
\ref{sec:divideandconquer} uses this latter algorithm on a well chosen
type in order to derive a divide and conquer algorithm, leading to
Theorem \ref{thm:fastfacto}. Finally, we discuss some direct
applications in Section \ref{sec:applications}.

\paragraph{Complexity model.} Polynomials in $\Ai[x]$ considered in
this paper are supposed given in a dense representation, with
coefficients available up to an arbitrary precision (e.g. represented
as tables, as we always use truncation bounds). We use the algebraic
RAM model of Kaltofen \cite[Section 2]{Ka88}, counting only the number
of arithmetic operations in the residue field $\Fi$. We classically
denote $\O()$ and $\Ot()$ to respectively hide constant and
logarithmic factors in our complexity results ; see e.g. \cite[Chapter
25, Section 7]{GaGe13}. We additionally let $\Oe(d)=\O(d^{1+\eps(d)})$
with $\eps(\dx)\to 0$. We have $\Ot(\dx)\subset \Oe(\dx)$, and freely
speak of \emph{almost linear} in $\dx$ for both notations. Fast
multiplication in $\Fi[y]$ is used, i.e. we multiply two polynomials
of degree at most $\dx$ within $\Ot(\dx)$ operations in $\Fi$
\cite[Section 8.3]{GaGe13}. We assume that univariate factorisation
over $\Fi$ is available.
% Except this task, our algorithms are deterministic.
Intermediate finite extensions $\Fi_k$ of degree $f_{k-1}$ of $\Fi$
will occur (see Section \ref{sec:montes}), naturally represented as a
quotient of $\Fi[y_0,\ldots,y_{k-1}]$ by a triangular prime ideal
$(P_0(y_0),\ldots,P_{k-1}(y_0,\ldots,y_{k-1}))$.

\begin{lem}\label{lem:costFk}
  An operation in $\Fi_k$ takes $\Oe(f_{k-1})$ operations over $\Fi$.
\end{lem}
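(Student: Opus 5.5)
The plan is to reduce arithmetic in the tower $\Fi_k=\Fi[y_0,\ldots,y_{k-1}]/(P_0,\ldots,P_{k-1})$ to multiplication modulo a triangular set, and to invoke the known almost-linear bounds for that task. Write $e_i=\deg_{y_i}P_i$. The tower has total degree $f_{k-1}=e_0e_1\cdots e_{k-1}$ over $\Fi$, and elements are stored as multivariate polynomials reduced with respect to the triangular set, i.e.\ with $\deg_{y_i}<e_i$. Such an element is a dense vector of $f_{k-1}$ coefficients in $\Fi$. Addition and subtraction are therefore coefficient-wise and cost $\O(f_{k-1})$.

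\textbf{Multiplication.} First compute the product of two reduced elements in $\Fi[y_0,\ldots,y_{k-1}]$. By Kronecker substitution this costs $\Ot(2^k f_{k-1})$, and the result must then be reduced modulo $(P_0,\ldots,P_{k-1})$. A naive recursive reduction, using Newton iteration for division by $P_{k-1}$ over $\Fi_{k-1}$, gives a cost of the form $C^k f_{k-1}$ up to logarithmic factors, for some constant $C$. When $k$ is small this is fine. In general, however, $k$ can be as large as $\log_2 f_{k-1}$, and the overhead $C^k$ is then only polynomial in $f_{k-1}$. This is exactly the difficulty, and it is why the statement uses $\Oe$ rather than $\Ot$.

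I would handle this by invoking the result of Lebreton, Mehrabi and Schost on multiplication modulo triangular sets (and its refinement by van der Hoeven and Lecerf). It gives multiplication modulo a triangular set in $\O(f_{k-1}^{1+\eps})$ for every $\eps>0$, that is, in $\Oe(f_{k-1})$, via a recursive approach and precomputed inverses. If one prefers a self-contained argument, an alternative is to trade off: when $k$ is small relative to $\log f_{k-1}$, the factor $C^k$ is $f_{k-1}^{o(1)}$. When $k$ is large, many $e_i$ equal $2$, and one can merge consecutive small levels into a single primitive-element level to reduce $k$. I expect this merging, and controlling its cost, to be the main obstacle; citing the known result sidesteps it.

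\textbf{Inversion.} Inversion is needed as well, since $\Fi_k$ is a field. Use fast extended gcd recursively: invert modulo $P_{k-1}$ over $\Fi_{k-1}$ by the half-gcd algorithm. This costs $\Ot(e_{k-1})$ operations in $\Fi_{k-1}$, with inversions in $\Fi_{k-1}$ needed only for the leading coefficients that appear. Unrolling the recursion again gives a multiplicative overhead per level. The same triangular-set machinery (the Dahan--Schost style results on inversion modulo triangular sets) bounds this by $\Oe(f_{k-1})$. No splitting is needed because the ideal is prime, so every nonzero element is invertible and the gcd computations never encounter zero divisors. Combining the bounds for addition, multiplication and inversion gives the lemma.
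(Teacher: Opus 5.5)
Your diagnosis of the difficulty is right. Level-by-level arithmetic modulo the triangular set costs $C^k f_{k-1}$ up to logarithms. That is $f_{k-1}^{1+o(1)}$ only when $k=o(\log f_{k-1})$, and the statement must also cover towers of height $k\approx\log_2 f_{k-1}$. Linear-cost addition and the absence of zero divisors are also fine.

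The gap is in how you remove the $C^k$ factor. The result that does this, for both products and inverses in towers of arbitrary height, is van der Hoeven--Lecerf's accelerated tower arithmetic, \cite[Theorem 4]{HoLe19}, and it is not unconditional. It relies on primitive-element changes of representation, and the paper applies it only under $\Card(\Fi)\ge\binom{d}{2}$.

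The other bounds you lean on do not replace it. Multiplication modulo triangular sets in the style of Li--Moreno Maza--Schost or Lebreton \cite{Le15}, and D5-type quasi-inverse bounds (Dahan--Moreno Maza--Schost--Xie), all carry a factor exponential in the number of levels, which is exactly the $C^k$ you want to avoid. I know of no Lebreton--Mehrabi--Schost theorem giving $\O(f_{k-1}^{1+\eps})$ for arbitrary triangular sets. Your fallback of merging small levels via primitive elements hits the same obstruction, since choosing a primitive element as a linear combination of generators needs enough elements in the ground field. As written, your argument does not cover a small residue field, e.g.\ $\Fi=\Fi_p$ with $p$ small compared to $d$, which is the paper's main case of interest.

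The missing idea is the bootstrapping the paper uses, following \cite{PoWe18}. If $\Card(\Fi)\ge\binom{d}{2}$, apply \cite[Theorem 4]{HoLe19} directly. Otherwise, discard trivial levels ($\ell_j=1$) and take $i$ minimal such that $\Fi_i$ has at least $\binom{d}{2}$ elements. Since $\Card(\Fi_i)=\Card(\Fi)^{f_{i-1}}$ and each remaining level at least doubles $f$, one gets $i\in\O(\log\log d)$. For these bottom levels the naive recursive arithmetic is affordable, because $C^{\O(\log\log d)}$ is polylogarithmic; an operation in $\Fi_i$ then costs $\Ot(f_{i-1})$ (the paper cites \cite[Proposition 2]{Le15}). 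Now $\Fi_i$ is a large enough ground field. Applying \cite[Theorem 4]{HoLe19} to the tower from $\Fi_i$ up to $\Fi_k$, of degree $f_{k-1}/f_{i-1}\le d$, gives $\Oe(f_{k-1}/f_{i-1})$ operations in $\Fi_i$, hence $\Oe(f_{k-1})$ operations in $\Fi$.

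Your own remark that $C^k$ is harmless for small $k$ is exactly what makes the bottom part work. What is missing is using it to build a large enough ground field for the accelerated algorithm, rather than trying to handle the whole tower with it.
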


\begin{proof}
  If $\Card(\Fi)\ge \binom d 2$, use \cite[Theorem 4]{HoLe19}. If
  $\Card(\Fi)<\binom d 2$, proceed as in \cite[proof of Theorem
  1]{PoWe18} (roughly speaking, keep the first $i$ levels of the
  triangular set so that $\Card(\Fi)\,f_{i-1}\geq \binom d 2$ with
  $i$ minimal, and apply \cite[Theorem 4]{HoLe19} over $\Fi_i$ ; as
  $i\in\O(\log\log\dx)$, an operation in $\Fi_i$ is $\Ot(f_{i-1})$
  via \cite[Proposition 2]{Le15}).
\end{proof}

\begin{rmk}
  Since some subroutines use the triangular representation of $\Fi_k$
  (Remark \ref{rem:FieldTower}), introducing a randomised Las Vegas
  subroutine to fasten the arithmetic in $\Fi_k$ via a primitive
  representation would not a priori be sufficient to express our
  complexity results in $\Ot()$ instead of $\Oe()$, in contrast to
  \cite{PoWe17}.
\end{rmk}

%%% Local Variables:
%%% mode: latex
%%% TeX-master: "issac"
%%% End:

\section{Types and factorisation}
\label{sec:montes}
Let $\Li$ be a complete discrete valuation field,
$\val:\Li \rightarrow \Zi\cup\{+\infty\}$ any normalised and
surjective valuation on $\Li$ and $\pi$ an uniformiser. We denote by
$\Ai\subset \Li$ the ring of integers of $(\Li,\val)$ and by
$\Fi=\Ai/(\pi)$ the residue field of $\vx$. The two fields we have in
mind in this paper are $\Li=\Qi_p$ the field of $p$-adic numbers
% (with $\val=\val_p$, $\pi=p$, $\Ai=\Zi_p$ and $\Fi=\Fi_p$)
and $\Li=\Ki((t))$ the field of Laurent series over any field $\Ki$%
% (with $\val=\val_t$, $\pi=t$, $\Ai=\Ki[[t]]$ and %$\Fi=\Ki$)
.

\subsection{Types}
We start with types of order $0$, denoting the residue field
$\Fi_0:=\Fi$.
\begin{dfn}
  A type of order $0$ is $\tb_0=[P_0]$, where $P_0\in \Fi_0[y]$ is a
  monic irreducible polynomial.
\end{dfn}
For any $G\in\Li[x]$, a type of order $0$ comes together with the
Gauss valuation $\val_0(\sum_i a_i\,x^i):= \min_i(\vx(a_i))$ and the
residual polynomial operator $R_0(G):= G(y)/\pi^{\val_0(G)}\mod \pi$.

Types
$\tb_k=[P_0,(\phi_1,\lambda_1,P_1),\ldots,(\phi_k,\lambda_k,P_k)]$ of
order $k\ge 1$ are defined inductively below. If $1\le i \le k-1$, we
denote
$\tb_i=[P_0,(\phi_1,\lambda_1,P_1),\ldots,(\phi_i,\lambda_i,P_i)]$.
For any field $K$ and $P, Q\in K[y]$, we write $P(y) \sim{} Q(y)$ if
there exists $c\in K^\times$ such that $P(y)=c\,Q(y)$.  Also, we
denote $\Pc$ the semigroup of polygons (i.e. the set of all open
convex polygons of the plane, attached to finite formal sums of
sides) and $\Pc^-\subset \Pc$ the semi-group of polygons with negative
slopes (principal polygons). See \cite[Section 1.1]{GuMoNa12} for details.

Assume that types
of order $k-1$ have been defined and that we can attach to a type of
order $k-1$ a valuation $\val_{k-1}:\Li[x]\to \Zi$, a field extension
$\Fi_{k-1}$ of $\Fi$ and a residual polynomial operator
$R_{k-1}:\Li[x]\to \Fi_{k-1}[y]$. 
\begin{dfn}
  Let $k\geq 1$.
  $\tb_k=[P_0,(\phi_1,\lambda_1,P_1),\ldots,(\phi_k,\lambda_k,P_k)]$
  is a type of order $k$ if $\tb_{k-1}$ is a type of order $k-1$ and
  \begin{itemize}
  \item $\phi_k\in \Ai[x]$ is monic, irreducible and satisfies
    $R_{k-1}(\phi_k)\sim P_{k-1}$.
  \item $\lambda_k=-m_k/q_k\in \Qi^-$, with $(q_k,m_k)\in\Ni^2$
    coprime. We denote $(\alpha_k,\beta_k)$
    s.t. $\alpha_k\,q_k-\beta_k\,m_k=1$ with $0\le \beta_k < q_k$.
  \item $P_k\ne y \in \Fi_k[y]$ is monic, irreducible over
    $\Fi_k:=\Fi_{k-1}[y]/(P_{k-1})$. We let $\ell_k:=\deg(P_k)$ and
    $z_k:=y\mod P_k(y)\in\Fi_{k+1}$.
  \end{itemize}
  We will denote $e_k:=q_1\dots q_k$ and
  $f_k:=\ell_0\dots \ell_k=[\Fi_{k+1}:\Fi]$\footnote{A reader used to
    the work of Nart et al should pay attention that the notations
    $e_k$ and $f_k$ in \cite{GuMoNa12} are here denoted $q_k$ and
    $\ell_k$. We rather use $e_k$ and $f_k$ for the ramification index
    and residual degree discovered so far, following \cite{PoWe17}.}.
\end{dfn}

\subsection{Associated operators and representatives}
We fix a type
$\tb=[P_0,(\phi_1,\lambda_1,P_1),\ldots,(\phi_k,\lambda_k,P_k)]$ of
order $k\ge 1$ and detail several operators associated to it. If
$G\in \Li[x]$, we denote $G=\sum a_i' \phi_{k-1}^i$ and
$G=\sum a_i \phi_k^i$ their $\phi_{k-1}$ and $\phi_k$-adic
expansion\footnote{If $k=1$, we let $\phi_0=x$, $q_0=1$ and $m_0=0$,
  so that $\val_1=\val_0$.}.
\paragraph{Augmented valuation.}
$\val_k:\Li[x]\to \Zi$ is defined from $\val_{k-1}$ as
\begin{equation}\label{eq:defvk}
  \val_k(G):=\min_{i} (q_{k-1}\val_{k-1}(a_i' \phi_{k-1}^i)+m_{k-1} i).
\end{equation}
This is indeed an ``augmented valuation'' as introduced by Mac Lane
\cite{Ma36a,Ma36b} (see e.g. \cite[page 379]{GuMoNa12} for a detailed
explanation). Notice in particular that
$\val_k(G)=\min_{i} \val_k(a_i' \phi_{k-1}^i)$.

\paragraph{Newton polygon of higher order.} The \emph{polygon
  operator} $\Nc_k :\Li[x]\to \Pc$ associates to $G$ the lower convex
hull of $\{(i,\val_k(a_i\phi_k^i),\,\, a_i\ne 0\}$. We let
$\Nc_k^-(G)\in \Pc^-$ stands for the principal part of $\Nc_k(G)$.

\paragraph{Residual polynomial operator.} We need several intermediate
operators. We let
$S_k(G):=\{(i,j)\in \Nc_k(G),\, m_k i + q_k j \,\,
\textrm{is\,\,minimal}\}$,
$I_k(G):=\{i\in\Ni,\, \val_k(a_i\phi_k^i))\in S_k(G)\}$ and
$i_k(G):=\min(I_k(G))$ (letting $i_0(G)=0$ additionally). We let
$\tau_{k,i}:= \frac{i_{k-1}(a_i)+\beta_{k-1}
  \val_k(a_i\phi_k^i)}{q_{k-1}}\in\Zi$ (see \cite{GuMoNa12}, after
Definition 2.19). Then $R_k:\Li[x]\to \Fi_k[y]$ is defined
inductively as
\[
  R_k(G)=\sum_{i\in I_k(G)}
  z_{k-1}^{\tau_{k,i}}R_{k-1}(a_i\phi_k^i)(z_{k-1})\,
  y^{\frac{i-i_k(G)}{q_k}}.
\]

\begin{rmk}
  Let us summarise the dependencies of these operators: $\val_k$
  depends only of $\val_{k-1}, \phi_{k-1}$ and $\lambda_{k-1}$, thus
  only of $\tb_{k-1}$. $\Nc_k$ depends of $\val_k$ and
  $\phi_k$. Finally, $R_k$ depends of $R_{k-1}$, $\phi_k$ and
  $\lambda_k$.
\end{rmk}

\paragraph{Representative of $\tb_k$.} They are defined as follows.
\begin{dfn}\label{dfn:type} Let $G\in \Ai[x]$ be monic. We say that
  $G$ is of type $\tb$ if for $0\le i\le k$, $\Nc_i(G)$ is one-sided
  of slope $\lambda_i$, and for $1\le i\le k$ $R_i(G)\sim P_i^{N_i}$
  for some $N_i\in \Ni^\times$. We denote by $G_\tb$ the product of
  all monic irreducible factors of $G$ of type $\tb$. We say that
  $\tb$ divides $G$ is $\deg(G_\tb)>1$. Finally, $G$ is said to be a
  representative of $\tb$ if additionally
  $\ordre[\tb]{(G)}:=\ordre[P_k]{R_k(G)}=1$.
\end{dfn}

\subsection{Factorisation according to a type}
The following theorem summarises the main results that led to the Montes
algorithm:
\begin{thm}\label{thm:polygon-residue} Let $k\ge 1$ and $\tb$ be a
  type of order $k-1$ together with a representative $\phi_k$. Denote
  $\val_k$ the associated augmented valuation and assume $F\in \Ai[x]$
  monic.
  \begin{enumerate}
  \item (Theorem of the polygon, \cite[Theorem 3.1]{GuMoNa12}) Suppose
    that $\Nc_k^-(F)=S_1+\cdots +S_g$ where the polygons
    $S_1,\ldots,S_g$ are one-sided of distinct slopes
    $\lambda_{k,1},\ldots,\lambda_{k,g}$. Denote by $R_{k,i}$ the
    residual polynomial operator associated to $\val_k$, $\phi_k$ and the
    slope $\lambda_{k,i}$. The polynomial $F_\tb$ admits a
    factorisation
    \[
      F_\tb=F_{\tb,1}\cdots F_{\tb,g} \in \Ai[x]
    \]
    where $\Nc_k^-(F_{\tb,i})=S_i$ up to translation and
    $R_{k,i}(F_{\tb,i})\sim R_{k,i}(F)$.
  \item (Theorem of the residual polynomial, \cite[Theorem 3.7]{GuMoNa12})
    Let $R_{k,i}(F)\sim P_{k,i,1}^{a_1} \cdots P_{k,i,r}^{a_r}$ where
    the $P_{k,i,j}$ are pairwise coprime irreducible polynomials. Then $F_{\tb,i}$ admits a
    factorisation
    \[
      F_{\tb,i}=F_{\tb,i,1}\cdots F_{\tb,i,r_i} \in \Ai[x]
    \]
    where $\Nc_k(F_{\tb,i,j})$ is straight of slope $\lambda_{k,i}$
    and $R_{k,i}(F_{\tb,i,j})\sim P_{k,i,j}^{a_j}$.
  \item (Irreducibility criterion) If $a_j=1$, then $F_{\tb,i,j}$ is
    irreducible.
  \end{enumerate}
\end{thm}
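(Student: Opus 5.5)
Since items 1 and 2 are quoted from \cite[Theorems 3.1 and 3.7]{GuMoNa12}, I would not reprove them from scratch. Instead I would recall the two structural facts about the augmented valuation $\val_k$ on which both rest, and then derive the three items in order.

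\textbf{Key facts.} The first fact is multiplicativity. For $G,H\in\Ai[x]$ we have $\val_k(GH)=\val_k(G)+\val_k(H)$, since $\val_k$ is a Mac Lane augmented valuation. As a consequence, $\Nc_k^-(GH)=\Nc_k^-(G)+\Nc_k^-(H)$ in the semigroup $\Pc^-$. Moreover, for each slope $\lambda$ one has $R_{\lambda}(GH)\sim R_{\lambda}(G)\,R_{\lambda}(H)$. I would prove this by induction on $k$: the graded algebra of $\val_k$ has no zero divisors, and $R_k$ reads off the initial form along the edge of slope $\lambda$.

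The second fact concerns a monic irreducible $G$ that is divisible by the type $\tb$. For such $G$, $\Nc_k^-(G)$ is one-sided, and $R_{\lambda}(G)$ is, up to a constant, a power of an irreducible polynomial. I would prove this by passing to a root $\theta$ of $G$ in an algebraic closure. Let $w$ be the unique extension of $\val$; then the slope is $-q_{k-1}\,w(\phi_k(\theta))$. The residual polynomial comes from the minimal polynomial over $\Fi_k$ of the class of a suitable unit built from $\phi_k(\theta)$ and the $\phi_i(\theta)$. Both quantities are shared by all conjugates of $\theta$, because the extension is unique.

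\textbf{Derivation.} I would write $F_\tb=\prod G_j$ as a product of monic irreducible factors of type $\tb$. By the first fact, $\Nc_k^-(F_\tb)$ is the sum of the one-sided polygons $\Nc_k^-(G_j)$, and it coincides with $\Nc_k^-(F)$. The reason is that the factors of $F$ not of type $\tb$ contribute only a horizontal translation; for them $R_{k-1}$ is prime to $P_{k-1}$, hence $\val_k$ is attained at the $\phi_k$-degree $0$ term. Grouping the $G_j$ by slope gives $F_{\tb,i}$, and the polygon statement follows. Multiplicativity of $R_{k,i}$ then gives $R_{k,i}(F_{\tb,i})\sim R_{k,i}(F)$. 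The same argument applies to item 2: I would group the $G_j$ of slope $\lambda_{k,i}$ according to which irreducible $P_{k,i,j}$ their residual polynomial is a power of. Item 3 is then immediate. If $a_j=1$ but $F_{\tb,i,j}$ were a product of at least two factors, each of them would contribute a positive power of $P_{k,i,j}$ to $R_{k,i}$, giving exponent at least $2$.

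\textbf{Main obstacle.} I expect the hardest step to be the multiplicativity of $R_\lambda$. It involves the twisting exponents $\tau_{k,i}$ and the normalisation by $i_k(G)$, and these must be checked to be additive. I would first try to identify $R_k(G)$ with the image of the $\val_k$-initial form in the degree-zero part of the graded algebra, after dividing by a suitable monomial unit, and then prove additivity through that identification. Failing this, I would simply cite \cite[Theorem 2.26]{GuMoNa12}.
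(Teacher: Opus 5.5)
Your architecture (theorem of the product, description of the irreducible factors of $F_\tb$, then grouping them by slope and by residual factor) is the one behind \cite[Theorems 3.1 and 3.7]{GuMoNa12}. The paper quotes these without proof. Your derivation of items 1--3 from your two facts is correct, apart from two small slips. First, an irreducible factor not of type $\tb$ has $\Nc_k^-$ reduced to the point $(0,\val_k(\cdot))$, so it translates $\Nc_k^-(F)$ vertically, not horizontally. Second, item 2 claims the whole polygon $\Nc_k(F_{\tb,i,j})$ is straight, which needs $\Nc_k(G)=\Nc_k^-(G)$ for $G$ of type $\tb$ (Lemma~\ref{lem:leadcoeff}).

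\textbf{The gap is your second fact, and you have misplaced the difficulty.} Multiplicativity of $R_\lambda$ is \cite[Theorem 2.26]{GuMoNa12}. By contrast, for irreducible $G$ the second fact \emph{is} items 1--2 with $F=G$, so it carries the whole content of the cited theorems, and your sketch does not prove it.

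Write $\bar v$ for your extension $w$ of $\val$. You have $\val_k=e_{k-1}\val$ on $\Li$, and $\val_k(a)=e_{k-1}\bar v(a(\theta))$ when $\deg a<\deg\phi_k$. Substituting $\theta$ into $G=\sum_i a_i\phi_k^i$ then shows that $\val_k(a_i\phi_k^i)-i\lambda$ reaches its minimum twice for $\lambda=\val_k(\phi_k)-e_{k-1}\bar v(\phi_k(\theta))$. This corrects your formula, which is right only for $k=1$: in general the factor is $e_{k-1}$ rather than $q_{k-1}$, and the offset $\val_k(\phi_k)$ is non-zero for $k\ge 2$.

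More importantly, this computation only shows that $\lambda$ is \emph{a} slope of $\Nc_k^-(G)$. Conjugation invariance tells you every root gives the same $\lambda$, but nothing excludes further sides that come from no root. Two things are missing:
\begin{itemize}
\item the converse correspondence: a side of length $L$ accounts for exactly $L\deg\phi_k$ roots $\theta$ with that value;
\item for the residual polynomial, the fact that $R_\lambda(G)$ has no irreducible factor other than the minimal polynomial of your residue. Evaluation at $\theta$ only shows that this minimal polynomial divides $R_\lambda(G)$.
\end{itemize}

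You can close this in one of two ways. Either cite \cite{GuMoNa12} for the irreducible case, as the paper in effect does. Or argue by dissection. Suppose $\Rt(G)=gh$ with $g,h$ coprime and non-constant, which is the case when $\Nc_k^-(G)$ has two sides or $R_\lambda(G)$ has two coprime factors. Proposition~\ref{prop:init} followed by iterated \onestep{} then converges, by completeness of $\Ai$ and Lemma~\ref{lem:w-v0}, to a factorisation $G=\Gt\Ht$ with $\Rt(\Gt)=g$ and $\Rt(\Ht)=h$. This contradicts irreducibility, and it is not circular: those proofs use the product theorem, Proposition~\ref{prop:construct} and Lemma~\ref{lem:Rt-w}, but not the present theorem.
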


%%% Local Variables:
%%% mode: latex
%%% TeX-master: "issac"
%%% End:

\section{Testing irreducibility}
\label{sec:irred}
\subsection{Approximate roots}
\label{ssec:approx}

\begin{prop}\label{prop:approx-root} Let $F\in\Ai[x]$ be monic of
  degree $\dx$, with $\char(\Ai)\nmid\dx$. Let $N\in \Ni$ dividing
  $\dx$. There exists a unique polynomial $\psi\in \Ai[x]$ monic of
  degree $d/N$ such that $\deg(F-\psi^{N}) < d-d/N$. We call it the
  $N^{th}$ approximate roots of $F$, denoted by $\sqrt[N]{F}$. It can
  be computed in less than $\Ot(\dx)$ operations in $\Ai$.
\end{prop}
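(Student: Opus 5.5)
Write $m:=\dx/N$ and look for $\psi=x^{m}+b_1x^{m-1}+\cdots+b_m\in\Ai[x]$. The condition $\deg(F-\psi^N)<\dx-m$ says that $\psi^N$ and $F$ share their coefficients of $x^{\dx-1},\ldots,x^{\dx-m}$. These $m$ coefficients form a triangular system in $b_1,\ldots,b_m$. The coefficient of $x^{\dx-j}$ in $\psi^N$ equals $N\,b_j+Q_j(b_1,\ldots,b_{j-1})$, where $Q_j$ is a polynomial with integer coefficients. This holds because $b_j$ contributes linearly only through choosing $b_jx^{m-j}$ in one of the $N$ factors and $x^m$ in the others, and any other term involving $b_j$ has degree at most $\dx-j-1$.

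The system therefore reads $N b_j = f_{\dx-j}-Q_j(b_1,\ldots,b_{j-1})$. Since $\char(\Ai)\nmid\dx$ and $N\mid\dx$, $N$ is nonzero in $\Ai$. For uniqueness I also need $N$ invertible in $\Ai$. Solving inside $\Ai$ needs the residual hypothesis $\char(\Fi)\nmid N$; otherwise one only gets $\psi\in\Li[x]$. This is implicitly the setting of the paper (compare Assumption \ref{hyp:char>d}), and I would state it explicitly. Under it, induction on $j$ gives existence and uniqueness of each $b_j$, hence of $\psi$.

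For the complexity, I would avoid the naive triangular solve, which costs $\O(m^2)$ or worse. Instead I use reversed polynomials. Put $\tilde F(t)=t^{\dx}F(1/t)$ and $\tilde\psi(t)=t^{m}\psi(1/t)$; both have constant term $1$. The condition becomes $\tilde F\equiv\tilde\psi^N \bmod t^{m+1}$. Hence $\tilde\psi$ is the unique $N$-th root of $\tilde F$ in $\Ai[[t]]/(t^{m+1})$ with constant term $1$, namely $\tilde\psi=\tilde F^{1/N} \bmod t^{m+1}$.

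This power series root I compute by Newton iteration on $\Phi(Y)=Y^N-\tilde F$:
\[
Y\leftarrow Y-\frac{Y^N-\tilde F}{N\,Y^{N-1}},
\]
doubling the precision at each step. Each step needs $Y^N$ by binary powering and one power series inversion, which is $\Ot(m)$ operations per step (with $\log N$ factors for the powering). The total is $\Ot(m)\subset\Ot(\dx)$ operations in $\Ai$. Alternatively, one can use $\exp(\frac1N\log\tilde F)$, but that requires division by integers up to $m$, which is why the Newton iteration is preferable: it only inverts $N$. Reversing back gives $\psi$.

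The main obstacle is the precise hypothesis for invertibility of $N$. The argument and the cost bound work whenever $N\in\Ai^\times$. If only $\char(\Ai)\nmid N$ holds, as with $\Zi_p$ and $p\mid N$, then $\psi$ is still unique but lies in $\Li[x]$, so the statement must be read with the residual hypothesis.
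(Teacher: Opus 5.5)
Your proof is correct and takes essentially the same route as the references the paper cites instead of giving a proof: the triangular system with diagonal coefficient $N$ yields existence and uniqueness as in \cite{Pp02}, and the $\Ot(\dx)$ bound comes from Newton iteration for $\tilde F^{1/N}\bmod t^{\dx/N+1}$ on the reversed polynomial, which matches the approach of \cite{PoWe18}. You are also right that the printed hypothesis $\char(\Ai)\nmid\dx$ is too weak to get $\psi\in\Ai[x]$ in mixed characteristic (over $\Zi_2$, taking $F=x^2+x$ and $N=2$ forces $\psi=x+1/2$), so one needs $N\in\Ai^\times$, i.e.\ $\char(\Fi)\nmid N$, as in Proposition~\ref{prop:key}; uniqueness, however, only needs $N\neq 0$ in the domain $\Ai$, not invertibility, as your last paragraph already concedes.
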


\begin{proof}
  See e.g. \cite[Proposition 3.1]{Pp02} for the existence, and
  \cite[Proposition 11]{PoWe18} for their computation.
\end{proof}

\begin{prop}\label{prop:key} Let $F\in \Ai[x]$ be a monic separable
  polynomial of type $\tb$ and denote $N=\ordre[\tb](F)$. Suppose
  $\char(\Fi)\nmid N$ and let $\psi=\sqrt[N]{F}$.
  \begin{enumerate}
  \item \label{enum:key1} $\psi$ is a representative of $\tb$.
  \item \label{enum:key2} If $F$ is of type $\tb\cup (\psi,-m/q,P)$,
    then $q\deg(P)>1$.
  \end{enumerate}
\end{prop}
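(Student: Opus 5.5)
Let $\tb=[P_0,(\phi_1,\lambda_1,P_1),\ldots,(\phi_k,\lambda_k,P_k)]$ and let $N=\ordre[\tb](F)$, so that $R_k(F)\sim P_k^N$. The plan is to compare $F$ with $\psi^N$ through the $\phi_k$-adic expansion, and to use the fact that $\val_k$, $\Nc_k$ and $R_k$ are multiplicative on the relevant data (Mac Lane valuations; $R_k(GH)\sim R_k(G)R_k(H)$ as in \cite{GuMoNa12}).

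The first step is degree bookkeeping. Write $m=\deg\phi_k$. Since $F$ is of type $\tb$, we have $\deg F = m\,q_k\,\ell_k N$ (more precisely $\deg F = m\,e_k$-type count $= m\cdot(\text{length of }\Nc_k(F))$ with length $q_k\ell_k N$). Hence $\deg\psi=\deg F/N = m\,q_k\ell_k$. Expand $F=\sum_i a_i\phi_k^i$ and $\psi=\sum_i b_i\phi_k^i$. The defining property $\deg(F-\psi^N)<\deg F-\deg F/N$ means that the top $q_k\ell_k+1$ "blocks" of the expansion of $\psi^N$ agree with those of $F$, in an appropriate sense.

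The key step is to show that $\Nc_k(\psi)$ is one-sided of slope $\lambda_k$ and that $R_k(\psi)\sim P_k$. I would argue by writing $\psi=\phi_k^{q_k\ell_k}+\cdots$ and consider the initial form of $F$ with respect to $\val_k$ extended by slope $\lambda_k$ (that is, the graded algebra of the augmented valuation $\val_{k+1}$ defined by $\phi_k,\lambda_k$). In this graded algebra, the initial form of $F$ is $\epsilon\cdot\phi_k^{i_0}\cdot P_k(\text{unit}\cdot y)^N$-like, corresponding to $R_k(F)\sim P_k^N$. The approximate-root condition says that $\mathrm{in}(F)$ and $\mathrm{in}(\psi)^N$ coincide in their leading $\phi_k$-degree terms. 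Inside the polynomial ring $\Fi_k[y]$, this gives $R_k(\psi)^N$ agreeing with $P_k^N$ on the top coefficients down to degree $\ell_kN-\ell_k$. Since $\char(\Fi)\nmid N$, the classical uniqueness of $N$-th approximate roots over the field $\Fi_k$ then forces $R_k(\psi)=P_k$ (after normalisation). Here the tropical condition matters: one must check that no term of $\psi$ lies strictly below the line of slope $\lambda_k$, since otherwise its $N$-th power would dominate a lower coefficient of $F$. I would handle this by induction on the coefficient index, from the top downwards, in the same style as the proof of uniqueness of approximate roots. This gives $\ordre[\tb](\psi)=1$, together with the lower-order conditions (these are inherited since $\psi$ and $F$ share $\Nc_i$ and $R_i$ for $i<k$ by the same argument, or directly via $\val_k$). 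That establishes part~\ref{enum:key1}.

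For part~\ref{enum:key2}, suppose $F$ is of type $\tb\cup(\psi,-m/q,P)$ with $q\deg P=1$. Then $R_{k+1}(F)\sim (y-c)^{N}$ with $c\neq0$, where the length of $\Nc_{k+1}(F)$ in $\psi$-degree is $N$. The coefficient of $\psi^{N-1}$ in the $\psi$-adic expansion of $F$ has degree below $\deg\psi$ by the approximate-root property. Indeed, $F-\psi^N$ has degree $<\deg\psi^{N-1}$, so the $\psi$-adic coefficient $a_{N-1}$ vanishes. Then the residual polynomial $R_{k+1}(F)$ has zero coefficient in degree $N-1$, whereas $(y-c)^N$ has coefficient $-Nc\neq0$ because $\char(\Fi)\nmid N$. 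This is a contradiction.

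The main obstacle is the first part: making rigorous the transfer of the approximate-root property from degree in $x$ to the $\phi_k$-adic and residual level, including the control of the points below the slope-$\lambda_k$ line.
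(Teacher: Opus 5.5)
Your argument for part~2 is correct and is the paper's argument. Since $\deg(F-\psi^N)<\deg F-\deg\psi$, the $\psi$-adic coefficient of $\psi^{N-1}$ vanishes. So if $q\deg P=1$, the coefficient of $y^{N-1}$ in $R_{k+1}(F)\sim(y-c)^N$ would be both $0$ and $-Nc\neq 0$.

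Part~1, however, has a genuine gap exactly at the ``main obstacle'' you name, and the order you propose cannot work. Put $n=q_k\ell_k$, $\psi=\phi_k^n+\sum_{j<n}b_j\phi_k^j$, $w=\val_{k+1}$ and $w_0=w(\phi_k^n)$. The claim $w(\psi)=w_0$ (no $b_j\phi_k^j$ below the line) is the real content of part~1. A top-down induction on $j$ does not close using level-$k$ information alone, because $\phi_k$-adic expansions have carries. For instance, the coefficient of $\phi_k^{nN-1}$ in $\psi^N$ is $Nb_{n-1}+\sum_{r\ge 2}\binom Nr\gamma_r$, where $\gamma_r$ is the top ($(r-1)$-st) $\phi_k$-adic coefficient of $b_{n-1}^r$. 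All you know is $w(\gamma_r\phi_k^{r-1})\ge r\,w(b_{n-1})+(r-1)m_k$. If $w(b_{n-1}\phi_k^{n-1})\le w_0-m_k$, nothing shows that $Nb_{n-1}$ dominates, so equality of this coefficient with that of $F$ yields no contradiction. That case forces a last side of $\Nc_k(\psi)$ of slope $\ge 0$, i.e.\ $\psi$ is not of type $\tb_{k-1}$, and the level-$k$ data cannot exclude it. So the lower-order conditions cannot be ``inherited'' afterwards ``directly via $\val_k$'': that step uses $w(\psi)=w_0$ and is circular. They are the input the level-$k$ step needs.

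One way to repair your route is an induction on the order $i=0,\dots,k$, starting from $\bar\psi=\sqrt[N]{\bar F}=P_0^{N_0/N}$.
\begin{itemize}
\item Once $\val_i(\psi-\phi_i^{n_i})>\val_i(\phi_i^{n_i})$ is known, every deficit $\val_{i+1}(\phi_i^{n_i})-\val_{i+1}(b_j\phi_i^j)$ is $<m_i(n_i-j)$.
\item Only the $b_j$ with $j\ge n_i-t$ reach index $n_iN-t$, and the linear term $Nb_{n_i-t}$ strictly dominates there, so your top-down induction closes.
\item Comparing $\Rt_i(\psi)^N$ with $\Rt_i(F)$ in degrees $\ge n_i(N-1)$ then gives $\Rt_i(\psi)\sim\sqrt[N]{\Rt_i(F)}$, by uniqueness of approximate roots over $\Fi_i$. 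Use $\Rt_i$ rather than $R_i$ so that positions are not shifted.
\end{itemize}

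A shorter fix avoids the residual bookkeeping altogether. Take any representative $\Phi$ of $\tb$ and write $F=\sum_{j\le N}c_j\Phi^j$. As $F$ is of type $\tb$, its polygon with respect to $\Phi$ and $\val_{k+1}$ is principal (Lemma~\ref{lem:leadcoeff}, one order up), so $\val_{k+1}(c_{N-1})>\val_{k+1}(\Phi)$. The Tschirnhausen step $\Phi\mapsto\Phi+c_{N-1}/N$ is legitimate since $N\in\Ai^\times$. It adds a term of lower degree and larger $\val_{k+1}$, which leaves every $\Nc_i$, $R_i$ ($i\le k$) unchanged, so it yields again a representative. Meanwhile $\deg c_{N-1}$ strictly drops, so after finitely many steps $c_{N-1}=0$, i.e.\ $\Phi=\sqrt[N]{F}$.
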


\begin{proof}
  Point \ref{enum:key1} can be shown with arguments similar to
  \cite[Lemma 7]{PoWe18}. Point \ref{enum:key2} is a direct
  consequence of the fact that the coefficient of $\psi^{N-1}$ in
  the $\psi$-adic expansion of $F$ is zero.
\end{proof}

Without dealing with the precision of computations in $\Ai$, this
leads to the following irreducibility test algorithm:

\begin{algorithm}[ht]
  \nonl\TitleOfAlgo{\fastirred($F$)\label{algo:Irreducible}}%
  \KwIn{ $F\in \Ai[x]$ monic separable s.t.
    $\char(\Fi)\nmid\deg(F)$.}%
  \KwOut{A Boolean (is $F$ irreducible ?), a type $\tb$ and a
    representative $\phi$ of $\tb$.}%
  \lIf{$R_0(F)$ is not some $P_0^{N_0}$}{\Return{False, $[\,]$}}%
  $\tb\gets [P_0]$, $k\gets 1$, $N\gets N_0$\;%
  \While{$N > 1$}{%
    $\phi_{k}\gets \sqrt[N]{F}$\;%
    \label{algoIrr:represent}
    \lIf{$\Nc_k^-(F)$ is not one sided}{\Return{(False, $\tb$, $\phi_k$)}}%
    \lIf{$R_k(F)$ is not some $P_k^{N_k}$}{\Return{(False, $\tb$, $\phi_k$)}}%
    $\tb\gets \tb\cup (\phi_k,\lambda_k,P_k)$\tcp*{$\lambda_k$ the slope of $\Nc_k(F)$}%
    $N\gets N_k$, $k\gets k+1$\;%
  }%
  \Return{(True, $\tb, F$)}%
\end{algorithm}

This algorithm is similar to the one of Montes et al, with the
exception of the way we construct the representatives: approximate
roots enable a quick computation of the representative $\phi_k$,
together with the additional property $q_k\ell_k>1$ which ensures
$k\in\log(d)$.

\subsection{Precision and complexity}\label{ssec:prec}
It remains to deal with the necessary precision to conduct operations
in $\Ai$ and get complexity bounds for the computation of $\Nc_k^-(F)$
and $R_k(F)$.  We proceed as in \cite[Section 5.4]{PoWe18}: starting
with a small precision $\sigma$, we check at each iteration if
$\sigma$ is sufficient to certify that the computed data of
$F\mod \pi^\sigma$ is truly the data of $F$. If the precision is not
sufficient, we double it and restart the whole computation. This
process multiplies the overall complexity by at most $2$. We need a
certificate that the current precision $\sigma$ is high enough and an
upper bound for $\sigma$.

Assume that we computed a type $\tb_{k-1}$
dividing $F$ using a precision $\sigma$. Compute $\Nc_k^{-}(F)$ with
precision $\sigma$ and denote $\lambda_{\min}$ its \emph{right hand
  slope}, with convention $\lambda_{\min}(F)=+\infty$ if it is reduced
to a vertex.

% \begin{figure}
%   \begin{tikzpicture}
%     \small
%     % axes
%     \draw (-.5,0) node[left] {$\val_k(F)$} -- (5.5,0);%
%     \draw (0,-.5) node[below] {$0$} -- (0,3.2);%
%     % fleches
%     \flecheN{(0,3.2)} \flecheE{(5.5,0)}%
%     % polygone de Newton (avant les points pour recouvrement)
%     \draw[thick] (3,1) -- (5,0) node[near start, above right]
%     {$\lambda_{\min}$} node [below] {$N$};%
%     % points du support
%     \foreach\i in {(3,1),(5,0)}{\ptA[black]{\i}}%
%     % numeros et lignes pointillees
%     \draw[dashed] (-.2,2) -- (5.3,2) node[at start, left]
%     {$\sigma\,\val_k(\pi)$};% borne de troncation
%     % prolongement des aretes
%     \draw[dotted] (3,1) -- (0,2.5) node[left] {$\val_k(F)-\lambda_{\min}\,N$};%
%   \end{tikzpicture}
%   \caption{\label{pic:truncation}Necessary precision to compute $\Nc_k(F)$}
% \end{figure}

\begin{lem}\label{lem:trunc}
  Let $F\in \Ai[x]$ monic divisible by $\tb_{k-1}$.  If
  $\sigma>\frac{\val_k(F)+N\,|\lambda_{\min}|}{\val_k(\pi)}$, then
  truncating computations modulo $\pi^{\sigma+1}$ will compute the
  correct right hand edge of $\Nc_k^-(F)$. If moreover $F$ is of type
  $\tb_{k-1}$, then
  $ \frac{\val_k(F)+N\,|\lambda_{\min}|}{\val_k(\pi)}<
  \frac{2\vF}{\dx}$.
\end{lem}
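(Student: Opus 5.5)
We split the proof according to the two claims of Lemma~\ref{lem:trunc}.

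\emph{Correctness of the truncation.} We look at the right hand edge of $\Nc_k^-(F)$. Here $N$ denotes the abscissa of its right end point, namely the degree in $\phi_k$ of the last coefficient contributing to the principal part; since $F$ is divisible by $\tb_{k-1}$, one has $N=\ordre[\tb_{k-1}](F)$. The ordinate of that end point is $\val_k(F)$ up to the normalisation used.

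The right hand edge is determined by the points $(i,\val_k(a_i\phi_k^i))$ lying on or below the line through $(N,\val_k(F))$ with slope $\lambda_{\min}$, restricted to $0\le i\le N$. Every such point has ordinate at most the intercept $\val_k(F)+N|\lambda_{\min}|$. Now suppose two polynomials agree modulo $\pi^{\sigma+1}$. Then their $\phi_k$-adic coefficients differ by multiples of $\pi^{\sigma+1}$. This holds because $\phi_k$ is monic, so the $\phi_k$-adic expansion commutes with reduction mod $\pi^{\sigma+1}$. Each such perturbation has $\val_k$ at least $(\sigma+1)\val_k(\pi)>\val_k(F)+N|\lambda_{\min}|$. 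The same bound holds for the perturbation terms $\pi^{\sigma+1}b\,\phi_k^i$, since $\val_k(\phi_k)\ge 0$.

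It follows that truncation neither moves nor creates points below that line. The vertices of the right hand edge, and the points lying on it, are therefore unchanged. The same argument applies to the residual coefficients $R_{k-1}(a_i)$ attached to the points on the edge, since their valuations are strictly below the threshold.

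\emph{The bound in terms of $\vF$.} Assume now that $F$ is of type $\tb_{k-1}$. I would express $\val_k(F)+N|\lambda_{\min}|$ through the roots of $F$. Let $\theta$ run over the roots of $F$ in an algebraic closure, with the valuation normalised so that $\val(\pi)=1$. The standard formula of the Montes framework, \cite[Prop.~2.7 / Thm.~2.11]{GuMoNa12}, reads
\[
  \val(\phi_k(\theta))=\frac{\val_k(\phi_k)+|\lambda|}{e_{k-1}}
\]
for a root $\theta$ on the side of slope $\lambda$. Moreover $\val_k(\pi)=e_{k-1}$. Using these, the quantity to bound becomes
\[
  \frac{\val_k(F)+N|\lambda_{\min}|}{\val_k(\pi)}
  =\frac{1}{\deg\phi_k}\max_{\theta}\val\bigl(F'(\theta)\bigr)\text{-type sums},
\]
more precisely $\frac{\dx}{\deg\phi_k}\cdot\frac1{e_{k-1}}\bigl(\val_k(\phi_k)+|\lambda_{\min}|\bigr)$ up to the normalisation.

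Next I would choose roots $\theta,\theta'$ of $F$ lying in the same $\tb_{k-1}$-cluster, with $\theta'$ realising the steepest slope. Their mutual valuation satisfies $\val(\theta-\theta')\ge\val(\phi_k(\theta))/\deg\phi_k$, up to the Okutsu-type comparison.

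Finally, $\val(\mathrm{Disc}(F))=\sum_{\theta\neq\theta'}\val(\theta-\theta')$. All $\dx$ roots have at least $N\ge 2$ partners in their cluster, since $F$ is of type $\tb_{k-1}$ and divisible with $N\ge 2$ (otherwise there is nothing to lift). This yields $\vF\ge \frac{\dx}{2}\cdot(\text{the above quantity})$, whence the strict bound $<2\vF/\dx$. Strictness comes from $\sigma$ being compared with a strict inequality, or from the contribution of the remaining pairs.

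\emph{Main obstacle.} The hard step is the second part: translating $\val_k$ and slopes into valuations of differences of roots, with correct normalisations and constants. I would first try to follow \cite[Lemma 8]{PoWe18}, the analogous statement in the Puiseux setting. I would transpose it using the formula $\val(\phi_k(\theta))$ above and the estimate $\val(\phi_{k}(\theta))\le \deg\phi_k\cdot\max_{\theta'}\val(\theta-\theta')$. That estimate follows from $\phi_k$ being monic of degree $\deg\phi_k$ and from its roots being approximations of $\theta$ (Krasner/Okutsu).
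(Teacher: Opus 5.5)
Your argument for the first claim is fine. It is the standard stability argument behind \cite[Lemma 2.9]{BaNaSt13}, which the paper simply cites. You should also note that points with abscissa $i>N$ stay at height $\ge\val_k(F)$, by the same estimate. Your reduction of the second claim is also correct. Normalise $\val(\pi)=1$ and set $\mu=\min_\theta\val(\phi_k(\theta))$ over the roots $\theta$ of $F$. Then the left-hand side is $Q=N\mu$ with $N=\dx/\deg\phi_k$, so you must show $\vF\ge\frac{\dx}{2}N\mu$.

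The gap is the counting step. The pairwise bound $\val(\theta-\theta')\ge\val(\phi_k(\theta))/\deg\phi_k$ is not proved (``up to the Okutsu-type comparison''), and it is false for general pairs of roots of type $\tb_{k-1}$. Over $\Zi_p$ with $p$ odd, take a unit $u$ that is a non-square mod $p$, $\phi_1=x^2-u$ and $F=\phi_1^2-p$. Its roots are $\pm a,\pm b$ with $a^2=u+\sqrt p$ and $b^2=u-\sqrt p$, and $\val(a-(-a))=0<\frac14$. Where the bound does hold, namely for the $N-1$ roots sharing the nearest root of $\phi_k$, it only gives $\vF\ge\dx(N-1)\mu/\deg\phi_k$. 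This misses the target by a factor of order $\deg\phi_k$: in the example it yields $\vF\ge1$, whereas $\frac{\dx}{2}Q=2=\vF$.

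What is missing is to use the exact ultrametric configuration, and to count the pairs lying in different clusters as well. Let $g(c)=\sum_{\alpha'}\min(c,\val(\alpha-\alpha'))$, summed over the roots of $\phi_k$; by conjugacy it does not depend on the root $\alpha$. Then $\val(\phi_k(\beta))=g(\max_\alpha\val(\beta-\alpha))$; put $\rho=g^{-1}(\mu)$. The set $\{\beta:\val(\phi_k(\beta))\ge\mu\}$ contains all roots of $F$. It is a disjoint union of conjugate discs of radius $\rho$ around roots of $\phi_k$, each containing $s$ roots of $\phi_k$ and hence $Ns$ roots of $F$. Take $\theta$ in a disc $B$. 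Its $Ns-1$ partners in $B$ are at valuation $\ge\rho$, and those in another disc $B'$ are at valuation exactly $d(B,B')=\val(\alpha-\alpha')$ for $\alpha\in B$, $\alpha'\in B'$. Moreover $\mu=g(\rho)=s\rho+s\sum_{B'\ne B}d(B,B')$. Hence $\val(F'(\theta))\ge N\mu-\rho\ge(N-1)\mu$, and summing gives $\vF\ge\dx(N-1)\mu\ge\frac{\dx}{2}N\mu$ for $N\ge2$. Clustering by nearest root of $\phi_k$ alone does not suffice in the wild case. For $\phi_2=x^2-2$ over $\Zi_2$ and $F=\phi_2^2-4x$, every root of $F$ satisfies $\val(\theta\mp\sqrt2)=5/8<3/2=\val(2\sqrt2)$.

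Finally, your justification of strictness is empty, since the second claim does not involve $\sigma$. Strictness actually fails: for $F=x^2-p$ and $k=1$ both sides equal $1$, and the example $F=\phi_1^2-p$ above also gives equality. Only $\le$ can be proved, which is harmless for the precision analysis.
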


\begin{proof}
  This is \cite[Lemmas 2.9 and 2.8]{BaNaSt13}.
\end{proof}

\begin{lem}\label{lem:vk}
  One can compute $\val_k(F)$ in $\Ot(\dx)$ operations in $\Ai$.
\end{lem}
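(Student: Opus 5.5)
We compute $\val_k(F)$ through the recursive definition \eqref{eq:defvk}, using a divide and conquer $\phi_{k-1}$-adic expansion and recursion on the order of the type. We assume that the representatives $\phi_1,\ldots,\phi_{k-1}$ are available and that $\deg\phi_i=d_i$, with $d_i=e_{i-1}f_{i-1}d_{i-1}$-type growth. In particular $\deg\phi_{i}/\deg\phi_{i-1}=q_{i-1}\ell_{i-1}\ge 2$ when approximate roots are used, by Proposition \ref{prop:key}. Hence $k\in\O(\log\dx)$.

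\textbf{Step 1: the $\phi_{k-1}$-adic expansion.} We first compute the expansion $F=\sum_i a_i'\phi_{k-1}^i$ with $\deg a_i'<\deg\phi_{k-1}$. We use the classical subproduct-tree method of \cite[Section 9.2]{GaGe13}. Precompute $\phi_{k-1}^{2^j}$ for $2^j\deg\phi_{k-1}\le\dx$. Then divide $F$ by the largest such power, recurse on the quotient and the remainder, and use fast Euclidean division at each level. Each of the $\O(\log\dx)$ levels of this tree costs $\Ot(\dx)$, so this step costs $\Ot(\dx)$ operations in $\Ai$.

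\textbf{Step 2: recursion.} By \eqref{eq:defvk}, $\val_k(F)=\min_i\bigl(q_{k-1}\val_{k-1}(a_i')+i\,(q_{k-1}\val_{k-1}(\phi_{k-1})+m_{k-1})\bigr)$. The values $\val_{j}(\phi_{j})$ can be precomputed once along the type, or obtained inductively as $\val_j(\phi_j)=q_{j-1}\val_{j-1}(\phi_{j-1})\deg\phi_j/\deg\phi_{j-1}$. We then need $\val_{k-1}(a_i')$ for every $i$. Each $a_i'$ has degree $<\deg\phi_{k-1}$, and the degrees sum to at most $\dx$, so we recurse on the $a_i'$ with order $k-1$. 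At depth $j$ of the recursion, the polynomials being handled have total degree at most $\dx$. The expansions at that level therefore cost $\Ot(\dx)$ in total, since the cost $\Ot(\deg)$ is superadditive. Summing over the $k\in\O(\log\dx)$ levels gives $\Ot(\dx)$ overall. At order $0$, $\val_0$ is just a minimum of the valuations of the coefficients, which is linear.

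\textbf{Main obstacle.} The main difficulty is justifying that the recursion depth, and hence the extra factor, is only logarithmic. This requires $\deg\phi_j\ge2\deg\phi_{j-1}$, that is, $q_j\ell_j\ge 2$. That inequality holds for representatives built as approximate roots (Proposition \ref{prop:key}, point \ref{enum:key2}), but not for arbitrary types. If the lemma is meant for general types, then $k\le\dx$ could degrade the bound. In that case we would instead merge the successive expansions, noting that the levels with $q_j\ell_j=1$ can be handled by a single Taylor-like substitution, since $\deg\phi_j=\deg\phi_{j-1}$. A second point to check is that the precision $\sigma$ from Lemma \ref{lem:trunc} keeps arithmetic in $\Ai$ well defined. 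We count operations in $\Ai$ as stated, so this only affects the later translation into operations over $\Fi$.
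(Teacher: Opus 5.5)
Your proof follows the paper's own argument. You compute the $\phi_{k-1}$-adic expansion fast \cite[Theorem 9.15]{GaGe13}, recurse on the coefficients $a_i'$ using $\deg a_i'<\deg\phi_{k-1}$, and add $i\,\val_{k-1}(\phi_{k-1})$ from a known value; the paper takes this value from the closed formula \cite[Proposition 2.15]{GuMoNa12}. Your level-by-level cost accounting spells out what the paper compresses into ``the bound $\deg(a_i)<\deg(\phi_{k-1})$ concludes''.

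There is one slip to fix. Your inductive formula for $\val_j(\phi_j)$ drops the slope term. By \eqref{eq:defvk}, $\val_j(\phi_{j-1})=q_{j-1}\val_{j-1}(\phi_{j-1})+m_{j-1}$, and \cite[Theorem 2.11]{GuMoNa12} gives $\val_j(\phi_j)=q_{j-1}\ell_{j-1}\val_j(\phi_{j-1})$. Hence
\[
\val_j(\phi_j)=\frac{\deg\phi_j}{\deg\phi_{j-1}}\bigl(q_{j-1}\val_{j-1}(\phi_{j-1})+m_{j-1}\bigr).
\]
For instance, with $\phi_1=x$, $\lambda_1=-1/2$ and $\phi_2=x^2-\pi$, your formula gives $0$ instead of $\val_2(\phi_2)=2$. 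This only changes a precomputed constant, not the complexity.

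Your ``main obstacle'' also resolves more simply than with a Taylor-like substitution. If $q_{j-1}\ell_{j-1}=1$, then $q_{j-1}=1$ and $\deg\phi_j=\deg\phi_{j-1}$. So any $a$ with $\deg a<\deg\phi_j$ is its own $\phi_{j-1}$-adic expansion, and $\val_j(a)=q_{j-1}\val_{j-1}(a)=\val_{j-1}(a)$. Such levels cost nothing. Only the indices with $q_j\ell_j\ge 2$, of which there are at most $\log_2\dx$, require actual expansions, whatever the type.
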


\begin{proof}
  Compute the $\phi_{k-1}$-adic expansion of $F$ in $\Ot(\dx)$
  operations in $\Ai$ \cite[Theorem 9.15]{GaGe13}. If $k>1$, compute
  recursively each $\val_{k-1} (a_i \phi_{k-1}^i)$. As there is a
  closed formula for $\val_{k-1}(\phi_{k-1})$ \cite[Proposition
  2.15]{GuMoNa12}, the bound $\deg(a_i)<\deg(\phi_{k-1})$ concludes.
\end{proof}

\begin{prop}\label{prop:Nk}
  One can compute $\Nc_k^-(F)$ with precision $\sigma$ in less than
  $\Ot(\dx\,\sigma)$ operations in $\Fi$.
\end{prop}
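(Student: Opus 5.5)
To compute $\Nc_k^-(F)$ we need, for each coefficient $a_i$ of the $\phi_k$-adic expansion $F=\sum_i a_i\phi_k^i$, the value $\val_k(a_i\phi_k^i)=\val_k(a_i)+i\,\val_k(\phi_k)$. We then take a lower convex hull. The plan is to follow the proof of Lemma \ref{lem:vk} and keep track of the precision.

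\textbf{Step 1: the $\phi_k$-adic expansion.} Work in $(\Ai/\pi^{\sigma})[x]$. Computing the expansion of $F$ modulo $\pi^\sigma$ costs $\Ot(\dx)$ operations in $\Ai/\pi^\sigma$ by \cite[Theorem 9.15]{GaGe13}. For $\Ki[[t]]$, arithmetic modulo $t^\sigma$ costs $\Ot(\sigma)$ operations in $\Fi$. For mixed characteristic (e.g. $\Zi_p$) we need the analogous $\Ot(\sigma)$ bound counted in $\Fi$-operations, which we take from the complexity model, as is implicit in \cite{BaNaSt13}. Either way this step costs $\Ot(\dx\,\sigma)$.

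\textbf{Step 2: the valuations $\val_k(a_i)$.} Each $a_i$ has degree less than $m:=\deg\phi_k$, and there are at most $\dx/m+1$ of them. By the inductive definition \eqref{eq:defvk}, $\val_k(a_i)$ is obtained from the $\phi_{k-1}$-adic expansion of $a_i$ and recursively from $\val_{k-1}$. The value $\val_k(\phi_k)$ is given by the closed formula \cite[Proposition 2.15]{GuMoNa12}.

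For the cost, at each level $j\le k$ we expand $\dx/\deg\phi_{j}$ polynomials of degree less than $\deg\phi_{j}$ with respect to $\phi_{j-1}$, modulo $\pi^\sigma$. Their degrees sum to at most $\dx$, so level $j$ costs $\Ot(\dx\,\sigma)$ by the same fast-division argument. At the base level we read Gauss valuations of the truncated coefficients, which takes $\O(\dx\sigma)$ comparisons.

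\textbf{Step 3: bounding the number of levels.} Since each representative has degree $\deg\phi_j=e_{j-1}f_{j-1}\deg\phi_{j-1}$ with factor at least $2$ once $q\ell>1$, the number of levels is $k\in\O(\log\dx)$. This factor is absorbed in $\Ot$, giving $\Ot(\dx\,\sigma)$ in total. This is where I would invoke Proposition \ref{prop:key}(\ref{enum:key2}). Without it, one can use that $\deg\phi_j$ is strictly increasing along an optimal type.

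\textbf{Step 4: the convex hull.} Coefficients whose truncation vanishes are treated as having valuation at least $\sigma\,\val_k(\pi)$, consistent with Lemma \ref{lem:trunc}. The principal part of the hull of at most $\dx+1$ points is then computed in $\O(\dx)$ by a Graham-type scan, since the points are already sorted by abscissa.

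\textbf{Main obstacle.} The delicate point is Step 2. We must argue that computing every $\val_k(a_i)$ recursively at precision $\sigma$ does not blow up to $\dx\sigma k^2$ or worse. We also need valuations computed on truncations to be correct whenever they are below the truncation threshold. The first issue is handled by the degree-sum argument per level. For the second, I would note that $\val_k$ is a minimum of $\val_{k-1}$-terms, so precision losses only affect terms already above $\sigma\,\val_k(\pi)$, and these are flagged as unknown.
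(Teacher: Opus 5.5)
Your proposal is correct and follows the paper's proof. The paper also computes the $\phi_k$-adic expansion via \cite[Theorem 9.15]{GaGe13} and then the values $\val_k(a_i\phi_k^i)$ by the recursion of Lemma \ref{lem:vk}, with each $\Ai$-operation at precision $\sigma$ costing $\Ot(\sigma)$ in $\Fi$; your extra bookkeeping (degree sums per level, the bound $k\in\O(\log\dx)$, the convex hull, and correctness of valuations below the truncation threshold) makes explicit what the paper leaves implicit, though note that in this paper's notation the degree ratio is $\deg\phi_j/\deg\phi_{j-1}=q_{j-1}\ell_{j-1}$, since $e_{j-1},f_{j-1}$ are cumulative products here.
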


\begin{proof}
  First compute the $\phi_k$-adic expansion of $F$, then the different
  values $\val_k(a_i\,\phi_k^i)$. The complexity then comes from
  \cite[Theorem 9.15]{GaGe13} and Lemma \ref{lem:vk}.
\end{proof}

\begin{prop}\label{prop:Rk}
  Up to the cost of operations already done while computing $\Nc_k^-(F)$,
  one can compute $R_k(F)$ in $\Oe(\dx\,f_{k-1})$ op. in $\Fi$.
\end{prop}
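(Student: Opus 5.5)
We compute $R_k(F)$ directly from its defining formula, reusing the data already produced while computing $\Nc_k^-(F)$. Those data are the $\phi_k$-adic expansion $F=\sum_i a_i\phi_k^i$, the values $\val_k(a_i\phi_k^i)$, and the right hand edge of slope $\lambda_k$. From them we read off $S_k(F)$, $I_k(F)$ and $i_k(F)$ at negligible cost. Since $\deg(\phi_k)\geq \deg(\phi_{k-1})\,q_{k-1}\ell_{k-1}$, the set $I_k(F)$ has cardinality $\O(\dx/\deg\phi_k)$, and its elements lie in an arithmetic progression of step $q_k$. By definition,
\[
R_k(F)=\sum_{i\in I_k(F)} z_{k-1}^{\tau_{k,i}}\,R_{k-1}(a_i\phi_k^i)(z_{k-1})\,y^{(i-i_k(F))/q_k},
\]
so the work reduces to computing, for each $i\in I_k(F)$, the element $c_i:=z_{k-1}^{\tau_{k,i}}R_{k-1}(a_i\phi_k^i)(z_{k-1})\in\Fi_k$.

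\textbf{Step 1: residual polynomials of the $a_i$.} Since $R_{k-1}$ is multiplicative up to the known correction factors of \cite[Section 2]{GuMoNa12}, we write $R_{k-1}(a_i\phi_k^i)$ as $R_{k-1}(a_i)$ times an explicit power of $z_{k-2}$ coming from $R_{k-1}(\phi_k)^i$. Because $\phi_k$ is a representative, $R_{k-1}(\phi_k)\sim P_{k-1}$, so its evaluation at $z_{k-1}$ is handled through the closed formulas for $\val_{k-1}(\phi_k)$ and the $\tau$ exponents. Concretely, I would fold $R_{k-1}(\phi_k^i)(z_{k-1})$ into the constant and the power $z_{k-1}^{\tau_{k,i}}$ up to a global unit, which is harmless since we only need $R_k(F)$ up to $\sim$. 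Each $R_{k-1}(a_i)$ is computed recursively: expand $a_i$ $\phi_{k-1}$-adically, which was already done in the recursive computation of $\val_k$ in Lemma \ref{lem:vk}, and apply the same formula one level down. The key point is that $\deg a_i<\deg\phi_k$, and that $\deg R_{k-1}(a_i)$ is bounded by $\deg(a_i)/(\deg\phi_{k-1}q_{k-1})$. Summing over $i\in I_k(F)$, the total size of all the residual polynomials $R_{k-1}(a_i)\in\Fi_{k-1}[y]$ is $\O(\dx/(\deg\phi_{k-1} q_{k-1}))$ coefficients in $\Fi_{k-1}$.

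\textbf{Step 2: evaluation and assembly.} Evaluating each $R_{k-1}(a_i)$ at $z_{k-1}$ is a reduction modulo $P_{k-1}$ in $\Fi_{k-1}[y]$. It costs $\Ot(\deg R_{k-1}(a_i)+\ell_{k-1})$ operations in $\Fi_{k-1}$, hence $\Oe(\cdot\, f_{k-2})$ operations in $\Fi$ by Lemma \ref{lem:costFk}. We then multiply by $z_{k-1}^{\tau_{k,i}}$ using fast exponentiation in $\Fi_k$ (with $\O(\log\dx)$ multiplications) or, better, by incremental multiplication along the arithmetic progression of exponents. Each such operation is $\Oe(f_{k-1})$ by Lemma \ref{lem:costFk}.

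\textbf{Step 3: cost and main obstacle.} Summed over the $\O(\dx/\deg\phi_k)\le \O(\dx)$ indices, this gives $\Oe(\dx\,f_{k-1})$. Unrolling the recursion across levels $1,\dots,k-1$ gives the same bound at each level, since $f_j$ increases with $j$ while the number of terms decreases. Because $k\in\O(\log \dx)$, the logarithmic loss is absorbed in $\Oe$.

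The main obstacle is the bookkeeping in Step 1. We must show that the recursive residual computations of the coefficients have total size linear in $\dx$ measured in $\Fi$-units, namely $\deg$ times $f$. We must also check that the twisting powers $z_{k-1}^{\tau_{k,i}}$ and the contributions of $R_{k-1}(\phi_k^i)$ cost no more than the stated bound. I would handle this by an induction on $k$ with the invariant that $R_{k-1}$ of a polynomial of degree $<\deg\phi_k$ is computed in $\Oe(\deg\cdot f_{k-2}/\deg\phi_{k-1})$ operations.
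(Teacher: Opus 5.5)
The paper proves this statement only by citing \cite[Lemma 5.6]{BaNaSt13}. Your plan is the natural reconstruction of that lemma: compute each residual coefficient recursively from the multi-adic expansion already produced for $\Nc_k^-(F)$, then count the cost level by level. But Step 1 contains a step that fails. $R_{k-1}$ is multiplicative for the fixed slope $\lambda_{k-1}$, so $R_{k-1}(a_i\phi_k^i)$ is divisible by $R_{k-1}(\phi_k)^i$. Here $R_{k-1}(\phi_k)\sim P_{k-1}$ is a polynomial of degree $\ell_{k-1}$ in $y$, not a power of $z_{k-2}$. Since $z_{k-1}$ is by definition the class of $y$ modulo $P_{k-1}$, you get $R_{k-1}(a_i\phi_k^i)(z_{k-1})=0$ for every $i\geq 1$. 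So there is no unit to ``fold in''. Read literally, the displayed formula would kill every coefficient of $R_k(F)$ except possibly the one with $i=0$.

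Independently, even a nonzero factor $c^i$ would not be a global unit. Multiplying the coefficient of $y^{(i-i_k(F))/q_k}$ by $c^i$ is, up to the constant $c^{i_k(F)}$, the substitution $y\mapsto c^{q_k}y$, which $\sim$ does not allow. The exact normalisation of $R_k$ matters downstream: $P_k$ defines $\Fi_{k+1}$, the next representative must satisfy $R_k(\phi_{k+1})\sim P_k$, and Lemma \ref{lem:leadcoeff} asserts $\lc{}{g}=R(\phi)^{\deg(g)}$.

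The repair is to read the definition as in \cite{GuMoNa12}. The coefficient attached to $i\in I_k(F)$ is $z_{k-1}^{\tau_{k,i}}R_{k-1}(a_i)(z_{k-1})$. The power $\phi_k^i$ enters only through the height $\val_k(a_i\phi_k^i)$, i.e.\ through the twisting exponent. With this reading the rest of your argument goes through.

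The cleanest accounting counts segment points rather than using your invariant $\Oe(\deg\cdot f_{k-2}/\deg\phi_{k-1})$. As stated, that invariant fails for inputs of small degree, since each call costs at least one twist in $\Fi_{k-1}$. Instead, argue as follows. If $\deg b<\deg\phi_{j+1}=q_j\ell_j\deg\phi_j$, at most $\ell_j$ of the $\phi_j$-adic coefficients of $b$ lie on a line of slope $\lambda_j$, because their abscissas are congruent modulo $q_j$. Moreover $\deg R_j(b)<\ell_j$, so evaluating at $z_j$ is free in the triangular representation. Hence level $j$ needs at most $|I_k(F)|\,\ell_{k-1}\cdots\ell_j$ twisted coefficients in $\Fi_j$, at $\Oe(f_{j-1})$ each by Lemma \ref{lem:costFk}. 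That is $\Oe(|I_k(F)|\,f_{k-1})$ per level, and $\Oe(\dx\,f_{k-1})$ in total since $k\in\O(\log\dx)$.
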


\begin{proof}
  This is \cite[Lemma 5.6]{BaNaSt13}.
\end{proof}

\begin{lem}\label{lem:fd<2lvF}
  Let $F\in \Ai[x]$ monic of type $\tb_{k-1}$ such that
  $\ordre[\tb_0]{F}>1$. Then $f_{k-1}\dx \le 2\ell_0 \vF$.
\end{lem}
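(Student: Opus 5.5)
The plan is to bound $\vF$ from below through the roots of $F$, showing $\vF\ge (N_0-1)\,f_{k-1}$ with $N_0:=\ordre[\tb_0]{F}=\dx/\ell_0$. Since $N_0\ge 2$ we have $N_0-1\ge N_0/2$, which gives $\vF\ge \dx f_{k-1}/(2\ell_0)$, i.e.\ the statement.

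\textbf{Setup.} Fix an algebraic closure of $\Li$ and extend $\val$ to it. Let $\theta_1,\dots,\theta_\dx$ be the roots of $F$, so that
\[
\vF=\sum_{j}\val(F'(\theta_j))=\sum_{j}\sum_{i\neq j}\val(\theta_j-\theta_i).
\]
Two structural facts from the theory of types are needed.
\begin{enumerate}
\item[(a)] Since $F$ is of type $\tb_{k-1}$, every irreducible factor $G$ of $F$ is of type $\tb_{k-1}$. Hence the residue field of $\Li(\theta)$, for a root $\theta$ of $G$, contains $\Fi_k$, of degree $f_{k-1}$ over $\Fi$ (see \cite{GuMoNa12}). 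Therefore the ramification index satisfies $e(\theta)\le \deg(G)/f_{k-1}\le \dx/f_{k-1}$, and every nonzero element of $\val(\Li(\theta)^\times)$ has absolute value at least $f_{k-1}/\dx$.
\item[(b)] Since $R_0(F)=P_0^{N_0}$, all roots reduce to roots of $P_0$. Above each root of $P_0$ in the residue field there are exactly $N_0$ roots of $F$, counted with multiplicity.
\end{enumerate}

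\textbf{Key step.} Fix a root $\theta$ and a monic lift $\tilde P_0\in\Ai[x]$ of $P_0$. Write $F=\tilde P_0^{N_0}+\pi H$ and organise the product $F'(\theta)=\prod_{i\ne j}(\theta-\theta_i)$ by residue classes. The $N_0-1$ other roots $\theta_i$ lying in the same residue class as $\theta$ each give $\val(\theta-\theta_i)>0$; the remaining roots give valuation $0$. The aim is to show that this partial product has valuation at least $(N_0-1)\,f_{k-1}/\dx$.

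The naive route is to bound each factor by $1/e$, but $\theta-\theta_i$ lives in the compositum $\Li(\theta,\theta_i)$, whose ramification can be larger than $\dx/f_{k-1}$. Two ways around this:
\begin{itemize}
\item[(i)] Group the $\theta_i$ by $\Li(\theta)$-conjugacy. For each $\Li(\theta)$-irreducible factor $G_s$ of $F(x)/(x-\theta)$ whose roots are congruent to $\theta$, the quantity $\val(G_s(\theta))$ is a positive element of $\val(\Li(\theta)^\times)$. By (a) it is therefore at least $\deg(G_s)\cdot$(a per-root bound), provided one also uses that $\val(\theta-\theta_i)$ is constant along the $\Li(\theta)$-conjugates of $\theta_i$.
\item[(ii)] Instead use the stronger information that $\Nc_1(F)$ is one-sided of slope $\lambda_1$, so that $\val(\tilde P_0(\theta))$ is the same for all roots and computable from $\lambda_1$. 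Combined with the ultrametric structure, this gives a uniform lower bound for each $\val(\theta-\theta_i)$ in the same residue class.
\end{itemize}
I would try (ii) first, since it uses the type hypothesis directly rather than only the residue degree.

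\textbf{Conclusion and main obstacle.} Once the per-root bound $\val(F'(\theta))\ge (N_0-1)f_{k-1}/\dx$ is established, summing over the $\dx$ roots gives
\[
\vF\ge (N_0-1)\,f_{k-1}\ge \frac{N_0 f_{k-1}}{2}=\frac{\dx\, f_{k-1}}{2\ell_0}.
\]
The obstacle is precisely this per-root bound: getting the factor $N_0-1$ (one contribution per nearby root), each of size at least $f_{k-1}/\dx$, without losing to compositum ramification. If neither (i) nor (ii) works cleanly, I would fall back on the closed formulas for $\val_k$ and the index of $F$ from \cite{GuMoNa12} and \cite{BaNaSt13} to compute $\vF$ directly along the type.
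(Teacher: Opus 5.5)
Your skeleton is the paper's. You bound $\vF=\sum_{\theta\ne\theta'}\val(\theta-\theta')$ from below using the $\dx(N_0-1)$ ordered pairs of roots in a common residue class, get $\vF\ge (N_0-1)f_{k-1}$, and finish with $N_0-1\ge N_0/2$. But the one substantive step, a per-pair bound $\val(\theta-\theta')\ge f_{k-1}/\dx$, is left open, so as written there is a gap.

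Route (i) does not close it. Constancy along $\Li(\theta)$-conjugates gives $\val(G_s(\theta))=\deg(G_s)\,\val(\theta-\theta_i)$, and (a) then only yields $\val(\theta-\theta_i)\ge f_{k-1}/(\dx\deg G_s)$, losing the factor $\deg(G_s)$. Route (ii) needs $\Nc_1(F)$ one-sided, which is not a hypothesis when $k=1$, since a type of order $0$ imposes no polygon condition. Moreover, what (ii) really uses is not the constancy of $\val(\tilde P_0(\theta))$ but a common reference point for the whole residue class.

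Your distrust of the naive bound is justified, and it is exactly the step the paper's proof takes. The paper asserts $\val(\theta-\theta')\ge 1/e_G$, with $e_G$ the ramification index of $\Li(\theta)$, and deduces $\vF\ge\sum_G\deg(G)(N_0-1)/e_G$. Take $F=(x^2-p)(x^3-p)\in\Zi_p[x]$ with $p\ge5$, of type $[y]$ with $N_0=5$. Then $\val(\sqrt p-p^{1/3})=1/3<1/2$, and $\sum_G\deg(G)(N_0-1)/e_G=8>7=\vF$. The lemma survives only because the weaker bound $f_{k-1}/\dx$ suffices.

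The missing idea is a common center of controlled ramification:
\begin{itemize}
\item Let $\tilde P_0\in\Ai[x]$ be a monic lift of $P_0$ and $\alpha$ its unique root with $\bar\alpha=\bar\theta$. This uses $P_0$ separable, as your (b) already does.
\item Since $\Li(\alpha)/\Li$ is unramified, $\Li(\theta,\alpha)$ is unramified over $\Li(\theta)$. So $\val(\theta-\alpha)$ is a positive element of $\frac{1}{e_G}\Zi$, hence $\val(\theta-\alpha)\ge 1/e_G\ge f_{k-1}/\dx$ by (a).
\item The same $\alpha$ serves every $\theta'$ in the class of $\theta$. So $\val(\theta-\theta')\ge\min(\val(\theta-\alpha),\val(\theta'-\alpha))\ge f_{k-1}/\dx$.
\item Summing over the $\dx(N_0-1)$ pairs gives $\vF\ge(N_0-1)f_{k-1}$ for every $k\ge1$, with no polygon data needed.
\end{itemize}
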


\begin{proof}
  We know that $F\equiv P_0^{N_0}\mod \pi$ for some irreducible
  polynomial $P_0\in \Fi[x]$ of degree $\ell_0$. Let $\theta$ be a
  root of $F$, say with minimal polynomial $G$ (prime factor of
  $F$). Thus $\theta \mod \pi$ is a root of $P_0$ and there are
  $N_0-1$ other roots $\theta'$ of $F$ such that
  $\theta \equiv \theta'\mod \pi$. It follows that
  $\val_0(\theta-\theta')>0$, hence
  $\val_0(\theta-\theta')\ge \frac 1 {e_{G}}$ where $e_G$ is the
  ramification index of $G$. Summing over all $\theta$ and all
  $\theta'$, we get $\vF\ge \sum_{G} \deg(G)(N_0-1)/e_{G}$, the sum
  over all prime factors of $F$. We have $e_{G}f_{G} =\deg(G)$ ($f_G$
  residual degree) and $f_{k-1}|f_G$ for all $G$. It follows that
  $\vF\ge f_{k-1}(N_0-1)$. Equality $N_0\,\ell_0=\dx$ concludes.
\end{proof}

\begin{thm}\label{thm:fastirred}
  If $\char(\Fi)\nmid\dx$, then there exists an algorithm that tests
  if $F$ is irreducible in less than $\Oe(\vF\ell_0)$ operations in
  $\Fi$ and at most $\log_2(d)$ univariate irreducibility tests.
\end{thm}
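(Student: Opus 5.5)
We run \fastirred{} on $F$ together with the precision-doubling strategy of Section~\ref{ssec:prec}, and bound the cost of each iteration of the while loop. Two facts are needed: the number of iterations is at most $\log_2(\dx)$, and each iteration costs $\Oe(\vF\ell_0)$ up to logarithmic factors.

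\textbf{Correctness.} By Proposition~\ref{prop:key}, point~\ref{enum:key1}, each $\phi_k=\sqrt[N]{F}$ is a representative of the current type. Theorem~\ref{thm:polygon-residue} then applies.
\begin{itemize}
\item If $\Nc_k^-(F)$ is not one-sided, or $R_k(F)$ is not a power of an irreducible polynomial, then $F$ factors nontrivially.
\item Otherwise $F$ is of type $\tb\cup(\phi_k,\lambda_k,P_k)$ with $N_k=\ordre[\tb_k](F)$.
\item When $N=1$, the irreducibility criterion shows $F$ is irreducible.
\end{itemize}
The hypothesis $\char(\Fi)\nmid\dx$ guarantees $\char(\Fi)\nmid N$ at every step, because $N$ divides $\dx$ (indeed $\dx=N\deg(\phi_k)$). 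So Proposition~\ref{prop:key} applies at every step.

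\textbf{Number of iterations.} By Proposition~\ref{prop:key}, point~\ref{enum:key2}, we have $q_k\ell_k>1$. Since $N_{k-1}=q_k\ell_k N_k$, the integer $N$ is at least halved at each step. Hence there are at most $\log_2(\dx)$ iterations. Each iteration performs one univariate irreducibility test, namely factoring $R_k(F)$ over $\Fi_k$. This gives the claimed $\log_2(d)$ tests, in addition to the initial test on $R_0(F)$.

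\textbf{Cost of an iteration.} We may assume $N_0>1$; otherwise $F$ is irreducible after the first test, at cost $\O(\dx)$. This case is fine: $N_0\ell_0=\dx$ forces $\ell_0=\dx$ there, so $\O(\dx)=\O(\ell_0)$. When $N_0>1$, Lemma~\ref{lem:fd<2lvF} gives $f_{k-1}\dx\le 2\ell_0\vF$. An iteration consists of the following steps.
\begin{enumerate}
\item Compute the approximate root in $\Ot(\dx)$ operations in $\Ai$ (Proposition~\ref{prop:approx-root}).
\item Compute $\Nc_k^-(F)$ in $\Ot(\dx\sigma)$ operations in $\Fi$ (Proposition~\ref{prop:Nk}).
\item Compute $R_k(F)$ in $\Oe(\dx f_{k-1})$ operations (Proposition~\ref{prop:Rk}).
\end{enumerate}
By Lemma~\ref{lem:trunc}, while $F$ is of type $\tb_{k-1}$ a precision $\sigma\in\O(\vF/\dx)$ suffices. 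With doubling, the final precision is at most twice that bound, and the total work is at most twice the work of the last run. Operations in $\Ai$ at precision $\sigma$ cost $\Ot(\sigma)$ operations in $\Fi$. Steps 1 and 2 therefore cost $\Ot(\dx\cdot\vF/\dx)=\Ot(\vF)$, and step 3 costs $\Oe(\ell_0\vF)$ by Lemma~\ref{lem:fd<2lvF}.

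Summing over the $\O(\log\dx)$ iterations, and using that a $\log\dx$ factor is absorbed into $\Oe$ (since $\vF\ge\dx-1$ when $N_0>1$, such factors are harmless), yields $\Oe(\vF\ell_0)$.

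\textbf{Main obstacle.} The delicate point is the precision control. The bound of Lemma~\ref{lem:trunc} holds only while $F$ is still of type $\tb_{k-1}$. At the step where $F$ is found reducible, the relevant precision must still be $\O(\vF/\dx)$, so that the certificate of Section~\ref{ssec:prec} can detect either sufficiency or a non-one-sided polygon. We also need the approximate root $\phi_k$ computed modulo $\pi^\sigma$ to remain a valid representative at that precision.

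To handle this, we follow \cite[Section 5.4]{PoWe18}. Only the right-hand edge of $\Nc_k^-(F)$ needs certifying. If this edge does not have full length $N$, non-irreducibility is already certified at that precision. The residual polynomial involves only points on this edge, whose values are below $\sigma\val_k(\pi)$.
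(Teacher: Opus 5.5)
Your proposal is correct and follows the paper's route: run \fastirred{} with the precision-doubling of Section~\ref{ssec:prec}, use $\sigma\le 2\vF/\dx$ from Lemma~\ref{lem:trunc}, bound each iteration via Propositions~\ref{prop:approx-root}, \ref{prop:Nk}, \ref{prop:Rk} and Lemma~\ref{lem:fd<2lvF}, and get at most $\log_2(\dx)$ iterations from $q_k\ell_k>1$; note that Lemma~\ref{lem:trunc} applies at every iteration because $F$ is already of type $\tb_{k-1}$ there by the loop invariant, so your ``main obstacle'' is not really one. One small slip: $\vF\ge\dx-1$ can fail when $\ell_0>1$ (one can have $\vF=\dx/2$), but Lemma~\ref{lem:fd<2lvF} with $f_{k-1}\ge\ell_0$ still gives $\dx\le 2\vF$, which is all you need to absorb the $\log\dx$ factor.
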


\begin{proof}
  The algorithm is \fastirred{} with the above modification to deal
  with the precision of the computations. As we use a precision
  $\sigma\leq 2\vF/\dx$, the complexity comes from all the
  intermediate results of this section.
\end{proof}

\paragraph{The case $\Ai=\Zi_p$.} We say that $F\in\Zi_p[x]$ is
Weierstrass if $F\equiv x^{\dx}\mod p$. Also, we say that $p$ is small
if it is polynomial in $\dx$ and $\vF$.
%, i.e $p\in \O((\dx+\vF)^{\Ot(1)})$.

\begin{cor}
  If $p$ is small and does not divide $\dx$, we can test the
  irreducibility of a separable Weierstrass polynomial $F\in\Zi_p[x]$
  with $\Oe(\vF)$ operations in $\Fi_p$.
\end{cor}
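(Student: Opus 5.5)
We derive the corollary from Theorem~\ref{thm:fastirred} by specialising to $\Ai=\Zi_p$, $\Fi=\Fi_p$, and it remains to show two things. First, the factor $\ell_0$ disappears. Second, the at most $\log_2(\dx)$ univariate irreducibility tests over the intermediate fields $\Fi_k$ fit within $\Oe(\vF)$ operations in $\Fi_p$.

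The Weierstrass hypothesis $F\equiv x^{\dx}\mod p$ gives $R_0(F)=y^{\dx}$. Hence $P_0=y$, $\ell_0=1$, $f_0=1$, and the first test of \fastirred{} is trivial. The bound of Theorem~\ref{thm:fastirred} then reads $\Oe(\vF)$ operations in $\Fi_p$. The hypothesis $p\nmid\dx$ is exactly $\char(\Fi)\nmid\dx$, so the theorem applies.

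Next we bound the cost of the univariate steps. At step $k$ we must decide whether $R_k(F)\in\Fi_k[y]$ is a power $P_k^{N_k}$ of an irreducible polynomial, and extract $P_k$.
\begin{itemize}
\item If the loop is entered, $N_0=\dx>1$, so Lemma~\ref{lem:fd<2lvF} gives $f_{k-1}\dx\le 2\vF$ at every step. We also have $\deg R_k(F)\le \dx/(e_{k}f_{k-1})\le\dx$.
\item Compute $P_k$ as a squarefree-type decomposition of $R_k(F)$ via a gcd with its derivative. Power-of-$p$ issues are avoided since $\deg R_k(F)$ divides into $N$ parts with $p\nmid N\mid\dx$, or alternatively use the $N_k$-th approximate root, Proposition~\ref{prop:approx-root}. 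Then test irreducibility of the resulting polynomial $P_k$ over $\Fi_k$.
\item Over a finite field $\Fi_k$ of cardinality $p^{f_{k-1}}$, an irreducibility test of a degree-$\ell$ polynomial (Ben-Or/Rabin style) costs $\Ot(\ell\log|\Fi_k|)$ field multiplications in $\Fi_k$ plus polynomial work. With $\log|\Fi_k|=f_{k-1}\log p$ and $p$ small, $\log p$ is a logarithmic factor in $\dx,\vF$.
\item By Lemma~\ref{lem:costFk}, the total is $\Oe(\ell f_{k-1})\subset\Oe(\dx f_{k-1})\subset\Oe(\vF)$ operations in $\Fi_p$ per step.
\end{itemize}
Summing over the $\O(\log\dx)$ steps preserves $\Oe(\vF)$.

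The main obstacle is making precise the claim that each univariate irreducibility test over the tower $\Fi_k$ costs almost linearly in $\deg(P_k)\,f_{k-1}$ times $\log p$. This requires using the triangular representation with the arithmetic of Lemma~\ref{lem:costFk}, including computing Frobenius powers $y^{|\Fi_k|^i}\bmod P_k$ by repeated squaring. This is exactly where smallness of $p$ is needed, since $\log|\Fi_k|$ enters linearly. I would first invoke a standard finite-field irreducibility test with fast modular composition, and accept the resulting $\Oe$ overhead.
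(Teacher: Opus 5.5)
Your proof is correct and follows the paper's argument: $\ell_0=1$ from the Weierstrass hypothesis, and Lemma~\ref{lem:fd<2lvF} to get $\dx\le 2\vF$ (you use the stronger $f_{k-1}\dx\le 2\vF$). The remaining ingredient is an almost linear prime-power test for $R_k(F)$ over the tower $\Fi_k$; the paper takes it directly from \cite[Corollary 2]{HoLe20} (fast modular composition, extended to towers), using $\deg(R_k(F))f_{k-1}\le\dx$, so each of the at most $\log_2(\dx)$ tests costs $\Oe(\dx\log p)$.

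There is one harmless slip. Your own count of $\Ot(\ell\log|\Fi_k|)$ multiplications in $\Fi_k$ gives $\Oe(\ell f_{k-1}^2\log p)$ operations in $\Fi_p$, not $\Oe(\ell f_{k-1})$. This still fits the bound, since $\ell f_{k-1}\le\dx$ implies $\ell f_{k-1}^2\le \dx f_{k-1}\le 2\vF$.
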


\begin{proof}
  $F$ being Weierstrass, we have $\ell_0=1$ and $\dx\le 2\vF$ by Lemma
  \ref{lem:fd<2lvF}. We can check if $R_k(F)\in \Fi_k[x]$ is a prime
  power (and then compute $P_k$) within $\Oe(\dx\log(p))$ operations
  in $\Fi_p$ using \cite[Corollary 2]{HoLe20}\footnote{This result
    extends to towers of fields following the proof of \cite[Corollary
    3]{HoLe20}} together with $\deg(R_k(F))f_{k-1}\leq\dx$.
\end{proof}
Under the same hypothesis, Montes irreducibility test would require
$\Oe(\vF^2)$ operations in $\Fi_p$, see \cite[Theorem 5.10]{BaNaSt13}.

\subsection{The case $\char(\Fi)\mid \dx$}
\label{ssec:montes-irr}

When $\char(\Fi)\mid\dx$, approximate roots cannot be used as
representatives of types. Following \cite{BaNaSt13}, we compute these
representatives in another way (Proposition \ref{prop:construct}
below, in $\Oe(\dx\wx(F)/\wx(\pi))\subset\Oe(\vF)$ operations in
$\Fi$). Unfortunately, we might have now $q_k\ell_k=1$ (refinement
steps) and the number of iterations is not in $\O(\log(\dx))$
anymore. From \cite[Lemma 5.11]{BaNaSt13}, we can bound the number of
recursive call by $\O(\vF/\ell_0)$\footnote{Equation (5.4) of
  \cite{BaNaSt13} actually shows that it is bounded by
  $\O(\vF/f_{k-1})$}. This leads to a complexity in $\Oe(\vF^2)$
operations in $\Fi$, plus the univariate irreducibility tests. We can
still run only $k\leq\log_2(\dx)$ of them by first checking that
$R_k(F)$ has a single root (i.e. $q_k\ell_k=1$), using a univariate
shift, in $\Ot(\dx)$ operations in $\Fi$, and use the univariate
irreducibility test only when $q_k\ell_k>1$. We proved the following:
\begin{prop}\label{prop:GenIrr}
  If $\char(\Fi)$ divides $\dx$, one can check if $F$ is irreducible
  in less than $\Oe(\vF^2)$ operations in $\Fi$ and at most
  $\log_2(\dx)$ univariate irreducibility test.
\end{prop}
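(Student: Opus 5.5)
The plan is to run the Montes-style loop of \fastirred{}, but with the approximate root in line \ref{algoIrr:represent} replaced by a representative built as in \cite{BaNaSt13} (Proposition \ref{prop:construct}). The cost is then the number of iterations times the cost of one iteration. The precision is handled as in Section \ref{ssec:prec}: we start small and double on failure. By Lemma \ref{lem:trunc}, as long as $F$ is of type $\tb_{k-1}$, a precision $\sigma\in\O(\vF/\dx)$ always suffices. Restarting therefore costs at most a constant factor.

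\textbf{Cost of one iteration.} At step $k$ we perform four tasks.
\begin{itemize}
\item Build $\phi_k$, in $\Oe(\dx\,\wx(F)/\wx(\pi))\subset\Oe(\vF)$ operations in $\Fi$.
\item Compute $\Nc_k^-(F)$ at precision $\sigma$, in $\Ot(\dx\,\sigma)\subset\Ot(\vF)$ by Proposition \ref{prop:Nk}.
\item Compute $R_k(F)$, in $\Oe(\dx\,f_{k-1})$ by Proposition \ref{prop:Rk}. When $N_0>1$ this is $\Oe(\ell_0\vF)$ by Lemma \ref{lem:fd<2lvF}. To avoid the factor $\ell_0$, I would use the sharper count $\deg(R_k(F))\,f_{k-1}\le\dx$ and bound it by $\Oe(\vF)$. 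If this is not achievable, I would absorb $\ell_0$ using $\ell_0\le\dx\le 2\vF$.
\item Decide whether $R_k(F)$ has a single root, i.e.\ whether $q_k\ell_k=1$. This is a univariate shift costing $\Ot(\dx)$, and $\dx\in\O(\vF)$ by Lemma \ref{lem:fd<2lvF}.
\end{itemize}
The degenerate case $N_0=1$ is immediate, since then $F$ is irreducible by Theorem \ref{thm:polygon-residue}. Altogether one iteration costs $\Oe(\vF)$.

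\textbf{Number of iterations.} I would invoke \cite[Lemma 5.11]{BaNaSt13}, namely its equation (5.4), which bounds the number of loop iterations, refinement steps included, by $\O(\vF/f_{k-1})\subset\O(\vF)$. Multiplying gives $\Oe(\vF^2)$ operations in $\Fi$.

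\textbf{Univariate irreducibility tests.} A full univariate irreducibility test is called only when the single-root check fails, that is, when $q_k\ell_k>1$. Each such step multiplies $e_kf_k$ by at least $2$, and $e_kf_k$ divides $\dx$. Hence at most $\log_2(\dx)$ tests are performed.

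\textbf{Main obstacle.} The delicate point is the iteration count: in characteristic dividing $\dx$, refinement steps with $q_k\ell_k=1$ can repeat, so the count has to come from the growth of $\val_k(\phi_k)$, each step raising it by at least $1/e_{k-1}$, compared against the bound on $\vF$. I would rely on \cite{BaNaSt13} for this, checking that their representatives coincide with ours.
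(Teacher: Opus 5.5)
Your outline is the paper's own argument. You use representatives built as in \cite{BaNaSt13} through Proposition~\ref{prop:construct}, take the iteration count from \cite[Lemma 5.11]{BaNaSt13}, and use a univariate shift to detect $q_k\ell_k=1$, so that at most $\log_2(\dx)$ genuine irreducibility tests are run.

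The gap is in the complexity bookkeeping. Your claim that one iteration costs $\Oe(\vF)$ is not justified.
\begin{itemize}
\item Proposition~\ref{prop:Rk} bounds the computation of $R_k(F)$ by $\Oe(\dx f_{k-1})$. It does not bound it by the size $\deg(R_k(F))f_{k-1}\le\dx$ of the output. Your ``sharper count'' is therefore an unproved output-sensitive bound.
\item Lemma~\ref{lem:fd<2lvF} only turns $\Oe(\dx f_{k-1})$ into $\Oe(\ell_0\vF)$, and this factor $\ell_0$ is real. Over $\Zi_2$, take $F=G_1(G_1+2)$ with $G_1$ a monic lift of $P_0$. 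Then $\dx=\vF=2\ell_0$, while $\dx f_0=2\ell_0^2$.
\item Your fallback absorbs $\ell_0\le 2\vF$, which makes one iteration cost $\Oe(\vF^2)$. You also weakened the iteration count to $\O(\vF/f_{k-1})\subset\O(\vF)$. Multiplying the two then only yields $\Oe(\vF^3)$.
\end{itemize}

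The repair is to keep the denominator in the iteration count, which is exactly what the paper does. It uses $\O(\vF/\ell_0)$ iterations, each dominated by $\Oe(\dx f_{k-1})\subset\Oe(\ell_0\vF)$, so the $\ell_0$ cancels and the total is $\Oe(\vF^2)$. Equivalently, you can use the count $\O(\vF/f_{k-1})$ from equation (5.4) of \cite{BaNaSt13} for the final $k$. Since $f_{j-1}\le f_{k-1}$ at every earlier step $j$, this gives $\Oe(\dx\vF)\subset\Oe(\vF^2)$, using $\dx\le 2\vF$.

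The other per-iteration costs you list (building $\phi_k$, $\Nc_k^-(F)$, the shift test) are indeed $\Oe(\vF)$. Your handling of precision doubling is fine, and so is your count of univariate tests via $e_kf_k\mid\dx$.
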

This result is very similar to \cite[Theorem 5.10]{BaNaSt13} (we just
use some minor complexity improvements in some intermediate results,
described in Section \ref{ssec:prec}). We still call \fastirred{} the
underlying algorithm.

%%% Local Variables:
%%% mode: latex
%%% TeX-master: "issac"
%%% End:

\section{Generalised slope factorisation}
\label{sec:hensel}
Algorithm \fastirred{} of the previous section will either prove that
$F$ is irreducible, either provide a type $\tb_{k-1}$ together with a
representative $\phi_k$ such that either $\Nc_k(F)$ has two or more
distinct slopes, either $R_k$ is not a power of an irreducible
polynomial. To summarise these two possibilities, it is convenient to
consider a slight modification of the residual polynomial operator
$R_k$ attached to the \textit{right hand slope} $\lambda_k=-m_k/q_k$
of $\Nc_k^{-}(F)$:
\begin{dfn}\label{dfn:Rtilde}
  The \emph{modified residual polynomial} of $G\in \Li[x]$ (attached
  to $\lambda_k$) is defined by
  $\Rt_k(G)(y):=y^{i_k(G)} R_k(G)(y^{q_k})$.
\end{dfn}
If $F$ is not irreducible, we get
\[
  \Rt_k(F) = h_0 h_1\cdots h_r
\]
with $h_0\sim{} y^{i_k(F)}$, $h_1,\dots,h_r\in\Fi_k[y^{q_k}]$ coprime,
this factorisation containing at least two different non trivial
factors.

From Theorem \ref{thm:polygon-residue}, there exist
$F_0,F_1,\cdots,F_r\in\Ai[x]$ monic such that
\[
  F=F_0 F_1 \cdots F_r \quad \text{ with }\quad \Rt_k(F_i)\sim{}h_i,\quad
  0\leq i\leq r.
\]
This section describes a Hensel-like algorithm to compute such a
factorisation up to an arbitrary precision with complexity almost
linear in the size of the output.

\subsection{A (slightly) more general problem}
\label{ssec:pb-hensel}
We express our problem in a slightly more general context, that will
be useful in Section \ref{sec:divideandconquer}. Let $F\in \Ai[x]$ be
a monic polynomial, $\tb=\tb_{k-1}$ be a type of order $k-1$ dividing
$F$ and $\phi=\phi_k$ a representative of $\tb$. We assume that either
$F_{\tb}$ is a proper factor of $F$, either $F_{\tb}$ admits a non
trivial factorisation at order $k$ induced by Theorem
\ref{thm:polygon-residue}.

Let $\val=\val_k$ the augmented valuation built from $\tb$ (i.e. from
$\val_{k-1}, \phi_{k-1}$ and $\lambda_{k-1}$) and $\Nc(F)=\Nc_k(F)$
the generalised Newton polygon defined by $\val$ and $\phi$. Denote
$\lambda=-\frac{m}{q}\in \Qi^-$ the \textit{right hand slope} of
$\Nc^-(F)$ and $R(F)$ (resp. $\Rt(F)$) the (modified) residual
polynomial defined by $\tb$, $\phi$ and $\lambda$.

\begin{lem}\label{lem:leadcoeff}
  Let $G\in \Ai[x]$ monic and $g=\Rt(G)$. If $G$ is of type $\tb$,
  then
  \[
    \Nc(G)=\Nc^-(G),\,\,
    \deg(G)=\deg(\phi)\deg(g),\,\,\lc{}{g}=R(\phi)^{\deg(g)}.
  \]
  If $\tb$ does not divide $G$ then $g\in \Fi_k^\times$.
\end{lem}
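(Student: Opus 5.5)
The plan is to derive everything from the structure of the $\phi$-adic expansion $G=\sum_{i=0}^{n} a_i\phi^i$ together with the standard facts on augmented valuations and residual polynomials recalled from \cite{GuMoNa12}. Here $\deg(G)=n\deg(\phi)$ plus a remainder, and $a_n$ is the leading coefficient once we know $\phi$ divides $G$ in degree. The key input is that, for $G$ monic of type $\tb=\tb_{k-1}$, we have the identity $\val(G)=\deg(G)\val(\phi)/\deg(\phi)$. This follows from \cite[Theorem 2.11, Lemma 2.17]{GuMoNa12}: for any monic $G$ the value $\val(G)/\deg(G)$ is bounded by $\val(\phi)/\deg(\phi)$, with equality exactly when $\tb$ divides $G$ with full degree. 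Granting this, I would argue as follows.

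\emph{Step 1: the degree formula and the vertex at $\deg(G)/\deg(\phi)$.} Since $G$ is of type $\tb$, every root of $G$ is ``$\tb$-close'' to roots of $\phi$. Theorem \ref{thm:polygon-residue} (applied at order $k-1$ with representative $\phi_{k-1}$) shows that $R_{k-1}(G)\sim P_{k-1}^{N}$ with $\deg(G)=N\deg(\phi)$, using $\deg(\phi)=\deg(\phi_{k-1})\,q_{k-1}\ell_{k-1}\cdot$(previous degree). Hence $\deg(\phi)\mid\deg(G)$. Put $n=\deg(G)/\deg(\phi)$. Then $a_n=1$, so the point $(n,\val(\phi^n))=(n,n\val(\phi))$ lies in the cloud, and it is the rightmost point.

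\emph{Step 2: $\Nc(G)=\Nc^-(G)$.} Every point $(i,\val(a_i\phi^i))$ satisfies $\val(a_i\phi^i)\ge \val(G)=n\val(\phi)$, by the equality case above and the fact that $\val(G)=\min_i\val(a_i\phi^i)$ (a property of augmented valuations). So the last vertex is at minimal height. Therefore every side has slope $\le 0$, and the horizontal part has length zero. The latter holds because $\val(a_i\phi^i)=\val(G)$ for $i<n$ would force, via $R_{k-1}$, an extra factor of $R_{k-1}(G)$ coprime to $P_{k-1}$, or a degree excess. Hence $\Nc(G)$ is principal.

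\emph{Step 3: degree and leading coefficient of $g$.} By Definition \ref{dfn:Rtilde}, $\deg g = i_k(G)+q\deg R(G)$. The top point of $S_k(G)$ is the rightmost point $(n,n\val(\phi))$, so $\deg(g)=n$, which gives $\deg(G)=\deg(\phi)\deg(g)$. The leading term of $R(G)$ comes from $i=n$, namely $z^{\tau}R_{k-1}(\phi^n)(z)$. By multiplicativity of $R_{k-1}$ on products up to the normalising power of $z_{k-1}$ (the $\tau$ twist is designed exactly for this), this equals $R(\phi)^{n}$.

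\emph{Step 4: $\tb$ not dividing $G$.} When $\tb$ does not divide $G$, $\val(G)$ is attained at $i=0$ only, with $\val(a_0)<\val(a_i\phi^i)$ for $i>0$ in the strict inequality case of \cite[Lemma 2.17]{GuMoNa12}. Then $S_k(G)$ is the single point $i=0$, so $g$ is a nonzero constant.

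The main obstacle is Step 3's leading-coefficient identity. Tracking the exponents $\tau_{k,i}$ and the normalisation of $R_{k-1}$ requires the precise multiplicativity lemma \cite[Theorem 2.26]{GuMoNa12}, and I would check it directly on $\phi^n$.
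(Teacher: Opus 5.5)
Unpacking \cite{GuMoNa12}, which is all the paper's proof does, is the right route, but Step 3 has a genuine gap. You assert that the top point of $S(G)$ is the right endpoint $(n,n\val(\phi))$. Steps 1--2 only give that every side of $\Nc(G)$ has negative slope. The set $S(G)$ is cut out by the lowest line of slope $\lambda=-m/q$, and $\lambda$ is the right-hand slope of $\Nc^-(F)$: it is fixed by $F$, not by $G$. The right endpoint lies on that line if and only if every side of $\Nc(G)$ has slope $\le\lambda$, and being of type $\tb$ does not force this. For instance, take $k=1$, $\tb=[y]$ over $\Zi_p$, $\phi=x$ and $F=x^2-p^2$, so that $\lambda=-1$. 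Then $G=x^2-p$ is of type $\tb$, and $\Nc(G)$ is the segment from $(0,1)$ to $(2,0)$, of slope $-1/2>\lambda$. Hence $S(G)=\{(0,1)\}$, $g$ is a nonzero constant, and $\deg(G)=2\neq\deg(\phi)\deg(g)=0$. So the lemma tacitly assumes that $G$ divides $F$ (this is how it is used, for $F_\tb$ and the $F_i$), and your proof must use that hypothesis. The missing step is as follows. By the theorem of the product in \cite{GuMoNa12}, $\Nc^-(G)$ is a summand of $\Nc^-(F)$, so its slopes are among those of $\Nc^-(F)$, hence $\le\lambda$ since $\lambda$ is the least steep one. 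Combined with $\Nc(G)=\Nc^-(G)$, the line of slope $\lambda$ through $(n,n\val(\phi))$ supports $\Nc(G)$, so $\max I_k(G)=n$. Only then do you get $\deg(g)=n$, and $\lc{}{g}$ equal to the residual coefficient of the term $\phi^n$, which is $R(\phi^n)=R(\phi)^n$ by multiplicativity, as you say.

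Step 4 also rests on a false claim. Your key input is stated as an equivalence, but only the implication ``$G$ of type $\tb$ $\Rightarrow$ $\val(G)\deg(\phi)=\deg(G)\val(\phi)$'' holds. For $k=1$, every monic polynomial has Gauss valuation $0$, so equality always holds. Accordingly, the strict inequality $\val(a_0)<\val(a_i\phi^i)$ for $i>0$ can fail when $\tb\nmid G$: with $\tb=[y]$, $\phi=x$, $G=x+1$ one has $\val(a_0)=\val(a_1\phi)=0$. The conclusion of Step 4 is nevertheless true, for a different reason. $\tb\nmid G$ means $\mathrm{ord}_{\tb}(G)=0$, and $\mathrm{ord}_{\tb}(G)$ is the length of $\Nc^-(G)$. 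This same fact from \cite{GuMoNa12} is what makes your Step 2 rigorous, where it gives length $\mathrm{ord}_{\tb}(G)=N_{k-1}=n$. Hence all sides of $\Nc(G)$ have slope $\ge 0$. Since $m>0$, the quantity $mi+qj$ increases strictly along $\Nc(G)$, so $S(G)$ is the left endpoint, at abscissa $0$ because $\phi\nmid G$, and $g\in\Fi_k^\times$. Finally, the degree formula of Step 1 does not come from Theorem \ref{thm:polygon-residue}. It follows by induction from the lemma at order $k-1$, together with $\deg(\phi_k)=q_{k-1}\ell_{k-1}\deg(\phi_{k-1})$; at that order the relevant slope $\lambda_{k-1}$ is that of $G$ itself, so the issue above does not arise.
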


\begin{proof}
  See e.g. \cite{GuMoNa12}.
\end{proof}

Thanks to this lemma and our assumption on $F$, we get
\[
  \Rt(F) = h_0 h_1\cdots h_r h_\infty
\]
with $h_0=R(\phi)^s y^s$, $h_\infty \in \Fi_k^\times$ and
$h_1,\ldots,h_r\in \,\Fi_k[y^q]$ powers of some coprime irreducible
polynomials satisfying $h_i(0)\neq 0$ and
$\lc{}{h_i}=R(\phi)^{\deg(h_i)}$. Moreover, there are uniquely
determined monic polynomials $F_0,\ldots, F_r,F_{\infty}\in \Ai[x]$
such that
\[
  F=F_0 F_1 \cdots F_r F_\infty \quad \text{ with }\quad
  \Rt(F_i)=h_i,\quad i=0,\ldots,t,\infty.
\]
In particular, we have $F_\tb=F_0 F_1 \cdots F_r$ and $F=F_\tb F_\infty$.

Such a factorisation can be seen as a generalisation of the ``slope
factorisation'' of \cite{CaRoVa16}. It is natural to express the
precision using the augmented valuation $\wx=\val_{k+1}$ defined by
$\val$, $\phi$ and $\lambda$ following \eqref{eq:defvk}. Thanks to a
dichotomic argument, we are reduced to compute $F=\Gt \Ht$ (up to some
precision w.r.t. $\wx$), with $\Gt, \Ht$ products of some of the
$F_i$'s above. We proceed as follows:
\begin{enumerate}
\item \textbf{Initialisation.} Compute $G, H\in \Ai[x]$ and
  $U, V\in\Li[x]$ such that $\wx(F-G\,H)>\wx(F)$ and
  $\wx(U\,G+V\,H-1)>0$.
\item \textbf{Lifting.} Run the classical Hensel Lemma with adapted
  truncations to compute $\Gt$ and $\Ht$ such that
  $\wx(F-\Gt\Ht)\ge \wx(F)+n$ for a given precision $n$.
\end{enumerate}

\subsection{Initialisation}\label{ssec:slopeinit}

A key point is to compute polynomials with prescribed (modified)
residual polynomial. We start with a definition.
  
\begin{dfn}\label{dfn:monic}
  A polynomial $H\in\Ai[x]$ is said \emph{monic in $\phi$} if it has
  $\phi$-adic expansion $H=\sum_{i=0}^N a_i\phi^i$ with $a_N=1$. We
  say that $H$ is strongly monic in $\phi$ (with respect to $\wx$) if
  moreover $\wx(H)=N\wx(\phi)$.
\end{dfn}

\begin{prop}\label{prop:construct}
  Let $h=y^s\,\h(y^q)$ with $\h\in\Fi_k[y]$ and $W\in\Zi$.
  We can compute $H\in \Li[x]$ of smallest degree such that
  \[
    \Rt(H)=h,\quad \wx(H)=W, \quad \deg(H)< (\deg(h)+1)\deg(\phi).
  \]
  It takes
  $\Ot\left(\deg(H)(\deg(h)+1)\,\frac{\wx(\phi)}{\wx(\pi)}\right)$
  operations in $\Fi$. Furthermore:
  \begin{enumerate}
  \item If $W\ge \deg(h) \wx(\phi)$, then $H\in\Ai[x]$.
  \item If $W=\deg(h) \wx(\phi)$ and $\lc{}{h}=R(\phi)^{\deg(h)}$,
    then $H$ is strongly monic in $\phi$.
  \end{enumerate}
\end{prop}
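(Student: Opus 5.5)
Since $\wx=\val_{k+1}$ is built from $\val$, $\phi$ and $\lambda=-m/q$ via \eqref{eq:defvk}, for a $\phi$-adic expansion $H=\sum_i a_i\phi^i$ we have $\wx(H)=\min_i\bigl(q\,\val(a_i\phi^i)+m\,i\bigr)$. The operator $\Rt(H)$ only sees the indices $i$ on the line of slope $\lambda$ through the lowest points. The construction is therefore the usual "inverse of the residual operator" one, as in \cite[Proposition 2.10 / Lemma 2.8]{BaNaSt13}. Write $h=\sum_j c_j y^{j}$ with $c_j\neq 0$ only for $j\equiv s\pmod q$, and set $H=\sum_{j} a_j\phi^{j}$ with $j$ ranging over the support of $h$, so that $\deg(H)<(\deg(h)+1)\deg(\phi)$ automatically. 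Each $a_j$ has degree $<\deg(\phi)$, is chosen with the prescribed value
\[
\val(a_j\phi^j)=\frac{W-m\,j}{q},
\]
and is chosen so that its order-$k$ residual datum $z_{k-1}^{\tau_{k,j}}R_{k-1}(a_j\phi^j)(z_{k-1})$ equals $c_j$. The integrality of $(W-mj)/q$ is guaranteed up to shifting by a fixed residue class, and is compatible precisely because $j\equiv s \pmod q$ and $i_k(H)=s$ is fixed. With this choice all points $(j,\val(a_j\phi^j))$ lie on one line of slope $\lambda$, $\wx(H)=W$, $I_k(H)$ is the support of $h$, and hence $\Rt(H)=h$. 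Minimality of degree holds because the top index must be $\deg(h)$, forcing $\deg(H)\ge \deg(h)\deg(\phi)$ with this leading term.

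The key subroutine is the recursive one. Given $c\in\Fi_k$ and a target value $u$ for $\val$, we need $a\in\Li[x]$ of degree $<\deg(\phi)$ with $\val(a)=u$ and prescribed residual image. Write $c$ as a polynomial in $z_{k-1}$ of degree $<\ell_{k-1}$ via the triangular representation of $\Fi_k$. Each monomial $z_{k-1}^t$ is then realised, recursively at order $k-1$, by a combination of $\phi_{k-1}^{t'}$ times powers of $\pi$ and of $\phi_1,\dots,\phi_{k-2}$, using the Bezout pair $(\alpha_{k-1},\beta_{k-1})$ to hit the exact valuation and the correct exponent $\tau$ of $z_{k-1}$. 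This is exactly the content of \cite[Section 2]{GuMoNa12} on "polynomials with prescribed residual polynomial". I expect this to be the main obstacle: proving that this recursive lift has degree $<\deg(\phi)$, and that the normalisation $z_{k-1}^{\tau_{k,j}}$ is correctly compensated.

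For the two extra points, the coefficients $a_j$ have $\val(a_j\phi^j)=(W-mj)/q$, with $\val(\phi^j)=j\val(\phi)$ and $\wx(\phi)=q\val(\phi)+m$. Then $W\ge \deg(h)\wx(\phi)$ gives $\val(a_j)\ge(\deg(h)-j)\wx(\phi)/q\ge0$ for $j\le \deg(h)$. Since $\val\le$ Gauss-type valuation bounds imply integrality of coefficients of degree $<\deg\phi$ (the value of $\val$ on such $a$ is a nonnegative combination of $\val_0$ data), this yields $a_j\in\Ai[x]$, hence $H\in\Ai[x]$. If moreover $W=\deg(h)\wx(\phi)$, the top coefficient has $\val(a_{\deg h})=0$. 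Its residual image must be $\lc{}{h}=R(\phi)^{\deg h}$, which is exactly the residual image of the constant $1$ (via Lemma \ref{lem:leadcoeff} applied to $\phi^{\deg h}$). So we may take $a_{\deg h}=1$, giving $H$ monic in $\phi$ with $\wx(H)=\deg(h)\wx(\phi)$, i.e. strongly monic.

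The complexity follows by counting. There are at most $\deg(h)+1$ coefficients, and each recursive lift manipulates polynomials of degree $<\deg(\phi)\le\deg(H)$. These polynomials are needed only up to $\pi$-adic precision $\O(\wx(\phi)/\wx(\pi))$ per coefficient relative to its valuation. Fast multiplication then gives $\Ot(\deg(H)\,\wx(\phi)/\wx(\pi))$ per coefficient, and summing yields the stated bound.
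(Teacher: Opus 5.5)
Your plan (put the terms $a_j\phi^j$ on the line of slope $\lambda$ through $W$ and lift each residual coefficient recursively) is the construction behind \cite[Lemma 5.7]{BaNaSt13}. The paper's proof also defers to that lemma: it builds an integral $G$ with $\wx(G)=W+n\wx(\pi)\geq\deg(h)\wx(\phi)$ and returns $\pi^{-n}G$. So leaving the recursive lift to the literature is acceptable.

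The first genuine problem is your integrality argument. The claim that $\deg a<\deg\phi$ and $\val(a)\geq 0$ force $a\in\Ai[x]$ is false. Lemma~\ref{lem:w-v0} only gives $\val_0(a)\wx(\pi)\geq\wx(a)-\wx(\phi)$. For example, take $k=2$ with $q_1=m_1=1$ and $\ell_1=2$, so $\deg\phi_2=2\deg\phi_1$; then $\val_2(\pi^{-1}\phi_1)=0$. For $j<\deg(h)$ your coefficients are nevertheless integral, since $\wx(a_j)=W-j\wx(\phi)\geq\wx(\phi)$ and Lemma~\ref{lem:w-v0} applies. The trouble is the top coefficient, whose value $\wx(a_{\deg h})=W-\deg(h)\wx(\phi)$ may be $0$.

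There, integrality depends on the residual coefficient, not on the value. In the same example, an integral $a=b_0+b_1\phi_1$ with $\val_2(a)=0$ has $\val_0(b_0)=0$ and $\val_2(b_1\phi_1)\geq 1$. Hence $I_1(a)=\{0\}$, the twist exponent is $\tau_{2,0}=0$, and the residual coefficient lies in $\Fi_1$. So for a constant $h\in\Fi_2\setminus\Fi_1$ and $W=0$, no integral $H$ exists at all. No valuation estimate can therefore give point~1 in this generality; it needs a hypothesis on the top residual coefficient, which the paper's proof leaves to \cite[Lemma 5.7]{BaNaSt13}. What does hold, using Lemma~\ref{lem:w-v0} in place of your claim, is integrality when $W\geq(\deg(h)+1)\wx(\phi)$. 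It also holds in the situation of point~2, where you take $a_{\deg h}=1$; your use of Lemma~\ref{lem:leadcoeff} there is fine.

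The second gap is the complexity count. It covers only the construction of the $a_j$ and omits the recombination of $H=\sum_j a_j\phi^j$ into the dense representation. That is the dominant step, and the only point where the paper's proof adds something to \cite{BaNaSt13}. By Lemma~\ref{lem:w-v0}, $\val_0(H)$ can be as low as $(W-(\deg(h)+1)\wx(\phi))/\wx(\pi)$, while all terms up to $\pi$-adic order about $W/\wx(\pi)$ must be kept. So this step runs at precision $P\approx(\deg(h)+1)\wx(\phi)/\wx(\pi)$, not $\O(\wx(\phi)/\wx(\pi))$.

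Horner's rule performs $\deg(h)$ multiplications by $\phi$ of polynomials of degree up to $\deg(H)$ at precision $P$. That costs $\Ot(\deg(h)\deg(H)P)$, a factor $\deg(h)$ above the claim; this is the bound of \cite{BaNaSt13}. The missing idea is to evaluate the $\phi$-adic expansion by divide and conquer (cf.\ \cite{HaNo11}): set $H=H_0+\phi^{\lceil D/2\rceil}H_1$ recursively with $D=\deg(h)$, obtaining the powers of $\phi$ by repeated squaring. This costs $\Ot(\deg(H)\,P)$ and yields the stated bound. The same device is needed inside your recursive lift, where each $a_j=\sum_t b_t\phi_{k-1}^t$ must itself be assembled.
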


\begin{proof}
  First use \cite[Lemma 5.7]{BaNaSt13} to compute $G\in\Ai[x]$ such that
  $\Rt(G)=h$ and $\wx(G)=W+\wx(\pi)\,n$ with
  $n=\left\lceil \frac{\deg(h) \wx(\phi)
      -W}{\wx(\pi)}\right\rceil$. We can divide their complexity
  result by a factor $\deg(h)$ by replacing Horner evaluation therein
  by a divide and conquer strategy (see e.g. \cite{HaNo11}). Finally,
  output $H=\pi^{-n}\,G$.
\end{proof}

\begin{rmk}\label{rem:FieldTower}
  The representation of $\Fi_k$ as a tower of fields as described
  before Lemma \ref{lem:costFk} is used in the proof of \cite[Lemma
  5.7]{BaNaSt13}. In particular, no operations in $\Fi_k$ is
  performed, so that the complexity can be expressed with $\Ot()$ and
  not $\Oe()$.
\end{rmk}

\begin{lem}\label{lem:Rt-w}
  Let $G,H\in \Li[x]$ such that $\wx(G)=\wx(H)$.
  \begin{enumerate}
  \item  $\tilde{R}_k(G)=\tilde{R}_k(H)$ if and only if $\wx(G-H)>\wx(G)$.
  \item If $\wx(G+H)=\wx(G)$, then $\tilde{R}_k(G+H)=\tilde{R}_k(G)+\tilde{R}_k(H)$.
  \end{enumerate}
\end{lem}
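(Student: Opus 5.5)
The plan is to reduce both items to the structure of the residual polynomial operator and its link with the valuation $\wx=\val_{k+1}$. Write $G=\sum a_i\phi^i$ and $H=\sum b_i\phi^i$ for the $\phi$-adic expansions. By definition (\ref{eq:defvk}), $\wx(G)=\min_i(q\,\val(a_i\phi^i)+m\,i)$, which is $q$ times the minimum of $\val(a_i\phi^i)+|\lambda| i$. So $\wx(G)$ is attained exactly on the indices $i$ lying on the line of slope $\lambda$ supporting $\Nc(G)$, that is on $I_k(G)$ in the notation of $S_k$. The first step is to check that, once $\lambda$ is fixed, $\Rt(G)$ depends only on the terms $a_i\phi^i$ with $i\in I_k(G)$. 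Indeed, $\Rt(G)(y)=\sum_{i\in I_k(G)} c_i\, y^{i}$ with $c_i=z_{k-1}^{\tau_{k,i}}R_{k-1}(a_i\phi^i)(z_{k-1})$: the shift $y^{i_k(G)}$ together with the substitution $y\mapsto y^q$ turns the exponent $(i-i_k(G))/q$ back into $i$. This is exactly why the modified residual polynomial was introduced: it removes the dependence on $i_k(G)$.

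Next, for each $i$, the coefficient $c_i$ depends only on the class of $a_i$ modulo $\{b:\val(b\phi^i)>\val(a_i\phi^i)\}$, and additively so. I will invoke the order $k-1$ analogue: $R_{k-1}(a)=R_{k-1}(b)$ when $\val(a-b)>\val(a)=\val(b)$, and $R_{k-1}$ is additive when no cancellation occurs. This I take from \cite{GuMoNa12} (Prop.~2.9/Thm.~2.26 type statements), or else prove by induction on $k$ using the same argument one level down. I also need the normalising factor $z_{k-1}^{\tau_{k,i}}$ to depend only on $i$ and on $\val(a_i\phi^i)$; this holds once the level is fixed. One subtlety is that $\tau_{k,i}$ involves $i_{k-1}(a_i)$, so I should check that it is compatible with adding terms. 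I expect the additivity of $z^{\tau}R_{k-1}$ under sums without valuation drop to be the most delicate point, because of exactly this dependence on $i_{k-1}$. To handle it, I would use the known fact that the map $a\mapsto z_{k-1}^{\tau}R_{k-1}(a)(z_{k-1})$, restricted to elements of fixed value, is the residue map of the graded algebra of $\val$, which is additive (\cite{GuMoNa12}, the ``residual coefficient'' reduction map). I will rely on this as the main black box.

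For item~2, assume $\wx(G)=\wx(H)=\wx(G+H)=:W$. The expansion of $G+H$ is $\sum(a_i+b_i)\phi^i$. For every $i$ we have $\val((a_i+b_i)\phi^i)+|\lambda| i\ge W/q$, and an index contributes to $\Rt(G+H)$ only when equality holds. If $i$ contributes to neither $G$ nor $H$, then it cannot contribute to the sum either. If it contributes to $G$ only, then $c_i(G+H)=c_i(G)$ and $c_i(H)=0$. If it contributes to both, additivity gives $c_i(G+H)=c_i(G)+c_i(H)$, where both sides vanish exactly when the value increases. Summing over $i$ gives $\Rt(G+H)=\Rt(G)+\Rt(H)$.

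For item~1, apply the same coefficientwise analysis to $G-H$. The condition $\wx(G-H)>W$ holds iff, for every $i$, $\val((a_i-b_i)\phi^i)+|\lambda|i>W/q$. By the order $k-1$ criterion this is equivalent to $c_i(G)=c_i(H)$ for all $i$, which means $\Rt(G)=\Rt(H)$. Here it matters that the residue of a term of value exactly $W/q$ is nonzero, so that an index belongs to $I_k$ iff its coefficient $c_i$ is nonzero.
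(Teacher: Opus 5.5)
Your argument is correct, but it is organised differently from the paper's proof.

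\textbf{The paper's route.} The paper never unpacks $\Rt$. For item 1 it notes that, since $R(G)(0)\neq 0$, $\Rt(G)=\Rt(H)$ iff $R(G)=R(H)$ and $i_k(G)=i_k(H)$. Under $\wx(G)=\wx(H)$ this amounts to $R(G)=R(H)$ and $S(G)=S(H)$, and it then quotes \cite[Proposition 2.8]{GuMoNa12}. For item 2 it observes that $S_k(G),S_k(H),S_k(G+H)$ lie on one segment of slope $\lambda$. It then quotes the additivity of $R$ along a common segment \cite[Lemma 2.23, eq.~(19)]{GuMoNa12}.

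\textbf{Your route.} You use the identity $\Rt(G)=\sum_{i\in I_k(G)}c_i\,y^i$. This makes $i_k$ and the segments disappear, and reduces both items to facts about a single residual coefficient at fixed value: it is additive, vanishes exactly when the value jumps, and is nonzero otherwise. This shows directly why the modified operator is additive on the nose.

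\textbf{Removing your black box.} You can avoid the black box you were worried about by induction on $k$. Your remark that $z_{k-1}^{\tau_{k,i}}$ depends only on $i$ and the value is not right as stated, but only the product matters. The $i_{k-1}(a_i)$ in $\tau_{k,i}$ cancels the shift in $R_{k-1}$. So if $\Rt_{k-1}(a_i)=\sum_j c'_j y^j$, the coefficient equals $\sum_j c'_j z_{k-1}^{(j+\beta_{k-1}V)/q_{k-1}}$, with $V$ fixed by $i$ and the common value. That is, it is a fixed $\Fi_{k-1}$-linear image of $\Rt_{k-1}(a_i)$. The coefficient facts you need are then:
\begin{itemize}
\item items 1 and 2 at level $k-1$ (for the cancellation case, apply item 1 to $a_i$ and $-b_i$);
\item nonvanishing, which holds because $\deg(a_i)<\deg(\phi_k)$ puts these exponents in a window of $\ell_{k-1}$ consecutive integers, while $z_{k-1}\neq 0$ has degree $\ell_{k-1}$ over $\Fi_{k-1}$.
\end{itemize}

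\textbf{Comparison.} The paper's proof is shorter but relies entirely on \cite{GuMoNa12}, and it has to translate between $\Rt$ and the pair $(R,S)$. Yours needs a coefficient-level nonvanishing fact, but it shows why the lemma holds and can be made self-contained.
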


\begin{proof}
  (1) Since $R(G)$ and $R(H)$ have non zero constant term, Definition
  \ref{dfn:Rtilde} implies that $\tilde{R}(G)=\tilde{R}(H)$ if and
  only if $R(G)=R(H)$ and $i_k(G)=i_k(H)$. As $\wx(G)=\wx(H)$, this is
  equivalent to that $R(G)=R(H)$ and $S(G)=S(H)$. \cite[Proposition
  2.8]{GuMoNa12} concludes.

  (2) Equality $\wx(G+H)=\wx(G)$ implies that
  $S_k(G),S_k(H), S_k(G+H)$ lie on a same segment $T$ of slope
  $\lambda$. We conclude with \cite[Lemma 2.23 and eq. (19)]{GuMoNa12},
  together with Definition \ref{dfn:Rtilde}.
\end{proof}

Let us consider now our initialisation problem. By the discussion
above, we can assume that $\Rt(F)=g\,h$ with $g,h$ coprime,
$g\in\Fi_k[y^q]$ and $h(y) =y^s\,\h(y^q)$, with $\h\in \Fi_k[y]$,
$\h(0)\ne 0$. We can additionally assume that
$\lc{y}{h}=R(\phi)^{\deg(h)}$. Notice that we might have
$g\in \Fi_k^\times$.

\begin{prop}\label{prop:init}
  Let $g,h$ as above. One can compute $G, H\in \Ai[x]$ and
  $U, V\in\Li[x]$ such that $H$ is strongly monic in $\phi$,
  $\Rt(G)=g$, $\Rt(H)=h$, $\deg(G)+\deg(H)\le \deg(F)$,
  $\wx(F-G\,H)>\wx(F)$, $\deg(U)<\deg(H)$, $\deg(V)<\deg(G)$ and
  $\wx(U\,G+V\,H-1)>0$ in less than $\Ot(\dx\wx(F)/\wx(\pi))$
  operations in $\Fi$.
\end{prop}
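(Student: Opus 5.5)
Three tasks: build $H$, then $G$, then the Bézout pair $U,V$. Each comes from a residual identity lifted through Proposition \ref{prop:construct}, and Lemma \ref{lem:Rt-w} turns residual equalities into the $\wx$-inequalities required.

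\textbf{Step 1: $H$.} Apply Proposition \ref{prop:construct} to $h=y^s\h(y^q)$ with $W=\deg(h)\wx(\phi)$. Since $\lc{y}{h}=R(\phi)^{\deg(h)}$, part 2 gives $H\in\Ai[x]$ strongly monic in $\phi$ with $\Rt(H)=h$ and $\deg(H)=\deg(h)\deg(\phi)$.

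\textbf{Step 2: $G$.} Apply Proposition \ref{prop:construct} to $g$ with $W=\wx(F)-\wx(H)$, giving $\Rt(G)=g$ and $\wx(G)=\wx(F)-\wx(H)$.
\begin{itemize}
\item \emph{Integrality.} I expect $W\ge\deg(g)\wx(\phi)$, so $G\in\Ai[x]$ by part 1. To see this, decompose $F=F_\tb F_\infty$. Lemma \ref{lem:leadcoeff} gives $\wx$ of each factor of type $\tb$ as $\deg(\Rt)\wx(\phi)$. The factor $F_\infty$ has constant residual polynomial and nonnegative $\wx$, so $\wx(F)\ge\deg(\Rt(F))\wx(\phi)$.
\item \emph{Approximation.} By multiplicativity of $\wx$ and of $\Rt$ (the residual polynomial of a product is the product, \cite{GuMoNa12}), we have $\wx(GH)=\wx(F)$ and $\Rt(GH)=gh=\Rt(F)$. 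Lemma \ref{lem:Rt-w}(1) then yields $\wx(F-GH)>\wx(F)$.
\item \emph{Degrees.} The bound $\deg(G)+\deg(H)\le\deg(F)$ follows from the minimal degree choice in Proposition \ref{prop:construct}: $\deg(G)<(\deg(g)+1)\deg(\phi)$, compared with $\deg(F)\ge\deg(gh)\deg(\phi)$. I expect a small argument about degree in $\phi$ here. If $g$ is constant, take $G$ constant; since $F$ is monic with $F_\infty$ possibly nontrivial, I would adjust with $F_\infty$'s degree.
\end{itemize}

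\textbf{Step 3: $U,V$.} Since $g$ and $h$ are coprime in $\Fi_k[y]$, an extended gcd over $\Fi_k$ gives $a,b$ with $ag+bh=1$. The cost is $\Oe(\deg(gh)f_{k-1})$, or $\Ot$ using the tower representation of Remark \ref{rem:FieldTower}. Reduce so that $\deg a<\deg h$ and $\deg b<\deg g$. The main obstacle is the structural mismatch: $h$ carries the factor $y^s$ and $g\in\Fi_k[y^q]$, so $a$ and $b$ must be shown to have the shape $y^{t}\cdot(\text{polynomial in }y^q)$ so that they are modified residual polynomials of some elements of $\Li[x]$. I would handle this by solving separately in each residue class of exponents modulo $q$. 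Each summand $ag$ and $bh$ must then be a monomial-shift of a polynomial in $y^q$ with exponents consistent with the constant $1=\Rt(1)$, using $\wx(1)=0$.

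Then lift $a,b$ through Proposition \ref{prop:construct} to $U,V\in\Li[x]$ with $\wx(U)=-\wx(G)$ and $\wx(V)=-\wx(H)$. Negative valuations are allowed, which is why $U,V\in\Li[x]$ rather than $\Ai[x]$. This gives $\Rt(UG)=ag$ and $\Rt(VH)=bh$, both of $\wx$-value $0$. The degree bounds $\deg U<\deg H$ and $\deg V<\deg G$ follow from $\deg a<\deg h$ together with the minimal-degree output, possibly after a $\phi$-adic reduction modulo $H$ (which is strongly monic in $\phi$).

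\textbf{Conclusion and cost.} Lemma \ref{lem:Rt-w}(2), applied to $UG$ and $VH$ (their sum has valuation $0$ since $ag+bh\neq0$), together with part (1) against the constant $1$, gives $\wx(UG+VH-1)>0$. For the complexity, each call to Proposition \ref{prop:construct} costs $\Ot(\deg(\cdot)(\deg(h)+1)\wx(\phi)/\wx(\pi))$. Using $\deg(h)\wx(\phi)\le\wx(F)$ and $\deg(\cdot)\le\dx$, the multiplications and gcd stay within $\Ot(\dx\,\wx(F)/\wx(\pi))$.
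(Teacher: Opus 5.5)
Your route is the paper's: Proposition~\ref{prop:construct} for $H$ and $G$, Lemma~\ref{lem:Rt-w}(1) for $\wx(F-GH)>\wx(F)$, an extended gcd in $\Fi_k[y]$ lifted again by Proposition~\ref{prop:construct}, and Lemma~\ref{lem:Rt-w}(1)--(2) for $\wx(UG+VH-1)>0$. Several of your details are spelled out more explicitly than in the paper. One is the integrality check $\wx(F)\ge\deg(\Rt(F))\,\wx(\phi)$ via $F=F_\tb F_\infty$. Another is the normalisation $\wx(U)=-\wx(G)$, $\wx(V)=-\wx(H)$, which is the one that gives $\wx(UG)=\wx(VH)=0$. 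Your residue-class idea is also right. It becomes a proof once you project $ag+bh=1$ onto the exponents $\equiv 0 \bmod q$ and use uniqueness of the B\'ezout pair with $\deg(a)<\deg(h)$, $\deg(b)<\deg(g)$. This forces $a\in\Fi_k[y^q]$ and $b\in y^{t}\Fi_k[y^q]$ with $t\equiv -s \bmod q$.

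\textbf{The gap is the bound $\deg(G)+\deg(H)\le\deg(F)$.} Your estimates $\deg(G)<(\deg(g)+1)\deg(\phi)$ and $\deg(H)=\deg(h)\deg(\phi)\le\deg(F)-\deg(g)\deg(\phi)$ only give $\deg(G)+\deg(H)\le\deg(F)+\deg(\phi)-1$. Taking $G$ constant is not available in general, since a constant of $\Ai$ need not realise a prescribed $g\in\Fi_k^\times$ with value $W$.

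The missing idea is to compare with the actual factorisation; this is what the paper's ``as we construct $G$, $H$ of minimal degrees'' refers to.
\begin{itemize}
\item By Theorem~\ref{thm:polygon-residue}, $F=\Gt\,\Ht$, where $\Ht$ is the product of the $F_i$ with $h_i\mid h$, and $\Gt$ is the product of the remaining $F_i$ together with $F_\infty$.
\item Then $\Rt(\Ht)=h$. Since $\Ht$ is monic of type $\tb$, $\wx(\Ht)=\deg(h)\wx(\phi)$. Hence $\Rt(\Gt)=g$ and $\wx(\Gt)=\wx(F)-\wx(\Ht)=W$.
\item By minimality in Proposition~\ref{prop:construct}, $\deg(G)\le\deg(\Gt)$. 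Either $\Gt$ itself satisfies the constraint $\deg<(\deg(g)+1)\deg(\phi)$, or $\deg(\Gt)$ already exceeds $\deg(G)$.
\item Also $\deg(H)=\deg(\Ht)$, so $\deg(G)+\deg(H)\le\deg(\Gt\Ht)=\deg(F)$.
\end{itemize}
When $F_\infty=1$ there is a shortcut. Then $W=\deg(g)\wx(\phi)$ and $\lc{}{g}=R(\phi)^{\deg(g)}$, so part~2 of Proposition~\ref{prop:construct} makes $G$ strongly monic as well, and equality holds.

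A minor point: the extended gcd does compute in $\Fi_k$, so Remark~\ref{rem:FieldTower} does not apply to it. Its cost is $\Oe(\deg(gh)f_{k-1})\subset\Oe(\dx)$ by Lemma~\ref{lem:costFk}, as in the paper, not $\Ot$.
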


\begin{proof}
  Use Proposition \ref{prop:construct} to compute $G,H\in \Ai[x]$ with
  $H$ strongly monic in $\phi$ and such that $\tilde{R}(H)=h$,
  $\tilde{R}(G)=g$, $\wx(F)=\wx(GH)$ with $\Ot(\dx \wx(F)/\wx(\pi))$
  operations in $\Fi$ (this bound being a consequence of
  $\wx(F)=\deg(\Rt(F))\,\wx(\phi)$). Notice that
  $\deg(G)+\deg(H)\le \deg(F)$ as we construct $G$, $H$ of minimal
  degrees. As $\wx(GH)=\wx(F)$ and $\tilde{R}(GH)=\tilde{R}(F)$, point
  1 in Lemma \ref{lem:Rt-w} gives $w(F-GH)>w(F)$. For $U,V$, we first
  compute $u,v$ in $\Fi_k[y]$ such that $u\,g+v\,h=1$,
  $\deg(u)<\deg(h)$, $\deg(v)<\deg(g)$ with
  $\Oe(\deg(gh)\,f_{k-1})\subset \Oe(\dx)$ operations in $\Fi$
  \cite[Corollary 11.9]{GaGe13}. We necessarily have $v\in\Fi_k[y^q]$
  and $u=y^t\,\uk$ with $\uk\in\Fi_k[y^q]$, so we may apply again
  Proposition \ref{prop:construct} to compute $U,V\in \Li[x]$ such
  that $\Rt(U)=u$, $\Rt(V)=v$, $\deg(U)<\deg(H)$, $\deg(V)<\deg(G)$,
  $\wx(U)=-\wx(H)$ and $\wx(V)=-\wx(G)$ within the same complexity
  bound. We get $\tilde{R}(UG)+\tilde{R}(VH)=ug+vh=1\ne 0$ and Lemma
  \ref{lem:Rt-w} (point 1) implies that $\wx(UG+VH)=\wx(UG)=\wx(VH)$,
  so that $\tilde{R}(UG+VH)=\tilde{R}(UG)+\tilde{R}(VH)$ (point 2). We
  get $\tilde{R}(UG+VH)=1=\tilde{R}(1)$ and
  $\wx(UG+VH)=0=\wx(1)$. Point 1 again gives $\wx(UG+VH-1)>\wx(1)=0$.
\end{proof}

\subsection{Lifting: a valuated Hensel Lemma}
\label{ssec:hensel}

Let \quorem{} denotes the usual euclidean algorithm.

\begin{lem}
  \label{lem:quorem}
  Let $A$, $B\in\Li[x]$ with $B$ strongly monic in $\phi$. Then,
  $Q,R=\quorem(A,B)$ satisfies $\wx(R)\geq \wx(A)$ and
  $\wx(Q)\geq \wx(A)-\wx(B)$.
\end{lem}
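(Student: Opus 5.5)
Since $\quorem$ is classical Euclidean division, I will not analyse the algorithm; I will instead characterise $(Q,R)$ as the unique pair with $A=QB+R$ and $\deg R<\deg B$, and produce it by a $\phi$-adic division process whose valuations can be controlled.

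The basic fact needed is that $\wx$ behaves like a valuation on $\phi$-adic expansions. For $C=\sum_i c_i\phi^i$ with $\deg c_i<\deg\phi$, we have $\wx(C)=\min_i \wx(c_i\phi^i)$. This follows from the definition \eqref{eq:defvk} of $\wx=\val_{k+1}$ through $\val$, $\phi$ and $\lambda$, together with the remark after \eqref{eq:defvk}. Let $B=\sum_{i=0}^N b_i\phi^i$ with $b_N=1$ and $\wx(B)=N\wx(\phi)$. Then $\wx(b_i\phi^i)\ge N\wx(\phi)$ for all $i$.

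I would first treat a single monomial $A=a\phi^j$ with $\deg a<\deg\phi$ and $j\ge N$. Write $a\phi^j=a\phi^{j-N}B-a\phi^{j-N}(B-\phi^N)$. The quotient contribution $a\phi^{j-N}$ has valuation $\wx(a)+(j-N)\wx(\phi)=\wx(A)-\wx(B)$. The new remainder term $a\phi^{j-N}(B-\phi^N)=\sum_{i<N}ab_i\phi^{j-N+i}$ has valuation at least $\wx(a)+(j-N)\wx(\phi)+N\wx(\phi)=\wx(A)$, because $\wx$ is a valuation, so $\wx$ is multiplicative. Its $\phi$-degree is strictly smaller, but the products $ab_i$ may have degree $\ge\deg\phi$, so they must be re-expanded $\phi$-adically. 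Re-expansion does not lower $\wx$, since $\wx$ of a sum of expansion terms equals the minimum over terms by the fact above.

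For general $A$, I would iterate this reduction on the top $\phi$-adic term. Each step subtracts a multiple of $B$ whose quotient part has valuation $\ge\wx(A)-\wx(B)$, and it produces a new dividend of valuation $\ge\wx(A)$. The total $\phi$-degree, or equivalently $\deg_x$, strictly decreases, so the process terminates with a dividend of degree $<\deg B=N\deg\phi$. Monicity of $B$ in $\phi$ gives $\deg B=N\deg\phi$, because $\deg b_i<\deg\phi$. By uniqueness of Euclidean division, the accumulated quotient is $Q$ and the final dividend is $R$. Since $\wx(X+Y)\ge\min(\wx X,\wx Y)$, both bounds $\wx(Q)\ge\wx(A)-\wx(B)$ and $\wx(R)\ge\wx(A)$ follow.

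The main obstacle is the first step: justifying that $\wx$ is a genuine valuation that is multiplicative and equals the minimum over $\phi$-adic terms. For this I would cite the Mac Lane augmented valuation theory referenced after \eqref{eq:defvk} and \cite{GuMoNa12}. I would also check that $\phi$, being a representative of $\tb$, is a key polynomial for $\val$, which is what that theory requires.
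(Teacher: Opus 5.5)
Your proposal is correct and is essentially the paper's proof: both obtain $R$ by repeatedly cancelling the top $\phi$-adic term $a\phi^{j}$ with $a\phi^{j-N}B$, use $\wx(B)=N\wx(\phi)$ to see that the new dividend keeps $\wx$-valuation at least $\wx(A)$, and identify the result with the Euclidean remainder by uniqueness. The paper gets the bound on $Q$ from $\wx(QB)=\wx(A-R)$ rather than by summing quotient contributions, which is an immaterial difference. One small correction: your termination measure must be $\deg_x$ alone, not the $\phi$-adic degree, since after re-expanding the products $ab_i$ the $\phi$-adic degree can remain equal to $j$ (only the degree of the leading coefficient drops). On this point your $\deg_x$ argument is in fact more careful than the paper's induction on $N$.
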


\begin{proof}
  We focus on the computation of $R$.  First note that it be computed
  as follows\footnote{in practice, we use the classical algorithm of
    $\Ai[x]$, this is only for this proof purpose}: write
  $A=\sum_{i=0}^N a_i\,\phi^i$ the $\phi$-adic expansion of $A$ and
  $B=\phi^b+\cdots$. If $N<b$, we get $R=A$ ; otherwise, compute
  $\At=A-a_N\,\phi^{N-b}\,B$, and apply recursively this strategy to
  $\At$. As $\At=\sum_{i=0}^{N-1} \tilde a_i\,\phi^i$, this procedure
  converges towards the unique remainder $R$.  We now prove the result
  by induction on the value $N\geq b$. By linearity, we can assume
  $A=a_N\,\phi^N$. We then have $\wx(\At)\geq \wx(A)$ as
  $\wx(a_N\,\phi^{N-b}\,B) = \wx(A)$ from the assumption
  $\wx(B)=b\,\wx(\phi)$. By induction, this proves the lemma for
  $R$. Result for $Q$ is then a straightforward consequence, as
  $\wx(Q\,B)=\wx(A-R)$.
\end{proof}

This lemma will enable us to prove that the classical Hensel lemma
\cite[Algorithm 15.10]{GaGe13}, when starting with correct initial
polynomials, ``double the precision'' according to an extended
valuation $(\wx,\phi)$. The only difference with the classical
algorithm is the way we truncate polynomials: for any polynomial
$F\in\Ai[x]$ and $n\in\Zi$, we denote
$\valtronc[]{F}{n}=\valtronc[k+1]{F}{n}$ the ``truncation of $F$
according to the valuation $\wx=\val_{k+1}$'', defined recursively as
follows. We let $\valtronc[0]{F}{n}$ the usual truncation of $F$ up to
precision $\pi^n$ and, if $F=\sum_i a_i\,\phi_k^i$ and $k\ge 0$, then
$\valtronc[k+1]{F}{n}=\sum_i
\valtronc[k]{a_i}{\frac{n-\val_{k+1}(\phi_k^i)}{q_k}}\,\phi_k^i$. This
is indeed a natural definition, as $\deg(a_i)<\deg(\phi_k)$ implies
$\val_{k+1}(a_i)=q_k\,\val_k(a_i)$. In other words, we remove all
terms of $F$ that have $\wx$-valuation greater than $n$ in its
$(\phi_0,\ldots,\phi_k)$-multiadic expansion \cite[Section 3]{PoWe18}.

\begin{lem}\label{lem:complexityL[x]}
  Let $G\in \Li[x]$ with $\deg(G)\leq\dx$ and $n\in \Ni$. We can
  compute $\valtronc{G}{n}$ in $\Ot(\dx(n/\wx(\pi)-\val_0(G)))$
  operations in $\Fi$.
\end{lem}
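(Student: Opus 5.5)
We follow the recursive definition of $\valtronc[k+1]{G}{n}$ and argue by induction on the order $k+1$, computing at each level the $\phi_k$-adic expansion of $G$ and then truncating each coefficient recursively. First a normalisation: multiply $G$ by $\pi^{-\val_0(G)}$, so that we may assume $G\in\Ai[x]$ with $\val_0(G)=0$, at the price of replacing $n$ by $n-\val_0(G)\,\wx(\pi)$. We then note that every coefficient of $G$ can be reduced modulo $\pi^{\sigma}$ with $\sigma=\lceil n/\wx(\pi)\rceil+1$ before doing anything else. The reason is that $\val_{k+1}(\pi^\sigma A)=\sigma\,\wx(\pi)+\val_{k+1}(A)>n$ for $A\in\Ai[x]$, because the augmented valuations are nonnegative on $\Ai[x]$. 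Each such term therefore lies above the truncation level. Since the multiadic expansion is $\Ai$-linear, these terms contribute only monomials of $\wx$-valuation greater than $n$, which the truncation removes anyway. Hence it suffices to work in $(\Ai/\pi^\sigma)[x]$ with $\sigma\in\O(n/\wx(\pi)-\val_0(G))$.

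Base case $k+1=0$: this is plain coefficientwise truncation, costing $\O(\dx\,\sigma)$ operations in $\Fi$, if we count an element of $\Ai/\pi^\sigma$ as $\sigma$ elements of $\Fi$. Fast arithmetic in $\Ai/\pi^\sigma$ costs $\Ot(\sigma)$, both for $\Ki[[t]]$ and for $\Zi_p$ in the algebraic RAM model.

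Inductive step: compute the $\phi_k$-adic expansion $G=\sum_i a_i\phi_k^i$ modulo $\pi^\sigma$ by the divide-and-conquer radix conversion of \cite[Theorem 9.15]{GaGe13}. This takes $\Ot(\dx)$ operations in $\Ai/\pi^\sigma$, hence $\Ot(\dx\sigma)$ in $\Fi$. The coefficients $a_i$ lie in $\Ai[x]$ because $\phi_k$ is monic. Then for each $i$ we compute $\val_{k+1}(\phi_k^i)=i\,\val_{k+1}(\phi_k)$, using the closed formula for $\val_{k+1}(\phi_k)$ \cite[Proposition 2.15]{GuMoNa12}. We call the algorithm recursively on $a_i$ at level $k$ with precision $n_i=(n-\val_{k+1}(\phi_k^i))/q_k$. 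If $n_i<0$, the term is simply dropped.

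The recursive calls are bounded as follows. The relation $\wx(\pi)=q_k\val_k(\pi)$ gives $n_i/\val_k(\pi)\le n/\wx(\pi)$. Also $\val_0(a_i)\ge0$ holds after normalisation. So each call on $a_i$, which has degree less than $\deg\phi_k$, costs $\Ot(\deg(\phi_k)\,\sigma)$ by induction. There are at most $\dx/\deg\phi_k+1$ of them, giving $\Ot(\dx\sigma)$ per level. We must then sum over the $k+1$ levels. With approximate roots as in \fastirred{} we have $k\le\log_2\dx$, so this factor is absorbed in $\Ot$. In general, since $\deg\phi_{j}$ is strictly increasing along the type, the depth is at most the number of distinct degrees; we would argue that $\deg\phi_{j+1}/\deg\phi_j=q_j\ell_j\ge2$ holds for a strongly optimal type, and otherwise count only the genuinely distinct levels.

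The main obstacle is this per-level accounting. We must make sure that the precision does not blow up through the divisions by $q_k$ and the shifts by $\val_{k+1}(\phi_k^i)$, and that the number of levels stays logarithmic. For the precision, the cleanest route is the uniform bound $\sigma$ derived from $\val_j\ge0$ on $\Ai[x]$, applied at every level, rather than tracking each $n_i$ separately.
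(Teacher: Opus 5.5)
Your proposal is correct and follows the same route as the paper. You normalise by $\pi^{-\val_0(G)}$, work with $\pi$-adic precision of order $\lceil n/\wx(\pi)\rceil-\val_0(G)$, and unfold the recursive definition so that the cost is dominated by the $\phi_j$-adic expansions, as in the proof of Lemma~\ref{lem:vk}. Your handling of the recursion depth makes explicit what the paper leaves implicit, and ``count only the genuinely distinct levels'' is easy to make rigorous. A level with $q_j\ell_j=1$ has $q_j=1$ and only receives inputs of degree $<\deg\phi_{j+1}=\deg\phi_j$. It therefore acts as the identity and can be skipped, which leaves $\O(\log\dx)$ levels where an actual expansion is computed.
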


\begin{proof}
  Compute $\valtronc{\pi^{-\val_0(G)}G}{n-\val_0(G)\wx(\pi)}$ with
  precision $\lceil n/\wx(\pi)\rceil-\val_0(G)$ following the
  recursive definition above. The main cost is to compute the
  $\phi_k$-adic expansions, as in the proof of Lemma \ref{lem:vk}.
\end{proof}

Algorithm \onestep{} below takes as input $F,G,H\in \Ai[x]$ with $H$
strongly monic in $\phi$, $U,V\in \Li[x]$ and $n\in \Ni^\times$ such
that
\begin{itemize}
\item $\wx(F-G\,H)\geq\wx(F)+n$, with $\deg(F)\geq\deg(G)+\deg(H)$
\item $\wx(U\,G+V\,H-1)\geq n$, with $\deg(U)<\deg(H)$,
  $\deg(V)<\deg(G)$, $\wx(U)=-\wx(G)$ and $\wx(V)=-\wx(H)$.
\end{itemize}
It computes $\Gt,\Ht\in \Ai[x]$ with $\Ht$ strongly monic in $\phi$
and $\Ut,\Vt\in \Li[x]$ such that
\begin{itemize}
\item $\wx(F-\Gt\,\Ht)\geq\wx(F)+2\,n$, with $\deg(\Ht)=\deg(H)$,
  $\deg(F)\geq\deg(\Gt)+\deg(\Ht)$, $\wx(\Ht-H)\geq\wx(H)+n$ and
  $\wx(\Gt-G)\geq\wx(G)+n$.
\item $\wx(\Ut\,\Gt+\Vt\,\Ht-1)\geq 2\,n$, with $\deg(\Ut)<\deg(\Ht)$,
  $\deg(\Vt)<\deg(\Gt)$, $\wx(\Ut)=-\wx(\Gt)$ and
  $\wx(\Vt)=-\wx(\Ht)$.
\end{itemize}

\begin{algorithm}[ht]
  \nonl\TitleOfAlgo{\onestep($F,G,H,U,V,n$)\label{algo:OneStep}}
  $E\assign{}\valtronc{F-G\, H}{\wx(F)+2n}$\;%
  $Q,R\assign{}\valtronc{\quorem(U\,E,H)}{\wx(F)+2n}$\;%
  $\Gt\assign{}\valtronc{G+E\,{}V+Q\, G}{\wx(G)+2n}$\;%
  $\Ht\assign{}\valtronc{H+R}{\wx(H)+2n}$\;%
  $B\assign{}\valtronc{U\,{}\Gt+V\,\Ht-1}{2n}$\;%
  $C,D\assign{}\valtronc{\quorem(U\,B,\Ht)}{2n}$\;%
  $\Ut\assign{}\valtronc{U-D}{2n-\wx(G)}$\;%
  $\Vt\assign{}\valtronc{V-B\,{}V-C\,\Gt}{2n-\wx(H)}$\;%
  \Return{$\Ht$, $\Gt$, $\Ut$, $\Vt$}%
\end{algorithm}

\begin{prop}
  \label{prop:hensel}
  Algorithm \onestep{} is correct. It takes less than
  $\Ot\left(\frac{n+\wx(F)}{\wx(\pi)}\dx\right)$ operations in $\Fi$.
\end{prop}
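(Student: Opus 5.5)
We follow the correctness proof of the classical Hensel step \cite[Theorem 15.11]{GaGe13}. The only new ingredients are that every ``$\geq$'' now refers to the valuation $\wx$, and that truncations and euclidean divisions are controlled by $\wx$. We use three facts. First, $\wx$ is a valuation, so $\wx(AB)=\wx(A)+\wx(B)$ and $\wx(A+B)\geq\min(\wx(A),\wx(B))$. Second, truncation $\valtronc{A}{m}$ only removes terms of $\wx$-valuation greater than $m$, so $\wx(A-\valtronc{A}{m})>m$. Third, Lemma \ref{lem:quorem} applies to division by the strongly monic divisors $H$ and $\Ht$.

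\textbf{Step 1: correctness of $\Gt,\Ht$.} Set $e=F-GH$, with $\wx(e)\geq\wx(F)+n$, and note $\wx(F)=\wx(G)+\wx(H)$. Lemma \ref{lem:quorem} applied to $UE$ gives $\wx(R)\geq\wx(U)+\wx(E)\geq -\wx(G)+\wx(F)+n=\wx(H)+n$. It also gives $\wx(Q)\geq\wx(UE)-\wx(H)\geq n-\wx(G)$. Hence $\wx(\Ht-H)\geq\wx(H)+n$ and $\wx(\Gt-G)\geq\wx(G)+n$, where we bound the term $EV$ directly via $\wx(V)=-\wx(H)$. Since $\wx(R)>\wx(H)=\deg(H)\wx(\phi)$ and $\deg R<\deg H$, the polynomial $\Ht$ is still monic in $\phi$ with unchanged degree and valuation, so it is strongly monic. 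Before truncation the classical identity reads
\[F-\Gt\Ht=e(1-UG-VH)\;-\;(\Gt-G)(\Ht-H)\;+\;\text{truncation errors}.\]
The first term has $\wx\geq\wx(F)+2n$ because $\wx(UG+VH-1)\geq n$. The second has $\wx\geq\wx(G)+\wx(H)+2n$. The truncation errors exceed the thresholds $\wx(F)+2n$, $\wx(G)+2n$, $\wx(H)+2n$; each error is multiplied by a factor of valuation $\geq\wx(H)$ or $\geq\wx(G)$, or $U$ and $E$ combine to keep the total within the target. The degree bounds $\deg\Gt\leq\deg F-\deg\Ht$ follow exactly as in the classical case: compare degrees in $F-\Gt\Ht$, using that truncation does not increase degree since it acts inside the multiadic expansion.

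\textbf{Step 2: correctness of $\Ut,\Vt$.} The same argument applies with $F$ replaced by $1$. We have $\wx(B)\geq n$, because $UG+VH-1$ has valuation $\geq n$ and the perturbations $U(\Gt-G)$, $V(\Ht-H)$ have valuation $\geq n$. Then $\wx(D)\geq -\wx(G)+n$ by Lemma \ref{lem:quorem}. The classical identity shows that $\Ut\Gt+\Vt\Ht-1$ is, up to truncation errors, a product of two terms each of valuation $\geq n$. The normalisations $\wx(\Ut)=-\wx(\Gt)$ and $\wx(\Vt)=-\wx(\Ht)$ follow since the corrections have strictly larger valuation.

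\textbf{Step 3: complexity.} Every polynomial involved has degree $\O(\dx)$. After truncation, its $\val_0$ is bounded below in terms of its $\wx$-valuation divided by $\wx(\pi)$: the $\wx$-valuations lie in $[-\wx(F),\wx(F)+2n]$, and $\val_0(A)\geq(\wx(A)-\text{const}\cdot\wx(F))/\wx(\pi)$ for $\deg A\leq\dx$. So each object is represented with precision $\O((n+\wx(F))/\wx(\pi))$. Fast multiplication and division in $\Ai[x]$ at that precision, together with Lemma \ref{lem:complexityL[x]} for the truncations, give $\Ot(\dx(n+\wx(F))/\wx(\pi))$.

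The main obstacle is the bookkeeping of truncation errors in Step 1 and of the $\val_0$-versus-$\wx$ comparison in Step 3. The cross-term $Q\cdot H$ coming from $UE=QH+R$ must be tracked carefully so that the identity closes modulo valuation $\wx(F)+2n$.
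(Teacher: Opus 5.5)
Your proof follows the same route as the paper's: Lemma \ref{lem:quorem} bounds $R$ and $Q$, the identity $F-\Gt\Ht=E(1-UG-VH)-R(EV+QG)$ holds modulo truncation, the analogous computation giving $-B^2$ handles $\Ut,\Vt$, and Lemma \ref{lem:complexityL[x]} gives the cost. Two steps need repair. First, the quotient bound is miscomputed. You have just shown $\wx(UE)\geq\wx(H)+n$, so Lemma \ref{lem:quorem} gives $\wx(Q)\geq\wx(UE)-\wx(H)\geq n$, not $n-\wx(G)$. This matters: from $\wx(Q)\geq n-\wx(G)$ you only get $\wx(QG)\geq n$. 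That yields neither $\wx(\Gt-G)\geq\wx(G)+n$ nor the bound $\wx(F)+2n$ on the cross term $(\Gt-G)(\Ht-H)$ once $\wx(G)>0$. With the correct bound $\wx(Q)\geq n$, as in the paper, Step 1 goes through. (Also, strong monicity means $\wx(H)=\frac{\deg(H)}{\deg(\phi)}\wx(\phi)$, not $\deg(H)\wx(\phi)$.)

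Second, the Step 3 inequality $\val_0(A)\geq(\wx(A)-c\,\wx(F))/\wx(\pi)$, for every $A$ of degree at most $\dx$, is false. Lemma \ref{lem:w-v0} only gives $\val_0(A)\wx(\pi)\geq\wx(A)-N\wx(\phi)$ with $N=\lfloor\deg(A)/\deg(\phi)\rfloor+1$, and this is essentially sharp: take $A=\pi^{-j}\phi^{N-1}$. Here $N\wx(\phi)$ can be of order $\dx\,\wx(\phi)/\deg(\phi)$, which is not $O(\wx(F))$ when $F_\tb$ is small compared to $F$ (e.g. $\wx(F)=\wx(\phi)$ if $F_\tb=\phi$ and $\wx(F_\infty)=0$). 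The same gap is what produces the term $\dx^2\wx(\phi)/(\deg(\phi)\wx(\pi))$ in Theorem \ref{thm:slopefacto}. What the cost estimate actually needs is a lower bound on $\val_0$ for the specific polynomials handled by \onestep{}, namely $U$, $V$ and the products built from them. For $U$ this does follow from Lemma \ref{lem:w-v0}: $\deg(U)<\deg(H)$ and $H$ strongly monic give $N\wx(\phi)\leq\wx(H)$, hence $\val_0(U)\wx(\pi)\geq-\wx(F)$. For $V$, whose degree is only bounded by $\deg(G)$ with $G$ not strongly monic in general, $\wx(V)$ and $\deg(V)$ alone do not suffice; you need information on how $V$ was produced. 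The paper's one-line appeal to Lemma \ref{lem:complexityL[x]} also leaves this point implicit, but the general inequality you state cannot be what fills it.
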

\begin{proof}
  We start with the correctness. $H$ being strongly monic in $H$,
  Lemma \ref{lem:quorem} ensures $\wx(R)\geq{}\wx(H)+n>\wx(H)$ as
  $\wx(E)\geq{}\wx(F)+n$ by assumption. As $\deg(R)<\deg(H)$, this
  proves $\deg(\Ht)=\deg(H)$, $\wx(\Ht-H)\geq\wx(H)+n$ and $\Ht$
  strongly monic in $\phi$. As $\wx(Q)\geq{}n$, we also get
  $\wx(\Gt-G)=\wx(VE+QG)\geq\wx(G)+n$. Using these results and the
  equality
  \[
    \valtronc{F-\Gt\,\Ht}{\wx(F)+2n} =
    \valtronc{E\,(1-U\,G-V\,H)-R\,(E\,V+Q\,G)}{\wx(F)+2n},
  \]
  we get $\valtronc{F-\Gt\,\Ht}{\wx(F)+2n} =0$,
  i.e. $\wx(F-\Gt\,\Ht)\geq\wx(F)+2\,n$, from which we also deduce
  $\deg(F)\geq\deg(\Gt)+\deg(\Ht)$.

  Similarly, using $\valtronc{C\,\Ht+D}{2n}=\valtronc{U\,B}{2n}$, we
  get
  \[
    \valtronc{\Ut\,\Gt+\Vt\,\Ht-1}{2n} =
    \valtronc{U\,\Gt+V\,\Ht-1-B\,(U\,\Gt+V\,\Ht)}{2n} =
    \valtronc{B^2}{2n}=0,
  \]
  proving $\wx(\Ut \,\Gt+\Vt\,\Ht-1)\geq 2\,n$. Using Lemma
  \ref{lem:quorem} once again, we can deduce easily the remaining
  properties of the output.

  Finally, the complexity is a direct consequence of Lemma
  \ref{lem:complexityL[x]} (and the usual complexity of the Hensel
  algorithm \cite[Theorem 15.11]{GaGe13}).
\end{proof}

\begin{cor}\label{cor:hensel-w}
  Let $F\in\Ai[x]$, $\tb$ a type dividing $F$, a factorisation
  $\Rt(F)=h_0\cdots\,h_r h_{\infty}$ and $n\in\Ni$. One can compute
  $F_0,\dots,F_r,F_\infty$ such that $\Rt(F_i)=h_i$ and
  $\wx(F-F_0\cdots{}F_r F_{\infty})>n+\wx(F)$ in
  $\Ot\left(\frac{n+\wx(F)}{\wx(\pi)}\,\dx\right)$
  operations in $\Fi$. If $\deg(F_\tb)>\dx/2$, this is
  $\Ot(n\,\dx/\wx(\pi)+\vF)$.
\end{cor}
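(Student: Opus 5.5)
The plan is a binary tree of factorisations: \onestep{} drives each split with doubling precision, and each level of the tree costs about the same as one Hensel lift of $F$.

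\textbf{Product tree.} Build a balanced binary tree over the coprime factors $h_0,h_1,\dots,h_r,h_\infty$ of $\Rt(F)$; the leaves are the $h_i$. Each internal node carries the product of the $h_i$ below it, and the root carries $\Rt(F)$. The leaf weights are $\deg(h_i)\deg(\phi)$ for $i\le r$, plus $\deg(F_\infty)$ for $h_\infty$. Arrange the tree so that every level has total degree at most $\dx$ and there are $\O(\log \dx)$ levels; a Huffman-like tree works, as does the plain balanced tree since $r+2\le \dx+1$.

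\textbf{One node.} At a node with residual polynomial $g\,h$, current polynomial $A$ and children $g$, $h$, do the following. Put $h$ on the side strongly monic in $\phi$: any $h_i$ with $i\le r$, or $h_0$, satisfies $\lc{}{h}=R(\phi)^{\deg(h)}$ by Lemma \ref{lem:leadcoeff}. The factor $h_\infty$, a constant, always goes into $G$. Apply Proposition \ref{prop:init} to $A$ to get $G,H,U,V$ with $\wx(A-GH)\ge\wx(A)+1$ and $\wx(UG+VH-1)\ge1$. Then iterate \onestep{} (Proposition \ref{prop:hensel}) with $n=1,2,4,\dots$ until the precision exceeds $n+\wx(F)$. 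This requires $\O(\log n)$ calls, and their costs form a geometric sum, so the node costs $\Ot((n+\wx(A))\deg(A)/\wx(\pi))$.

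\textbf{Precision loss.} Recursing on $\Gt$ and $\Ht$ needs care. An error $\wx(A-\Gt\Ht)\ge \wx(A)+n'$ must yield the same guarantee for the product of the leaves. Since $\wx(GH)=\wx(G)+\wx(H)$, lifting $\Ht$ to $\Ht_1\Ht_2$ with relative precision $n'$ changes the product by relative precision $n'$, so errors add without loss. It therefore suffices to run each node at precision $n+1$ relative to $\wx(F)$, using $\wx(A)\le\wx(F)$ from the valuation additivity of the $\wx$-values of the factors. Summing node costs over one level gives $\Ot((n+\wx(F))\dx/\wx(\pi))$, and the $\O(\log\dx)$ depth is absorbed in $\Ot$. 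The required identities $\Rt(F_i)=h_i$ follow from Lemma \ref{lem:Rt-w}, since each lift changes a factor only at strictly higher $\wx$-valuation. I expect this step to be the main technical point: the nonzero $\wx$ of $U,V$ and the differing base valuations $\wx(G),\wx(H)$ must mesh with the truncations in \onestep{} when it is fed approximate factors.

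\textbf{Last bound.} Assume $\deg(F_\tb)>\dx/2$. Use $\wx(F)=\deg(\Rt(F))\wx(\phi)$ together with the estimate $\dx\,\wx(F)/\wx(\pi)\in\O(\vF)$, which comes from Lemma \ref{lem:trunc}-type bounds relating $\val_k(F)/\val_k(\pi)$ to $2\vF/\dx$. The factor $F_\infty$ contributes nothing to $\Rt(F)$ beyond a constant. The factor $F_\tb$ has degree comparable to $\dx$, so its normalised valuation is bounded by $\vF/\deg(F_\tb)$ up to a constant. This bound is where $\deg(F_\tb)>\dx/2$ is used, and it gives $\Ot(n\dx/\wx(\pi)+\vF)$.
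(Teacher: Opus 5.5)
Your proposal is correct and follows the same route as the paper: a subproduct tree over $h_0,\dots,h_r,h_\infty$, initialisation at each node by Proposition~\ref{prop:init}, repeated calls to \onestep{} with doubling precision (Proposition~\ref{prop:hensel}), and Lemma~\ref{lem:trunc} applied to $F_\tb$ with $\deg(F_\tb)>\dx/2$ for the last bound. Your explicit check that errors do not accumulate down the tree fills in a detail the paper's short proof leaves implicit (relative precision $n+1$ at each node suffices because $\wx$ is multiplicative), and the worry you flag about approximate factors needs no separate argument, since Proposition~\ref{prop:hensel} applies to any input satisfying its hypotheses.
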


\begin{proof}
  This algorithm is similar to the classical multifactor Hensel
  lifting \cite[Algorithm 15.17]{GaGe13}. We start by building a
  subproduct tree of the factorisation
  $\Rt(F)=h_0 \cdots\,h_r \,h_{\infty}$. Then, for each node (from top
  to bottom), we initialise the polynomials $G,H,U,V$, and run
  \onestep{} until we reach the required precision. Complexity and
  correctness follows from Propositions \ref{prop:init} and
  \ref{prop:hensel}. If $\deg(F_\tb)>\dx/2$, Lemma \ref{lem:trunc}
  implies $\frac{\wx(F)}{\wx(\pi)}<\frac{4\,\vF}{\dx}$.
\end{proof}

In order to get a factorisation
$F=F_0\cdots{}F_r F_{\infty} \mod \pi^{\sigma+1}$ for some given
$\sigma\in\Ni$, we need to relate the valuations $\val_0$ and
$\wx$:

\begin{lem}\label{lem:w-v0}
  Let $G\in\Li[x]$ and denote
  $N= \lfloor \deg(G)/\deg(\phi)\rfloor+1$. Then we have
  $\val_0(G)\wx(\pi)\le \wx(G)\leq \val_0(G)\wx(\pi) + N\wx(\phi)$.
\end{lem}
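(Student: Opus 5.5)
Both inequalities follow from the same two facts. The first is that $\wx$ is a valuation on $\Li[x]$ extending a multiple of $\vx$ on constants: $\wx(\pi^a c)=a\,\wx(\pi)+\wx(c)$ for every $a$. The second is a description of $\wx$ through the $\phi$-adic expansion, by \eqref{eq:defvk} applied at level $k+1$. Writing $G=\sum_{i<N} a_i\phi^i$ with $\deg(a_i)<\deg(\phi)$, we have
\[
  \wx(G)=\min_i \wx(a_i\phi^i)=\min_i\bigl(\wx(a_i)+i\,\wx(\phi)\bigr),
\]
and $i$ ranges over $0,\dots,N-1$ with $N=\lfloor\deg(G)/\deg(\phi)\rfloor+1$. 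I would first reduce to $\val_0(G)=0$ by replacing $G$ with $\pi^{-\val_0(G)}G$; both sides of each inequality then shift by $\val_0(G)\wx(\pi)$. So assume $G\in\Ai[x]$ with some coefficient a unit.

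\textbf{Lower bound} $\wx(G)\ge 0$. Induct on the order of the type. Each $\val_j$ is nonnegative on $\Ai[x]$: $\val_0$ is, and $\phi_j\in\Ai[x]$ with $\val_j(\phi_j)\ge0$. Writing $G=\sum b_i\phi_j^i$ over $\Ai$ (division by a monic polynomial in $\Ai[x]$ keeps the $b_i$ integral), formula \eqref{eq:defvk} gives $\val_{j+1}(G)=\min_i(q_j\val_j(b_i\phi_j^i)+m_j i)\ge0$. More simply, it is standard that augmented valuations satisfy $\val_{j+1}\ge q_j\val_j$ on $\Li[x]$, which yields $\wx(G)\ge \val_0(G)\wx(\pi)$ directly. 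I would cite \cite{GuMoNa12} or the Mac Lane construction for this monotonicity. Whichever form is used, the key point is that each $\phi_j$ is integral, so every term in the expansion has nonnegative value.

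\textbf{Upper bound} $\wx(G)\le N\wx(\phi)$. This is the main obstacle. Since $\wx(G)=\min_i(\wx(a_i)+i\wx(\phi))$ with $i\le N-1$, it suffices to find some index $i$ with $\wx(a_i)\le \wx(\phi)$. Because $G$ has a unit coefficient and $G\equiv\sum a_i\phi^i$, reduction mod $\pi$ shows that some $a_i$ has $\val_0(a_i)=0$. Choose such an $i$. Then $\deg(a_i)<\deg(\phi)$, so $\wx(a_i)=q_k\val_k(a_i)$, and I would need $q_k\val_k(a_i)\le \wx(\phi)$ for a polynomial of degree less than $\deg\phi$ with a unit coefficient.

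For this I would use the minimality property of representatives. Polynomials of degree less than $\deg(\phi_k)$ satisfy $\val_k(a)<\val_k(\phi_k)$ scaled appropriately; this follows from $\phi_k$ being a key polynomial of minimal degree for $\val_k$, or recursively from \cite[Proposition 2.15]{GuMoNa12} together with the closed formula for $\val_k(\phi_k)$. Applying this inductively through the levels $\phi_{k-1},\dots,\phi_0$ bounds $\wx(a_i)$ by $\wx(\phi)$. That gives $\wx(G)\le \wx(a_i)+i\wx(\phi)\le (i+1)\wx(\phi)\le N\wx(\phi)$.

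If this recursive bound proves awkward, a fallback handles the case $\deg(G)\ge\deg\phi$ differently. Use that $\wx(\phi)\ge \wx(x^{\deg\phi})$-type comparisons hold, and bound $\wx$ of a unit-coefficient polynomial by the value of its leading monomial terms, via the multiadic expansion of \cite[Section 3]{PoWe18}. In that expansion every monomial $\phi_0^{j_0}\cdots\phi_k^{j_k}$ of degree below $N\deg\phi$ has $\wx$ at most $N\wx(\phi)$.
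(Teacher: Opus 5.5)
Your reduction to $\val_0(G)=0$, your lower bound, and your reduction of the upper bound to the claim $\wx(a)\le\wx(\phi)$ all match the paper's proof. That last claim is for $\deg a<\deg\phi$ and $\val_0(a)=0$, reached via a unit coefficient of the $\phi$-adic expansion. But it is the actual content of the lemma, and you do not establish it.

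The fact you invoke is misstated. Without the normalisation it fails for $a=\pi^M$. Even with $\val_0(a)=0$, the strict inequality $\val_k(a)<\val_k(\phi_k)$ already fails at $k=1$: there $\val_1=\val_0$ gives $\val_1(a)=0=\val_1(\phi_1)$. Nor does the correct non-strict bound follow from $\phi_k$ being a key polynomial of minimal degree. $\val_k$-minimality only gives $\val_k$-orthogonality of $\phi_k$-expansions (equivalently, $\phi_k$ does not $\val_k$-divide polynomials of smaller degree). That yields no upper bound on $\val_k(a)$ in terms of $\val_k(\phi_k)$.

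The missing idea is the passage from level $k-1$ to level $k$, which the paper does by induction:
\begin{itemize}
\item Write $a=\sum_{j<q_{k-1}\ell_{k-1}}a'_j\phi_{k-1}^j$; the range is finite because $\deg\phi_k=q_{k-1}\ell_{k-1}\deg\phi_{k-1}$.
\item Choose $a'_j$ with $\val_0(a'_j)=0$.
\item The induction hypothesis $\val_k(a'_j)\le\val_k(\phi_{k-1})$, together with $\val_k(\phi_{k-1})\ge 0$, gives $\val_k(a)\le(j+1)\val_k(\phi_{k-1})\le q_{k-1}\ell_{k-1}\val_k(\phi_{k-1})$.
\item Conclude with the identity $\val_k(\phi_k)=q_{k-1}\ell_{k-1}\val_k(\phi_{k-1})$ from \cite[Theorem 2.11]{GuMoNa12}. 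Then $\wx(a)=q_k\val_k(a)\le q_k\val_k(\phi_k)+m_k=\wx(\phi)$.
\end{itemize}
That identity says the value of the next representative grows exactly by the degree ratio $q_{k-1}\ell_{k-1}$. It never appears in your sketch, and pointing to ``the closed formula'' and saying ``apply inductively'' does not supply it.

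Your fallback does not avoid this either. The monomial bound $\wx(\phi_0^{j_0}\cdots\phi_k^{j_k})\le N\wx(\phi)$ for $j_k<N$ is true. But proving it needs $\wx(\phi_{i+1})\ge q_i\ell_i\,\wx(\phi_i)$ in order to telescope against $j_i<q_i\ell_i$, and that is the same Theorem 2.11 identity (plus $m_{i+1}\ge 0$). You would also need $\wx$-orthogonality of the multiadic expansion to pass from a single unit-coefficient monomial to $\wx(G)$.
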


\begin{proof}
  First inequality is clear. For the second inequality, we can safely
  suppose that $\val_0(G)=0$. We first consider the case
  $\deg(G)<\deg(\phi)$ (so $N=1$) and we proceed by induction (remind
  that $\phi=\phi_k$ and $\wx=\val_{k+1}$). If $k=0$, the claim is
  obvious. Assume $k\ge 1$. Then, from \eqref{eq:defvk} and the
  assumption $\deg(G)<\deg(\phi_{k})$, we have
  $\val_{k+1}(G)=q_{k}\val_{k}(G)$. We also get
  $\val_{k+1}(\phi_{k})=q_{k}\val_{k}(\phi_{k})+m_{k}$. It is thus
  sufficient to show that $\val_{k}(G)\leq\val_{k}(\phi_{k})$. Write
  $G=\sum_{0\leq i<q_{k-1}\ell_{k-1}}a_i'\phi_{k-1}^i$.  As there is
  at least one $a_i'$ not dividable by $\pi$, i.e.
  $\val_{k}(a_i')\leq \val_{k}(\phi_{k-1})$ by recursion, we get
  $\val_{k}(G)\leq q_{k-1}\ell_{k-1}\val_{k}(\phi_{k-1})$. As
  $\val_{k}(\phi_{k})=q_{k-1}\ell_{k-1} \val_{k}(\phi_{k-1})$ by
  \cite[Theorem 2.11]{GuMoNa12}, this concludes. If
  $\deg(G)\ge \deg(\phi)$, we write $G=\sum_{i=0}^{N-1} a_i\,\phi^i$,
  $\deg(a_i)<\deg(\phi)$. Let $i$ such that $\vx_0(a_i)=0$. We get
  $\wx(G)\le \wx(a_i\phi^i) \le (i+1)\wx(\phi)\le N\wx(\phi)$.
\end{proof}

\begin{thm}\label{thm:slopefacto}
  There exists an algorithm \slopefacto{} that given $F\in\Ai[x]$,
  $\tb$ a type dividing $F$, a factorisation
  $\Rt(F)=h_0\cdots\,h_r h_{\infty}$ and $\sigma\in\Ni$, computes a
  factorisation $F=F_0\cdots{}F_r F_{\infty} \mod \pi^{\sigma+1}$ with
  $\Rt(F_i)=h_i$. It takes
  $\Ot\left(\sigma\,\dx + \frac{\dx^2\,\wx(\phi)}{\deg(\phi)\wx(\pi)}
  \right)$ operations in $\Fi$. If $\deg(F_\tb)>\dx/2$, this is
  $\Ot(\dx\,\sigma+\vF)$.
\end{thm}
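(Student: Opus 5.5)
The plan is to reduce Theorem \ref{thm:slopefacto} to Corollary \ref{cor:hensel-w}, using Lemma \ref{lem:w-v0} to convert the required $\pi$-adic precision $\sigma$ into a $\wx$-precision $n$. Concretely, I will run the multifactor lifting of Corollary \ref{cor:hensel-w} with a target $n$ chosen so that $\wx(F-F_0\cdots F_rF_\infty)>n+\wx(F)$ forces $\val_0(F-F_0\cdots F_rF_\infty)\ge \sigma+1$. Set $E=F-F_0\cdots F_rF_\infty$. We have $\deg(E)<\dx$, since all the factors are monic and the degrees add up to $\dx$. The upper bound of Lemma \ref{lem:w-v0} then gives
\[
  \val_0(E)\,\wx(\pi)\ \ge\ \wx(E)-N\,\wx(\phi),\qquad N\le \dx/\deg(\phi)+1 .
\]
So it suffices to take
\[
  n=(\sigma+1)\,\wx(\pi)+\Big(\frac{\dx}{\deg(\phi)}+1\Big)\wx(\phi),
\]
using that $\wx(F)\ge 0$ because $F\in\Ai[x]$. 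With this choice, $\wx(E)>n+\wx(F)\ge n$ yields $\val_0(E)>\sigma$, which is exactly $F\equiv F_0\cdots F_rF_\infty \bmod \pi^{\sigma+1}$. The conditions $\Rt(F_i)=h_i$ are already part of the output of Corollary \ref{cor:hensel-w}.

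For the complexity, I plug this $n$ into the bound $\Ot\big(\frac{n+\wx(F)}{\wx(\pi)}\dx\big)$ of Corollary \ref{cor:hensel-w}. This gives
\[
  \Ot\Big(\sigma\dx+\frac{\dx^2\wx(\phi)}{\deg(\phi)\wx(\pi)}+\frac{\dx\,\wx(F)}{\wx(\pi)}\Big).
\]
The last term must be absorbed into the second. Since $\Nc(F)$ has nonnegative ordinates and its length is at most $\dx/\deg(\phi)$, we get $\wx(F)\le (\dx/\deg(\phi))\,\wx(\phi)$. An alternative route is Lemma \ref{lem:w-v0} applied to $F$ itself, using $\val_0(F)=0$ since $F$ is monic, which gives $\wx(F)\le N\wx(\phi)$. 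Either way the stated bound follows. The $+1$ inside $N$ contributes only an $\dx\,\wx(\phi)/\wx(\pi)$ term, which is dominated because $\deg(\phi)\le\dx$.

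For the case $\deg(F_\tb)>\dx/2$, I would show that $\frac{\dx\,\wx(\phi)}{\deg(\phi)\wx(\pi)}\in\O(\vF/\dx)$. The natural tool is Lemma \ref{lem:trunc}, which is also what Corollary \ref{cor:hensel-w} used. I would apply it to $F_\tb$, which is of type $\tb$ with $\deg(F_\tb)>\dx/2$. Writing $N'=\deg(F_\tb)/\deg(\phi)$, the quantity $\wx(F_\tb)$, or the slope term $N'|\lambda|$, is comparable to $N'\wx(\phi)$ up to the normalisation between $\val_k$ and $\wx=\val_{k+1}$. This gives $N'\wx(\phi)/\wx(\pi)\lesssim 2\delta(F_\tb)/\deg(F_\tb)\le 4\vF/\dx$. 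Since $N'\ge \dx/(2\deg(\phi))$, the second term is $\O(\vF)$, and the total is $\Ot(\dx\,\sigma+\vF)$.

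This last relation is the step I expect to be the main obstacle. The paper's statements bound $\val_k(F)+N|\lambda_{\min}|$ and not $\wx(\phi)$ directly, so some care is needed. I would use the closed formula $\wx(\phi)=q\,\val_k(\phi)+m$ together with the fact that the representative $\phi$ has $\val_k(\phi)$ controlled by $\val_k(F_\tb)/N'$. The lifting itself, and the conversion via Lemma \ref{lem:w-v0}, should be routine.
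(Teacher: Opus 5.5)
Your plan follows the paper's proof: apply Corollary \ref{cor:hensel-w} with a $\wx$-precision obtained from Lemma \ref{lem:w-v0}, then use Lemma \ref{lem:trunc} on $F_\tb$ for the second bound. A remark on the first part. The paper takes $n=\wx(\pi^\sigma)-\wx(F)+(\lfloor \dx/\deg(\phi)\rfloor+1)\wx(\phi)$, so that $n+\wx(F)$ does not involve $\wx(F)$ and no bound on $\wx(F)$ is needed. With your choice you do need such a bound, and only your second justification is valid: $\wx(F)\le N\wx(\phi)$ from Lemma \ref{lem:w-v0}, using $\val_0(F)=0$. Nonnegativity of the ordinates of $\Nc(F)$ bounds $\wx(F)$ from below, not from above.

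The real gap is the step you flag yourself, which you leave at ``comparable up to normalisation'' and $\lesssim$. It closes with an exact equality because $F_\tb$ is strongly monic in $\phi$. Put $N'=\deg(F_\tb)/\deg(\phi)\in\Ni$ (Lemma \ref{lem:leadcoeff}). Since $F_\tb$ is monic, its $\phi$-adic expansion has top coefficient $1$. So the right end point of $\Nc(F_\tb)=\Nc^-(F_\tb)$ is $(N',N'\val_k(\phi))$, and since this polygon is strictly decreasing, $\val_k(F_\tb)=N'\val_k(\phi)$. Because $\lambda=-m/q$ is the right hand slope, the minimum in $\wx(F_\tb)=\min_i\bigl(q\val_k(a_i\phi^i)+m\,i\bigr)$ is attained at that point, giving
\[
\wx(F_\tb)=N'\bigl(q\val_k(\phi)+m\bigr)=N'\wx(\phi)=q\bigl(\val_k(F_\tb)+N'|\lambda|\bigr).
\]
Since $\wx(\pi)=q\val_k(\pi)$, Lemma \ref{lem:trunc} applied to $F_\tb$ (which is of type $\tb$) reads exactly $\wx(F_\tb)/\wx(\pi)<2\delta(F_\tb)/\deg(F_\tb)\le 2\vF/\deg(F_\tb)$. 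Hence, when $\deg(F_\tb)>\dx/2$,
\[
\frac{\dx^2\wx(\phi)}{\deg(\phi)\wx(\pi)}=\frac{\dx^2}{\deg(F_\tb)}\cdot\frac{\wx(F_\tb)}{\wx(\pi)}<\frac{2\dx^2\vF}{\deg(F_\tb)^2}<8\vF .
\]
This is the paper's one-line argument: $F_\tb$ is strongly monic, hence $\wx(\phi)/\deg(\phi)=\wx(F_\tb)/\deg(F_\tb)$. Your suggested route through $\wx(\phi)=q\val_k(\phi)+m$ lands exactly here once you see that the relations you called ``controlled by'' are equalities.
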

\begin{proof}
  Let $n = \wx(\pi^\sigma)-\wx(F)+ (\lfloor \dx/\deg(\phi) \rfloor+1)
  \, \wx(\phi)$ and apply Corollary \ref{cor:hensel-w} with $n$. Then
  $G=F-F_0\cdots{}F_r F_{\infty}$ satisfies $\deg(G)<\dx$ and
  $\wx(G)>n+\wx(F)$, i.e. $\val_0(G) > \sigma$ by
  Lemma \ref{lem:w-v0}.  The first complexity bound follows from
  Corollary \ref{cor:hensel-w}. As $F_{\tb}$ is strongly monic in
  $\phi$, we have $\wx(\phi)/\deg(\phi)=\wx(F_{\tb})/\deg(F_{\tb})$
  and the second bound is a consequence of Lemma \ref{lem:trunc},
  which implies $\frac{\wx(F_\tb)}{\wx(\pi)}<\frac{2\,\vF}{\deg(F_\tb)}$.
\end{proof}

% \begin{rmk}
%   \adrien{Dire ici que cet algorithme a une meilleure complexit\'e que
%     le SFL de Nart et al.}
% \end{rmk}

%%% Local Variables:
%%% mode: latex
%%% TeX-master: "issac"
%%% End:

\section{A fast factorisation algorithm}
\label{sec:divideandconquer}
From the previous two sections, we can easily deduce a factorisation
algorithm. Let $n\geq\vF$ (\cite[Theorem 3.13]{BaNaSt13} shows that
this is a sufficient precision to detect the whole factorisation).
\begin{enumerate}
\item \label{firstfacto:irred} Run algorithm \fastirred{} with
  precision $2\vF/\dx$. We conclude that $F$ is irreducible (and
  return $F)$ or we get a type $\tb_{k-1}$.
\item Compute the factorisation $\Rt_k(F)=h_0 h_1\cdots h_r$ in
  $\Fi_k[y]$ (we have $h_{\infty}=1$ since $F$ is of type $\tb_{k-1}$).
\item Use Algorithm \slopefacto{} of Section \ref{sec:hensel} to get a
  factorisation $F=F_0 F_1\cdots F_r$ with precision $n$.
\item Go back to Step \ref{firstfacto:irred} for each $H_i$.
\end{enumerate}
If $\rho$ is the number of irreducible factors of $F$, this requires
at most $\rho$ univariate factorisations and $\log_2(\dx)$ univariate
irreducible tests over $\Fi_k[y]$, plus a number of operations over
$\Fi$ bounded by $\Oe(\rho\,n\,\dx)$ under Assumption
\ref{hyp:char>d}, and $\Oe(\rho\,n\,\dx+\vF^2)$ otherwise.

This section describes a divide and conquer strategy that will reduce
the computations over $\Fi$ to respectively $\Oe(\dx\,n)$ and
$\Oe(\dx\,n+\vF^2)$. The idea is the following:
\begin{enumerate}
\item Find a type $\tb$ such that $F_\tb$ has degree $>\dx/2$ and
  that either $F_\tb$ is irreducible, either any of its proper factor
  has degree $\leq\dx/2$.
\item Use Algorithm \slopefacto{} of Section \ref{sec:hensel} to get a
  factorisation $F=F_0\,F_1\cdots{}F_r F_{\infty}$ with
  $\F_\tb=F_0\,F_1\cdots{}F_r$.
\item Apply recursively this strategy to all factors that are not
  known irreducible.
\end{enumerate}

\subsection{Finding a dividing type}
\label{ssec:partial-facto}

The aim of this section is to either find the irreducible factor of
degree $>\dx/2$ if it exists, either find a factorisation
$F=F_0\,\F_1\cdots F_r\,F_\infty$ with each factor of degree
$\leq \dx/2$, and to do so with precision $\sigma\leq4\vF/\dx$. We
start with the algorithm.

\begin{algorithm}[ht]
  \nonl\TitleOfAlgo{\divtype($F, \sigma$)\label{algo:divtype}}
  \KwIn{%
    $F\in \Ai[x]$ monic separable, $\sigma\in\Ni$ the precision used.}%
  \KwOut{A type $\tb$ with the properties described above.}%
  $H\gets{} F$, $\dx\gets\deg(F)$\;%
  \While{True}{%
    $(b,\tb)\gets$ \fastirred($H,\sigma$)\;%
    \lIf{$b=$True}{\Return{$\tb$}\label{divtype:isirred}}%
    Compute $\Rt(H)=g\,h$, $h=h_i$ with highest degree\label{divtype:RtildeH}\;%
    % \tcp*[f]{$\Rt(H)=g\,h$, $h=h_i$ with highest degree}\;%
    \lIf{$\deg(h)\,\deg(\phi)\leq\dx/2$}{\Return{$\tb$}\label{divtype:testdegree}}%
    $(G,H)\gets$\slopefacto($H,\tb,[g,h],\sigma$)\label{divtype:slopefacto}\;%
  }%
\end{algorithm}
\begin{rmk}
  Note that the notation \fastirred($H,\sigma$) means to run Algorithm
  \fastirred{} with parameter $H$ and precision $\sigma$. Also, at
  Line \ref{divtype:RtildeH}, $H$ is always of type $\tb$, so that the
  factorisation $\Rt(H)$ does not involve $h_\infty$.
\end{rmk}

\begin{prop}\label{prop:divtype}
  The function call \divtype($F,4\vF/\dx$) is correct. It takes less
  than $\Oe(\rho\,\vF)$ operations in $\Fi$ if Assumption
  \ref{hyp:char>d} is satisfied, and $\Oe(\rho\,\vF+\vF^2)$ otherwise,
  plus at most $\rho$ factorisations and $\log_2(\dx)$ irreducibility
  tests over some $\Fi_k$.
\end{prop}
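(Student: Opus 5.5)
The proof splits into correctness and complexity. Throughout, $\dx=\deg(F)$ and $\vF$ refer to the input $F$, and $H$ is the current polynomial of the loop. Each pass through the loop replaces $H$ by a factor of the previous $H$.

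\textbf{Loop invariant.} We maintain that $H$ is the factor $F_{\tb'}$ of $F$ for some type $\tb'$ found so far, and that every irreducible factor of $F$ of degree $>\dx/2$ divides $H$.

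Initially this is trivial. Suppose it holds at the start of a pass, and assume neither Line~\ref{divtype:isirred} nor Line~\ref{divtype:testdegree} returns. Then \fastirred{} produced a type $\tb$ and representative $\phi$ with $\Rt(H)=g\,h$. By Lemma~\ref{lem:leadcoeff} and Theorem~\ref{thm:polygon-residue}, the factor of $H$ attached to $h$ has degree $\deg(h)\deg(\phi)>\dx/2$. Any irreducible factor of degree $>\dx/2$ divides exactly one factor of the slope factorisation. That factor must be the one attached to $h$, since the remaining factors have total degree $<\dx/2$.

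Since $\deg(H)$ strictly decreases, the loop terminates.

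\textbf{Correctness at the two exits.} At Line~\ref{divtype:isirred}, $H=F_\tb$ is irreducible of degree $>\dx/2$. At Line~\ref{divtype:testdegree}, every factor of $\Rt(H)$ has degree $\le\dx/2$, and so does the cofactor of $H$ in $F$, whose degree is below $\dx/2$. Hence $\tb$ has the required property.

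\textbf{Precision.} We must check that precision $\sigma=4\vF/\dx$ suffices for every call. Here I would use Lemma~\ref{lem:trunc} applied to $H$, which gives the bound $2\delta_H/\deg(H)$. Since $\delta_H\le\vF$ and $\deg(H)>\dx/2$ while we are still in the loop, this is at most $4\vF/\dx$. Theorem~\ref{thm:slopefacto} with $\deg(H_\tb)>\deg(H)/2$ gives the same conclusion for the Hensel step.

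\textbf{Complexity.} Each pass performs a \fastirred{} call, a residual factorisation and a \slopefacto{} call, all on $H$ with $\deg(H)\le\dx$. By Theorem~\ref{thm:slopefacto}, \slopefacto{} costs $\Ot(\dx\sigma+\vF)=\Ot(\vF)$.

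Under Assumption~\ref{hyp:char>d}, \fastirred{} costs $\Oe(\vF)$. This follows from Theorem~\ref{thm:fastirred} together with $\ell_0\le\dx$, or more precisely from the per-level costs of Propositions~\ref{prop:Nk} and \ref{prop:Rk} with $\sigma\dx=\O(\vF)$.

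The number of passes is at most $\rho$, because each pass splits off at least one irreducible factor of $F$. This gives $\Oe(\rho\vF)$ in total, with at most $\rho$ univariate factorisations.

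Without the assumption, the cost of \fastirred{} comes from Proposition~\ref{prop:GenIrr}. To avoid a term $\rho\vF^2$, the subtle point is that type information computed in earlier passes should be reused. If we simply cite the bound $\O(\vF/\ell_0)$ on recursive steps per call, the cost would be $\rho\vF^2$. I would argue instead that successive calls continue refining a type of increasing order along a single chain, so that the total number of refinement steps over all passes stays $\O(\vF)$. This yields $\Oe(\rho\vF+\vF^2)$.

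\textbf{Irreducibility tests.} The bound of $\log_2(\dx)$ tests needs the same reuse argument. A univariate irreducibility test is performed only when $q_k\ell_k>1$, so along the chain of types the quantity $e_kf_k$ at least doubles between consecutive tests. Since $e_kf_k\le\dx$, there are at most $\log_2(\dx)$ tests overall.

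I expect this amortisation across loop iterations, for both the $\vF^2$ term and the $\log_2(\dx)$ count, to be the main obstacle. The other steps are direct consequences of the cited results.
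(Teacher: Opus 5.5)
Your skeleton matches the paper's: termination because $\deg H$ drops, precision from Lemma~\ref{lem:trunc} via $\deg H>\dx/2$ and $\delta_H\le\vF$, correctness read off at the two exits, and cost bounded by at most $\rho$ passes of \fastirred{} and \slopefacto{}. You are right that the $\vF^2$ term and the $\log_2(\dx)$ count need type data reused across passes. The paper's proof just cites Proposition~\ref{prop:GenIrr}, which applied per pass gives $\rho\vF^2$, and reuse is only alluded to in the remark after the proposition.

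However, your per-call bound $\Oe(\vF)$ for \fastirred{} does not follow from what you cite. Theorem~\ref{thm:fastirred} gives $\Oe(\ell_0\vF)$, and $\ell_0\le\dx$ only turns this into $\Oe(\dx\vF)$. The per-level route fails for the same reason. Proposition~\ref{prop:Nk} does cost $\Ot(\dx\sigma)=\Ot(\vF)$, but Proposition~\ref{prop:Rk} costs $\Oe(\dx f_{k-1})$, which does not involve $\sigma$. Lemma~\ref{lem:fd<2lvF} only bounds $\dx f_{k-1}$ by $2\ell_0\vF$, and that bound is attained: for $F=\Phi^2-\pi$ with $\Phi$ a monic lift of $P_0$ and $\char(\Fi)\neq 2$, one has $\vF=\ell_0$ and $\dx f_0=2\ell_0^2$. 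So the cited results give $\Oe(\rho\,\ell_0\vF)$. Removing $\ell_0$ needs an extra ingredient, for instance a sharper cost bound for computing $R_k$, and the paper's one-line proof does not supply one either.

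Second, the first half of your invariant ($H=F_{\tb'}$ for some type $\tb'$) is never propagated by your inductive step, and it can fail. When the selected $h$ is $h_0$, the new $H$ is the product of the factors with slopes steeper than $\lambda$, while the split-off factor $G$ is still of type $\tb_{k-1}$. Suppose the next pass stops again at order $k-1$. This can already happen for $k=1$: every factor of $F$ is of type $[P_0]$, and \fastirred{} on the steeper part may return $[P_0]$ again. Then $G$ divides $F_\tb$, so $F_\tb\supsetneq H$.

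Your exit arguments (``$H=F_\tb$ is irreducible'', ``the cofactor of $H$ has degree $<\dx/2$'') then only show that \emph{some} splitting of $F$ into pieces of degree $\le\dx/2$ was produced along the way. They do not show this for the splitting induced by the returned $\tb$, which is what \fastfacto{} recomputes from $\Rt(F)$. There the right-hand slope may come from $G$, which puts all of $H$ into the $h_0$-factor. You need either an argument that $F_\tb=H$ at exit, which forces separate treatment of the $h_0$ branch, or to have \divtype{} return the factorisation it has actually found. The paper's proof also takes this identification for granted.
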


\begin{proof}
  Note first that $\deg(H)$ decreases at each loop (via the
  factorisation of Line \ref{divtype:slopefacto}), proving the ending
  of the algorithm. Also, the test of Line \ref{divtype:testdegree}
  ensures $\deg(H)>\dx/2$, so that $2\vF(H)/\deg(H)\le 4\vF/\dx$ at
  any point of the algorithm. \fastirred($H, 4\vF/\dx$) will thus
  return a correct answer thanks to Lemma \ref{lem:vk}. This precision
  is also sufficient to compute $\tilde{R}(H)$ at Line
  \ref{divtype:RtildeH} (use Proposition \ref{prop:Rk}). As the
  precision used is high enough, correctness is straightforward: we
  output $\tb$ when either $\Fi_\tb$ is irreducible (Line
  \ref{divtype:isirred}), either all of its factor has degree
  $\leq\dx/2$ (Line \ref{divtype:testdegree}). Finally, the complexity
  is a direct consequence of Theorems \ref{thm:fastirred} and
  \ref{thm:slopefacto} (and Proposition \ref{prop:GenIrr}).
\end{proof}

\begin{rmk}
  There are various straightforward improvements to this algorithm:
  for instance, one can provide the type computed so far to
  \fastirred{} to avoid some already done computations. They do not
  appear to make the reading easier, but are needed to run less
  than $\log_2(\dx)$ irreducible tests over some $\Fi_k[y]$.
\end{rmk}

\subsection{The divide and conquer algorithm}
\label{ssec:divide}

Thanks to this result, we derive a fast factorisation algorithm:

\begin{algorithm}[ht]
  \nonl\TitleOfAlgo{\fastfacto($F, n$)\label{algo:fastfacto}}
  \KwIn{%
    $F\in \Ai[x]$ monic separable, $n\in\Ni$ the precision.}%
  \KwOut{The list $F_1,\cdots,F_\rho$ of irreducible factors of $\Fi$ known with precision $n$.}%
  $\tb\gets$\divtype($F, 4\,n/d$)\;%
  Compute $\Rt(F)=h_0\,h_1\cdots h_r\,h_{\infty}$\;%
  $F_0,\cdots,F_r,F_\infty\gets$\slopefacto($F,\tb,[h_0,\cdots,h_r,h_{\infty}],n$)\label{fastfacto:slopefacto}\;%
  $\Rc\gets\{\}$\;%
  \For{$i\in\{0,\cdots,r,\infty\}$}{%
    \lIf{$F_i$ is known irreducible}{$\Rc\gets\Rc\cup\{F_i\}$}\label{fastfacto:knownirr}%
    \lElse{$\Rc\gets\Rc\cup\{$\fastfacto$(F_i,n)\}$}%
  }%
  \Return$\Rc$\;%
\end{algorithm}

\begin{rmk}
  At Line \ref{fastfacto:knownirr}, $F_0$ and $F_\infty$ are
  \emph{known irreducible} if they have degree $\le 1$ while $F_i$ is
  \emph{known irreducible} if $\ordre[\tb]{F_i}=1$ for $i=1,\ldots,r$.
\end{rmk}

\begin{thm}
  \label{thm:fastfacto}
  Let $n\geq\vF$. A function call \fastfacto($F, n$) returns the
  correct output. It requires at most $\rho$ factorisations and
  $\log_2(\dx)$ irreducibility tests in some $\Fi_k[y]$, plus
  $\Oe(\dx\,n)$ operations in $\Fi$ if Assumption \ref{hyp:char>d} is
  satisfied, and $\Oe(\dx\,n+\vF^2)$ otherwise.
\end{thm}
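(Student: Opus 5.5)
Correctness and complexity will be handled separately; the complexity bound comes from a recursion-tree argument driven by the degree-halving property produced by \divtype.

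\textbf{Correctness.} I argue by induction on $\deg(F)$. Since $n\geq\vF$, the factorisation is detected at precision $n$ by \cite[Theorem 3.13]{BaNaSt13}. Moreover $4n/\dx\geq 4\vF/\dx$, so Proposition \ref{prop:divtype} applies to the call \divtype($F,4n/\dx$): it returns a type $\tb$ with one of two properties. Either $F_\tb$ is irreducible of degree $>\dx/2$, or every factor $F_i$ produced at Line \ref{fastfacto:slopefacto} has degree $\leq\dx/2$. Theorem \ref{thm:slopefacto} then gives $F=F_0\cdots F_rF_\infty\bmod\pi^{n+1}$ with $\Rt(F_i)=h_i$. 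By Theorem \ref{thm:polygon-residue}, the factors declared known irreducible in the remark really are irreducible; this uses the irreducibility criterion when $\ordre[\tb]{F_i}=1$, and the degree condition for $F_0,F_\infty$. For the other factors I use $\vF[F_i]\leq\vF\leq n$, so the induction hypothesis applies to the recursive calls.

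\textbf{Complexity.} I would examine one call on a polynomial $G$ of degree $d_G$, with $\delta_G:=\vF[G]$ and $\rho_G$ its number of irreducible factors. \divtype{} costs $\Oe(\rho_G\delta_G)$, plus $\delta_G^2$ without Assumption \ref{hyp:char>d}. The call to \slopefacto{} costs $\Ot(d_G n+\delta_G)$ whenever $\deg(G_\tb)>d_G/2$, by Theorem \ref{thm:slopefacto}. If instead $\deg(G_\tb)\leq d_G/2$, I expect to redo the argument of Theorem \ref{thm:slopefacto} with $\wx(\phi)/\deg(\phi)$ controlled as in Lemma \ref{lem:trunc}; this should still give $\Ot(d_G n)$, because $\delta_G\leq n$. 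Every nontrivially recursed factor has degree $\leq d_G/2$, except possibly a factor already known irreducible, which is not recursed on. So the recursion tree has depth $\O(\log\dx)$. The degrees at each level sum to at most $\dx$, and so do the factor counts $\rho_G$. The discriminant valuations are also subadditive over the children, since $\vF[AB]=\vF[A]+\vF[B]+2\val(\res(A,B))$. Hence the terms $d_Gn$ and $\delta_G$ sum to $\Ot(\dx n)$ per level, and to $\Oe(\dx n)$ overall.

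\textbf{Main obstacle.} The $\Oe(\rho_G\delta_G)$ term from \divtype{} is the difficult part. Bounding it naively gives $\rho\vF$, which can reach $\dx\vF$. I would use Lemma \ref{lem:fd<2lvF}, which gives $\rho_G\leq d_G\leq 2\delta_G$ whenever a nontrivial type exists. I would also inspect \divtype{} more closely. Each loop iteration peels off a factor $G$ with $\Rt$-degree splitting, and its cost is governed by $\deg H$ and the precision $4\delta/\dx$, not by $\rho$. Summing over iterations should give $\Oe(d_G\cdot\delta_G/d_G\cdot\#\text{iterations})$, with the iteration count bounded by $\log d_G$ because of the $q\ell>1$ property of approximate roots under Assumption \ref{hyp:char>d}. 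This would yield $\Oe(\delta_G)\subset\Oe(d_Gn)$ per node.

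Without the assumption, the same argument applies except that the refinement steps of \fastirred{} cost $\Oe(\delta_G^2)$ per node. Summing $\delta_G^2$ over the tree gives at most $(\sum\delta_G)^2$ per level. A geometric decrease in the degrees, together with $\sum_G\delta_G^2\leq\vF\cdot\max_G\delta_G$, should bring this total to $\Oe(\vF^2)$. Finally, the univariate tasks total at most $\rho$ factorisations, one for each split that is discovered. The number of irreducibility tests stays within $\log_2\dx$ along each branch, since the type tree is reused as in the remark following Proposition \ref{prop:divtype}.
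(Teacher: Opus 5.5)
Your skeleton (induction on the degree for correctness, then a recursion tree of depth at most $\log_2\dx$ whose levels have total degree at most $\dx$, with node costs read off from Proposition~\ref{prop:divtype} and Theorem~\ref{thm:slopefacto}) is exactly the paper's proof, and it already suffices. The problem is the paragraph you call the main obstacle: it is not an obstacle, and the argument you give there is wrong.

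\emph{The trivial bound already works.} We have $\rho_G\le d_G$ and $\delta_G\le\vF\le n$. Strictly, \divtype{} is called with precision $4n/d_G$, so its cost is $\Oe(\rho_G\,n)$ rather than $\Oe(\rho_G\,\delta_G)$. Either way each node costs $\Oe(d_G\,n)$, and these sum to $\Oe(\dx\,n)$ per level. You had everything needed when you observed that the $\rho_G$ sum to at most $\dx$ on each level: $\dx\vF\le\dx n$ is precisely the target.

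\emph{Your sharpening is false.} The per-node bound $\Oe(\delta_G)$ rests on the claim that the \divtype{} loop runs at most $\log d_G$ times. The property $q_k\ell_k>1$ bounds the order of the type built inside one call to \fastirred{}, not the number of loop iterations. When the selected factor is the $h_0=y^s$ part, an iteration only strips the right-hand side of $\Nc^-(H)$. Take $F=\prod_{j=1}^m(x^2-p^{2j})$ with $p$ odd. Then $\sqrt[\dx]{F}=x$, $\Nc_1(F)$ has $m$ sides of length $2$, and $\Rt(F)\sim y^{\dx-2}(y-1)(y+1)$. The same pattern repeats after each peeling, so the loop runs about $\dx/4$ times, each time on a polynomial of degree $>\dx/2$ at precision $4\vF/\dx$. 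The number of iterations is of order $\rho_G$, which is exactly why Proposition~\ref{prop:divtype} carries the factor $\rho$. Drop this paragraph and use the trivial bound.

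Two smaller points.
\begin{enumerate}
\item The case $\deg(G_\tb)\le d_G/2$, which you treat with an ``I expect'' argument, never arises. \divtype{} only returns while the current $H$ has degree $>d_G/2$, and this $H$ divides $G$ and is of type $\tb$. So $\deg(G_\tb)>d_G/2$, and the second bound of Theorem~\ref{thm:slopefacto} applies directly.
\item Without Assumption~\ref{hyp:char>d}, no geometric decrease is needed, and you have not established one (halving degrees does not halve the $\delta_G$). On each level $\sum_G\delta_G^2\le(\sum_G\delta_G)^2\le\vF^2$ by the superadditivity you noted. The $\log_2\dx$ levels are then absorbed in $\Oe(\dx\,n+\vF^2)$.
\end{enumerate}
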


\begin{proof}
  This is a direct consequence of Proposition \ref{prop:divtype} and
  Theorem \ref{thm:slopefacto}, as $\sum_i\deg(F_i)=\dx$ at Line
  \ref{fastfacto:slopefacto}, and $\deg(F_i)<\dx/2$, so that the
  number of lines of the recursive calls tree is less than
  $\log_2(\dx)$. Precision $\vF$ is sufficient to detect all
  irreducible factors from \cite[Theorem 3.13]{BaNaSt13}.
\end{proof}

\begin{rmk}
  As in Section \ref{sec:irred}, we do not know in advance $\vF$. We
  proceed similarly as explained in Section \ref{ssec:prec}.
\end{rmk}

When considering $n\in \O(\vF)$, then up to the cost of the residual
factorisations, Theorem \ref{thm:fastfacto} improves \cite[Theorem
5.15]{BaNaSt13} by a factor $\vF$ if Assumption \ref{hyp:char>d} is
satisfied, and by a factor $\min(\dx,\vF)$ otherwise.

\subsection{Residual factorisations over finite fields} If
$\Card(\Fi)=q$, we factorise $F$ mod $\pi$ with an expected
$\Ot(\dx^2+\dx\,\log(q))$ operations over $\Fi$ by \cite[Corollary
14.30]{GaGe13}. The remaining residual factorisations performed by
algorithm \fastfacto{} use then an expected $\Oe(\dx\vF\log(q))$
operations in $\Fi$, see the proof of \cite[Theorem 5.14]{BaNaSt13}.
Together with Theorem \ref{thm:fastfacto}, this proves:% the following
%result:

\begin{cor}\label{cor:factoZp}
  Let $F\in \Zi_p[x]$, separable with $p$ small. Given the univariate
  factorisation of $F\mod p$, we compute the irreducible factors of
  $F$ with precision $n\ge \vF$ with an expected $\Oe(n\dx)$
  operations over $\Fi_p$ if $p> \dx$ and $\Oe(n\dx+\vF^2)$
  otherwise. The same result holds for $F\in \Fi_p[[t]][x]$.
\end{cor}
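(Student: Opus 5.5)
The plan is to assemble the corollary from Theorem \ref{thm:fastfacto} together with a count of the residual work over $\Fi_p$. For $\Fi_p[[t]]$ the argument is the same, with $t$ in place of $p$. Take $\Ai=\Zi_p$, so $\Fi=\Fi_p$ and $\char(\Fi)=p$. If $p>\dx$, Assumption \ref{hyp:char>d} holds and Theorem \ref{thm:fastfacto} gives $\Oe(n\dx)$ operations over $\Fi_p$. This count excludes the residual factorisations and irreducibility tests. Otherwise the theorem gives $\Oe(n\dx+\vF^2)$. It remains to show that the residual work, namely at most $\rho$ factorisations and $\log_2(\dx)$ irreducibility tests in the various $\Fi_k[y]$, fits within these bounds in expectation.

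First, the factorisation of $F\bmod p$ is given as input, so its cost $\Ot(\dx^2+\dx\log p)$ is not charged. Second, I would bound the later residual factorisations. Each one is a factorisation of some $\Rt_k(H)$, or of $R_k(H)$ after substituting $y^{q_k}$, over $\Fi_k$ with $[\Fi_k:\Fi_p]=f_{k-1}$. Its degree $D_k$ satisfies $D_k f_{k-1}\le\deg(H)\le\dx$. I would follow the proof of \cite[Theorem 5.14]{BaNaSt13}. It uses randomised factorisation over $\Fi_{p^{f_{k-1}}}$ in expected $\Ot(D_k^2+D_k\log(p^{f_{k-1}}))$ operations in $\Fi_k$, hence $\Oe(f_{k-1}D_k^2+\dx\,f_{k-1}\log p)$ operations over $\Fi_p$ by Lemma \ref{lem:costFk}. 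Summed over all calls in the recursion tree, this gives an expected $\Oe(\dx\vF\log p)$.

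The key inputs for that sum are the following. The degree $D_k>1$ only when a genuine splitting or refinement step occurs. Lemma \ref{lem:fd<2lvF} bounds $f_{k-1}\dx\le 2\ell_0\vF$. Combined with $\deg(H)\ge$ the degree of its factors, this bounds each residual factorisation by $\Oe(\vF\log p)$ times the degree of the factor being split. The recursion tree of \fastfacto{} has depth $\O(\log\dx)$, and inside each node the $\tb$-chain has length $\O(\log \dx)$ under Assumption \ref{hyp:char>d}. I expect this bookkeeping to be the main obstacle. I would import it directly from \cite[Theorem 5.14]{BaNaSt13} rather than redo it: their tree of types is the same as ours, since types and residual polynomials do not depend on the chosen representatives up to the relation $\sim$.

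Finally, since $p$ is small, $\log p\in\O(\log(\dx+\vF))$ is absorbed into $\Oe$. The residual cost becomes $\Oe(\dx\vF)\subset\Oe(n\dx)$ because $n\ge\vF$. Likewise the $\log_2(\dx)$ irreducibility tests each cost $\Oe(\dx\log p)$ by \cite[Corollary 2]{HoLe20}. Adding these to the bounds of Theorem \ref{thm:fastfacto} gives $\Oe(n\dx)$ when $p>\dx$ and $\Oe(n\dx+\vF^2)$ otherwise, and correctness with precision $n\ge\vF$ is inherited from Theorem \ref{thm:fastfacto}.
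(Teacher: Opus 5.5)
Your proposal follows the paper's proof. The factorisation of $F \bmod p$ is part of the input, so its expected $\Ot(\dx^2+\dx\log p)$ cost is not charged. The remaining residual factorisations are bounded by an expected $\Oe(\dx\vF\log p)$ by appeal to the proof of \cite[Theorem 5.14]{BaNaSt13}. Then $\log p$ is absorbed because $p$ is small, and $n\ge\vF$ lets everything be added to the bounds of Theorem~\ref{thm:fastfacto}. Your separate treatment of the irreducibility tests via \cite[Corollary 2]{HoLe20} is consistent with what the paper does in Section~\ref{sec:irred}.

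One justification should go, though. You claim that the tree of types of \cite{BaNaSt13} coincides with yours because types and residual polynomials do not depend on the chosen representatives. This is false: higher order Newton polygons and residual polynomials do depend on $\phi_k$. For $F=(x-p)^2-p^3\in\Zi_p[x]$ with $p$ odd, the representative $\phi_1=x$ of the type $[y]$ gives slope $-1$ and $R_1(F)\sim(y-1)^2$, which forces a refinement step. The representative $\phi_1=x-p=\sqrt[2]{F}$ instead gives slope $-3/2$ and $R_1(F)\sim y-1$, which proves irreducibility at once. Avoiding such steps with $q_k\ell_k=1$ is exactly what approximate roots buy, so \fastfacto{} does not explore the same tree as the Montes algorithm.

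Moreover, \divtype{} reruns \fastirred{} after each slope factorisation. So the number of residual factorisations at a node is not bounded by the length of a single chain of types; Proposition~\ref{prop:divtype} only bounds it by $\rho$. Your depth-times-chain-length sketch therefore would not close the count on its own. What legitimately transfers is the cost estimate, which depends only on the degrees $D_k$ of the residual polynomials, on $f_{k-1}$ through $D_kf_{k-1}\le\deg(H)$, and on inequalities such as Lemma~\ref{lem:fd<2lvF}. The paper asserts this transfer by citation alone, and you should do the same rather than claim that the trees coincide.
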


Corollary \ref{cor:factoZp} improves significantly \cite[Theorem
5.18]{BaNaSt13} which leads to the complexity estimate
$\Oe(\dx\vF^2+n \dx^2)$ under the same hypothesis.

\begin{rmk}\label{rmk:residualfacto} Corollary
  \ref{cor:factoZp} requires \textit{a priori} to compute a primitive
  representation of $\Fi_k$ over $\Fi$ before applying the
  factorisation algorithm \cite[Corollary 14.30]{GaGe13} (this issue
  doesn't seem to be considered in \cite{BaNaSt13}). To this aim, we
  may use a Las Vegas subroutine \cite[Proposition 4]{PoWe17} whose
  complexity fits in our aimed bound. There are recent faster
  factorisation algorithms \cite[Corollary 3]{HoLe20} (both for
  primitive or triangular representation), but not yet implemented.
\end{rmk}

\subsection{Avoiding residual factorisations}
\label{ssec:residualFacto}

If $\Fi$ has large cardinal, then the residual factorisations will
probably dominate the cost of the all algorithm. It might thus be
preferable to rely on dynamic evaluation \cite{HoLe19,PoWe17} : we
allow the $P_k$'s to be square-free, hence the $\Fi_k$'s to be product
of fields. If at some point we find a zero divisor (while computing
some gcd's), then we pursue over each discovered summand of $\Fi_{k}$
(or we return false if we run an irreducibility test). At the end, we
perform a unique residual factorisation (of expected small degree) for
each discovered factor of $F$ to deduce its irreducible
factorisation. Notice that this last step might be useless, depending
on the arithmetic information we want to compute. It is not needed for
instance if we only want the ramification indices (e.g. for the
computation of the genus of a plane curve \cite{PoWe17}), or the
valuation $\vF$ of the discriminant of $F$ \cite{Na14}, or the
equisingularity type of a germ of plane curve \cite{PoWe20}.  In these
cases, we may only use square-free residual factorisations (fast
gcd's) and the underlying algorithm is entirely deterministic.

%%% Local Variables:
%%% mode: latex
%%% TeX-master: "issac"
%%% End:

\section{Direct Applications}
\label{sec:applications}
Our complexity estimates have several direct consequences for
various tasks of computational number theory and algebraic
geometry. In what follows, the complexity results are given up to the
cost of the residual univariate factorisations.

\paragraph{OM-factorisation} Let $\Ki=\Qi[x]/(F)$ be a number field
and let $p\in \Zi$ a prime. The first main consequence of Theorem
\ref{thm:fastfacto} is that we compute an \textit{Okutsu-Montes (OM)
  representation} of the prime ideals dividing $p$ with $\Oe(\dx \vF)$
operations in $\Fi_p$ if $p>\dx$ or $\Oe(\dx \vF+\vF^2)$ otherwise,
improving \cite[Theorem 5.15]{BaNaSt13} respectively by a factor $\vF$
or $\min(\dx,\vF)$. These OM-representations carry on essential data
about the corresponding extensions of local fields and give useful
tools for various local and global arithmetic tasks (see
e.g. \cite{GuMoNa11}).

\paragraph{Valuations of discriminants and resultants} As a
straightforward application of fast OM-factorisation, we may use
\cite{Na14} to compute $\vF=\val_p(\disc[]{F})$ with $\Oe(\dx\vF)$
operations in $\Fi_p$ if $p>\dx$ or $\Oe(\dx \vF+\vF^2)$ otherwise,
improving \cite[Theorem 2.5]{Na14} respectively by a factor $\vF$ or
$\min(\dx,\vF)$. If $G,H\in \Zi[x]$ are coprime of degrees at most
$\dx$, we compute $\delta=\val_p(\res[](G,H))$ within the same bound,
improving now \cite[Theorem 3.3]{Na14} by a factor $\delta$ or
$\min(\dx,\vF)$. As mentioned in Section \ref{ssec:residualFacto}, we
may rely on dynamic evaluation for this task: only square-free
residual factorisation is required and the algorithm is deterministic.

\paragraph{Local integral basis} Combined with Bauch's algorithm
\cite{Ba16}, our results allow to compute a $\Zi_{(p)}$-basis of the
integral closure of $\Zi_{(p)}$ in $\Ki$ with $\Oe(\dx^2\vF)$
operations in $\Fi_p$ if $p>\dx$ or $\Oe(\dx^2 \vF+\vF^2)$ otherwise,
improving $\Oe(\dx^{2}\vF+\dx\vF^{2})$ of \cite[Lemma
3.10]{Ba16}. This impacts the computation of a global integral basis
of $\Ki/\Qi$, obtained from local ones via Hermite normal forms and
Chinese remaindering.

\paragraph{Function fields} All complexity results above extend
trivially to function fields satisfying Assumption
\ref{hyp:char>d}. In this context, we may use moreover the
Riemann-Hurwitz formula to compute the genus of a degree $\dx$ plane
curve over $\Fi$ within $\Oe(\dx^3)$ operations in $\Fi$ (following a
similar strategy than in \cite{PoWe17}, but using the easier to
implement algorithm \fastfacto{}). Here again, only square-free
residual factorisation is required and the algorithm is deterministic.

%\paragraph{Computing roots of polynomials} It's straightforward to
%adapt our algorithm in order to compute only the degree one
%polynomials in $\Ai[x]$ dividing $F$ mod $\pi^n$, getting in such a
%way all $\theta\in \Ai$ with precision $n$ such that
%$\val(F(\theta))\ge n$. If $\Ai=\Zi_p$, we use $\Ot(\dx n)$ operations
%in $\Fi_p$, extending results of \cite{NeRoSc17} which considers rings
%of power series. Notice that in this context, no residue field
%extension appears and we compute only degree one factors over $\Fi$,
%so the algorithm is purely deterministic. Moreover, as in
%\cite{NeRoSc17}, there is no need to assume $F$ separable and multiple
%roots are allowed.

\paragraph{More applications} There are several other computational
consequences for global fields, as computing the valuation at a prime
ideal, factoring fractional ideals or Chinese remaindering
\cite{Na11}, factoring bivariate polynomials \cite{We16}, computing
roots of polynomials \cite{NeRoSc17}, or computing Riemann-Roch spaces
\cite{He02}, but going into these details is beyond the scope of this
paper.

%%% Local Variables:
%%% mode: latex
%%% TeX-master: "issac"
%%% End:

\bibliographystyle{abbrv} {\bibliography{tout}}
\addcontentsline{toc}{section}{References.}

\end{document}